\newcommand{\PreserveBackslash}[1]{\let\temp=\\#1\let\\=\temp}
\newcolumntype{C}[1]{>{\PreserveBackslash\centering}p{#1}}
\newcommand{\pval}{p_\mathrm{val}}
\newcommand{\ols}{^{\mathrm{ols}}}
\newcommand{\multcust}[2]{\the\numexpr#1*#2\relax}
\newcommand{\stardate}{May 21st, 2020}
\newlist{steps}{enumerate}{1}
\setlist[steps, 1]{label = Step \arabic*:}
\newcommand{\mosaicpermalgnum}{1}
\numberwithin{equation}{section}
\title{The mosaic permutation test: an exact and nonparametric goodness-of-fit test for factor models}
\author{Asher Spector \thanks{Department of Statistics, Stanford University} \and Rina Foygel Barber \thanks{Department of Statistics, University of Chicago} \and Trevor Hastie \footnotemark[1] \and Ronald N. Kahn \thanks{BlackRock, Systematic	
Investment	Research} \and Emmanuel  Cand{è}s \footnotemark[1] \thanks{Department of Mathematics, Stanford University}
}
\begin{document}
\maketitle

\begin{abstract}
Financial firms often rely on fundamental factor models to explain correlations among asset returns and manage risk. Yet after major events, e.g., COVID-19, analysts may reassess whether existing risk models continue to fit well: specifically, after accounting for a set of known factor exposures, are the residuals of the asset returns independent? With this motivation, we introduce the mosaic permutation test, a nonparametric goodness-of-fit test for preexisting factor models. Our method can leverage modern machine learning techniques to detect model violations while provably controlling the false positive rate, i.e., the probability of rejecting a well-fitting model, without making asymptotic approximations or parametric assumptions. This property helps prevent analysts from unnecessarily rebuilding accurate models, which can waste resources and increase risk. To illustrate our methodology, we apply the mosaic permutation test to the BlackRock Fundamental Equity Risk (BFRE) model. Although the BFRE model generally explains the most significant correlations among assets, we find evidence of unexplained correlations among certain real estate stocks, and we show that adding new factors improves model fit. We implement our methods in the python package \texttt{mosaicperm}.
\end{abstract}

\section{Introduction}\label{sec::intro}

\subsection{Motivation and problem statement}\label{subsec::motivation}

Fundamental factor models are among the most common statistical tools used to manage risk in economics and finance \citep{rosenberg1974, kahn1994multiplefactor, dacheng2022annualreview}. Indeed, analysts routinely use these models to model the correlations among asset returns, allowing one to estimate the probability that many assets in a portfolio will simultaneously lose value.
Yet as conditions change, analysts must assess whether established models remain reliable. As an illustrative example, this paper analyzes the BlackRock Fundamental Equity Risk (BFRE) model, one of many commercially available risk models used in industry. In particular, Section \ref{sec::realdata} asks whether the BFRE model adequately explains correlations among US stock returns three months after the COVID-19 pandemic began. Correctly answering such questions is essential: on the one hand, needlessly rebuilding an established model may waste resources and ultimately increase risk, but on the other hand, it is important to quickly detect inadequacies in existing models.

This article develops statistical methods to test the goodness-of-fit of existing fundamental factor models. Formally, at times $t=1,\dots,T$, suppose we observe returns $Y_t \in \R^p$ for $p$ assets which we believe follow the model
\begin{equation}\label{eq::factormodel}
    Y_t = L_t X_t + \epsilon_t,
\end{equation}
where $L_t$, $X_t$ and $\epsilon_t$ are defined below:
\begin{itemize}[itemsep=0.5pt, topsep=1pt, leftmargin=*]
    \item $X_t \in \R^k$ denotes the returns of $k \ll p$ underlying factors which drive correlation among the assets. 
    We assume the factor returns $X_t$ are not observed.

    \item $L_t \in \R^{p \times k}$ are factor ``loadings" or exposures, i.e., $[L_t]_{j\ell}$ measures the exposure of the $j$th asset to the $\ell$th factor at time $t$. We treat $L_t$ as a deterministic matrix that is known at time $t$ (see below).

    \item $\epsilon_t \in \R^p$ denotes the idiosyncratic returns of the $p$ assets which cannot be explained by the factors. We also refer to $\epsilon_t$ as the ``residuals." 
\end{itemize}

This paper analyzes \textit{fundamental risk models} like the BFRE model, where the exposures $L_t$ are based on market fundamentals such as industry membership and accounting data. For example, $[L_t]_{j\ell} \in \{0,1\}$ might indicate whether stock $j$ is in the $\ell$th industry. Unlike some factor models commonly used in, e.g., academic finance or psychology, this means that the exposures $L_t$ are known at time $t$, although the factor returns $X_t$ are not observed and must be estimated, typically using cross-sectional regressions. Although this framework is perhaps less common in the academic literature, it is widely used in industry. Indeed, a recent review \citep{dacheng2022annualreview} noted that fundamental factor models provide ``arguably the most prevalent [factor model] framework for practitioners."

Naturally, other risk models exist, including (i) \textit{macrofactor risk models}, where $X_t$ denotes observed macroeconomic data and $L_t$ is unknown, and (ii) \textit{purely statistical models}, where both $X_t$ and $L_t$ are estimated. Such models are beyond the scope of this paper.\footnote{It is possible to extend the methods in this paper to the case where $X_t$ is observed and $L_t$ is not. However, it requires rather different statistical techniques, so we defer this extension to a companion paper \citep{companionfrt2023}.}

To test if (\ref{eq::factormodel}) accurately models correlations among assets, let $\epsilon_{\cdot,j} \defeq (\epsilon_{1,j}, \dots, \epsilon_{T,j}) \in \R^T$ denote the time series of residuals for the $j$th asset. We will test the null that the residual processes are independent across assets:
\begin{equation}\label{eq::nullhypothesis}
    \mcH_0 : \epsilon_{\cdot,1}, \epsilon_{\cdot,2}, \dots, \epsilon_{\cdot,p} \in \R^T \text{ are jointly independent.} 
\end{equation}

As stated, $\mcH_0$ allows the residuals to be autocorrelated, nonstationary, and heterogeneous. Indeed, $\mcH_0$ allows temporal dependence among the residuals of the $j$th asset, although it requires all residuals of the $j$th asset to be independent of all other residuals. And so far, $\mcH_0$ does not require any stationarity conditions, allowing the volatility of each asset's residuals to change over time (although we will assume a ``local exchangeability" condition in Section \ref{subsec::defaulttiling} which restricts the nonstationarity pattern). Lastly, $\mcH_0$ allows heterogeneity in the sense that the joint law of the residuals for asset $j$ may be arbitrarily different from the corresponding law for asset $j' \ne j$. Since autocorrelation, nonstationarity, and heterogeneity are important features of real financial data, our work attempts to avoid making unrealistic assumptions about these phenomena. 

We emphasize that we seek to test whether $\mcH_0$ holds for a \textit{fixed} choice of exposures $L_t$. In contrast, many previous works test whether $\mcH_0$ holds for some \textit{unknown} choice of $L_t \in \R^{p \times k}$ or estimate the number of factors $k$ (see Section \ref{subsec::literature}). These latter problems have other applications, but they do not accomplish our goal, which is to test if a pre-existing risk model continues to fit the data. Indeed, financial firms routinely publish exposure matrices $L_t$ for commercial risk models, including MSCI Barra models and the BFRE model \citep{rosenberg1976barra, bender2012barra}.\footnote{Naturally, our methods also apply if one selects exposures using historical data and tests the selected model's goodness-of-fit on fresh data.} Since the exposures are typically constructed using observable quantities such as industry membership, it makes sense to treat them as a known quantity instead of an imprecise estimate. Yet given exposures $\{L_t\}_{t=1}^{T}$ from one of these models, how should financial firms interpret evidence against the null and quantify uncertainty? This is the type of question our work aims to answer---please see Section \ref{subsec::bfre_motivation} for a concrete motivating example.

We argue that a good test of $\mcH_0$ should rigorously control false positives, i.e., it should reject $\mcH_0$ with probability at most $\alpha$ whenever $\mcH_0$ actually holds. This is important for several reasons. First, it is important for risk management: in times of volatility, needlessly doubting a well-fitting risk model could be just as harmful as relying on a misspecified one. Second, large financial firms may constantly stress-test their risk models. Without rigorous false positive control, they may discard and rebuild many well-fitting models for no reason, consuming a great deal of resources and possibly reducing model quality in the long run. Lastly, rigorous hypothesis tests may be helpful during the process of constructing the exposures $L_t$, because they quantify evidence against different candidate models. 

These arguments are not new, and the problem of testing the goodness-of-fit of a factor model dates back to the origins of the field of statistics \citep{spearman1904,roy1953, bartlett1954chi2,boxanderson1955, lawley1956,horn1965}. These seminal works established that if (i) the idiosyncratic returns $\epsilon_t$ are Gaussian or (ii) the number of assets $p$ is held constant as the number of timepoints $T$ diverges, one can perform (asymptotically) valid hypothesis tests using the generalized likelihood ratio (GLR) test (see \cite{anderson2009}). Alternatively, other classical approaches apply the bootstrap \citep{efron1979bootstrap} or the block bootstrap \citep[e.g.,][]{kunsch1989block_bootstrap, romano2006block_bootstrap}.

However, as noted by the modern literature (see Section \ref{subsec::literature}), these classical techniques may not be suited to modern datasets. First, modern applications are typically high-dimensional, meaning that the number of assets $p$ is comparable to or much larger than $T$. 
Indeed, to quickly detect violations of $\mcH_0$, we might analyze datasets with $p \ge 2000$ assets and $T \approx 50$ datapoints, or subsectors with $p \approx 150$ assets and $T \approx 50$ observations. In these settings, classical theory for likelihood ratios and bootstrap methods will generally be inaccurate \citep[e.g.,][]{bai2003, elk2018bootstrap, sur2019pnas}. Even high-dimensional asymptotics \citep[e.g.,][]{bai2003, bai2016review} may be inaccurate when $T$ is small and the ratio $k/p$ is not negligible (see Section \ref{subsec::literature}). Second, parametric assumptions are not appropriate, since real data may exhibit features that are not captured by the model, such as heavy tails or heteroskedasticity. Lastly, even if we could apply classical likelihood-based theory, we might prefer to use regularization or other machine learning techniques to increase power. Thus, in this paper, we ask: can we develop finite-sample valid tests of $\mcH_0$ under no parametric assumptions? Additionally, can we leverage prior information and machine learning techniques to increase power while controlling false positives?

\subsection{A motivating application to the BlackRock Fundamental Equity Risk model}\label{subsec::bfre_motivation}

As a motivating example, we analyze the BlackRock Fundamental Equity Risk (BFRE) model, a factor model that publishes weekly exposure matrices $L_t$ for $p \ge 2000$ US stocks and $k \approx 65$ factors. We ask: three months after the COVID-19 pandemic began, does the BFRE model adequately explain the correlations among US stock returns?

One reasonable way to answer this question might be to track, for each asset, the maximum absolute correlation between its residuals and those of another asset. Formally, we:

\begin{enumerate}[itemsep=0.1pt, topsep=0pt, leftmargin=*]
\item Run cross-sectional ordinary least squares (OLS) regressions to estimate the residuals. I.e., for the standard OLS projection matrix $H_t\ols \defeq I_p - L_t (L_t^\top L_t)^{-1} L_t^\top$, define:
\begin{equation}\label{eq::hateps_ols_def}
\hat \epsilon_t\ols \defeq H_t\ols Y_t = H_t\ols (L_t X_t + \epsilon_t) = H_t\ols \epsilon_t.
\end{equation}
\item Compute the empirical correlation matrix $\hat C \in \R^{p \times p}$ of the last $350$ estimated residuals $\{\hat \epsilon_s\ols\}_{s=t-350}^{t}.$ (The choice of window size is somewhat arbitrary, but Appendix \ref{appendix::windowsize} obtains similar results using many different window sizes.)

\item For each asset $j \in [p]$, define $\hmc_j \defeq  \max_{j' \ne j} |\hat C_{j, j'}|$ as the maximum estimated absolute correlation between asset $j$ and another asset.
\item Finally, let $S_t\ols = \frac{1}{p} \sum_{j=1}^p \hmc_j$ denote the \textit{mean maximum (absolute) correlation} (MMC) over all assets at time $t$, which we use as an aggregate measure of model fit. Figure \ref{fig::motivation} plots $S_t\ols$ biweekly for three sectors using data from 2017 through 2023.
\end{enumerate}

\begin{figure}[!h]
    \includegraphics[width=\linewidth]{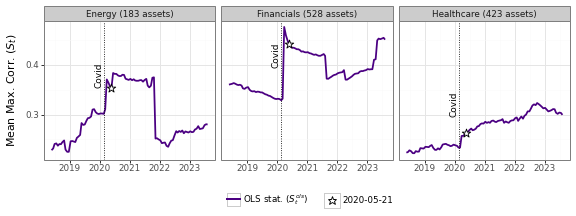}
    \caption{For three industries, this figure plots biweekly values of $S_t\ols = \frac{1}{p} \sum_{j=1}^p \hmc_j$, where $\hmc_j = \max_{j' \ne j} |\hat C_{j,j'}|$ and $\hat C$ is the empirical correlation matrix of the estimated idiosyncratic returns $\{\hat \epsilon_s\ols\}_{s-350}^t$ from the last $350$ days. Interpreting this plot is challenging because it is not obvious what one should expect to see, even when the factor model fits perfectly. In general, $S_t\ols$ is neither mean-zero nor stationary under $\mcH_0$. Indeed, the large jumps in February 2020 coincide with large increases in the variance of $\epsilon_t$.}
    \label{fig::motivation}
\end{figure}

Producing plots like Figure \ref{fig::motivation} is easy---however, \textit{interpreting} these plots is hard, because it is not obvious what types of fluctuations we would see under the null. For example, even when $\epsilon_t$ has independent components, $\cov(\hat\epsilon_t\ols) = H_t\ols \cov(\epsilon_t) H_t\ols$ is not diagonal. Furthermore, the law of $S_t\ols$ should change over time, because (a) the projection matrices $H_t\ols$ change over time due to the changing exposures and (b) the law of the residuals drifts over time---e.g., after COVID, the residuals become more heavy-tailed.

These facts make it hard to interpret Figure \ref{fig::motivation}. Certainly, in all three sectors, the statistic $S_t$ jumps after February 2020---but are these jumps consistent with the model? 
For example, on \stardate, we observe absolute correlations of $0.35, 0.44$, and $0.26$ in the energy, financial, and healthcare sectors. Should BlackRock devote time and resources to improve the model based on these plots? Our work aims to provide statistical tools to answer these questions, not only for this particular test statistic but for many goodness-of-fit measures. 

\subsection{Contribution}

Our paper introduces an exact and nonparametric permutation test of $\mcH_0$. The key idea is to introduce a new estimator $\hat\epsilon \in \R^{T \times p}$ of the residuals that exactly preserves some of the independence properties of the true residuals $\epsilon$. To construct $\hat\epsilon$, we split the data matrix $\bY$ into rectangular tiles (along both axes) and separately estimate the residuals in each tile, yielding a residual estimate $\hat \epsilon \in \R^{T \times p}$. Given a test statistic $S(\hat \epsilon)$ quantifying the correlations among the columns of $\hat \epsilon$, we can compute a significance threshold for $S(\hat \epsilon)$ by permuting the observations in each tile of $\hat \epsilon$. The idea is illustrated in Figure \ref{fig::intro_figure}. We call this a ``mosaic permutation test" because the separation of the data into tiles is reminiscent of a mosaic.

\begin{figure}[!h]
    \centering
    \includegraphics[width=\linewidth]{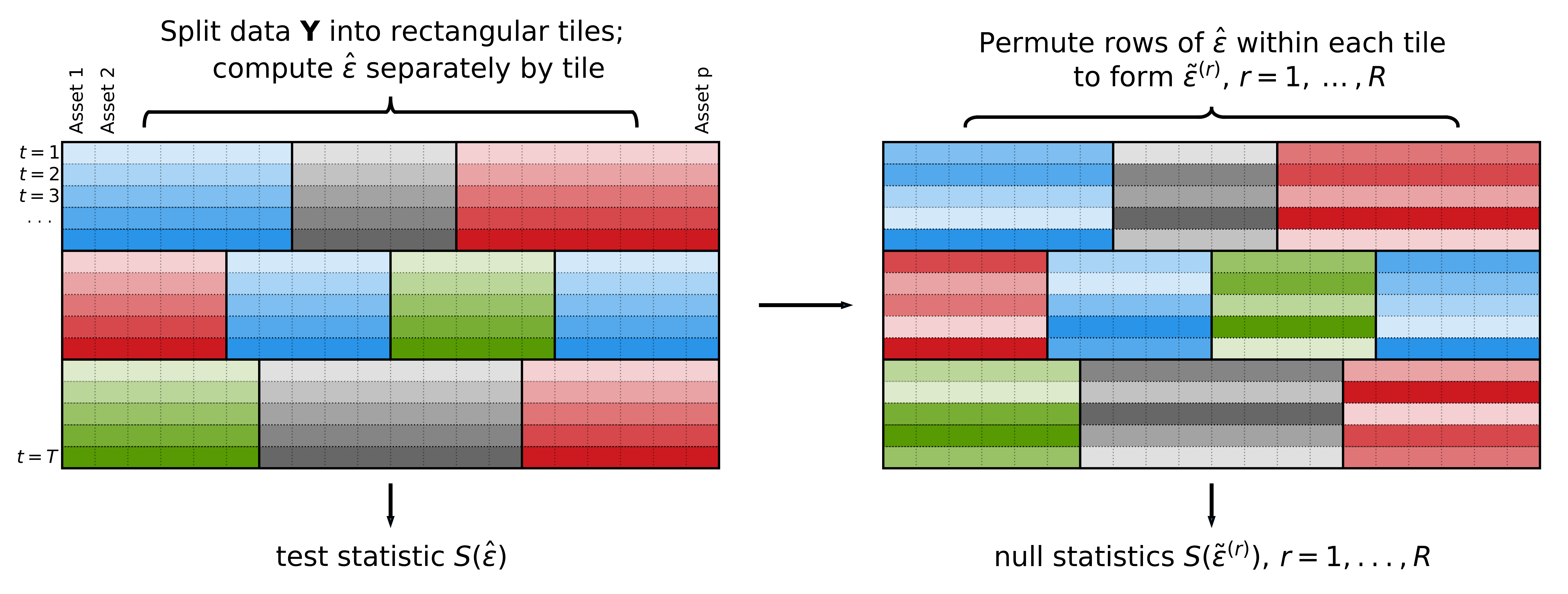}
    \caption{This figure summarizes the main methodology of the mosaic permutation test. Above, rows represent different observations $t=1,\dots,T$, columns represent different assets, and in the second matrix, the shadings are permuted within each rectangle to illustrate the permutations within tiles.}
    \label{fig::intro_figure}
\end{figure}

We now highlight a few key properties of the mosaic permutation test.

\textbf{1. Exact and nonparametric false positive control.} The test yields an exact p-value in finite samples under only the assumption that the residuals for each asset are locally exchangeable (defined in Section \ref{sec::methodology}). In particular, we make no assumptions about the marginal distributions of $\{\epsilon_t\}_{t=1}^{T}$ and $\{X_t\}_{t=1}^{T}$, allowing them to be arbitrarily heavy-tailed. Furthermore, to allow for changing market conditions, our results allow the factor returns $\{X_t\}_{t=1}^{T}$ to be arbitrarily non-stationary and the residuals $\{\epsilon_t\}_{t=1}^T$ to be non-stationary across tiles. For instance, in our empirical application, we require that the residuals are stationary within each week, but their distributions can change arbitrarily between weeks.

To illustrate this contribution, we conduct semisynthetic simulations using the exposure matrix $L_t$ from the BFRE model for energy stocks on \stardate. For simplicity, we generate data $\bY$ from Eq. \ref{eq::factormodel} after sampling the residuals and factor returns as i.i.d. standard Gaussians, with $T=350$ observations. We use the test statistic from Section \ref{subsec::bfre_motivation}. Figure \ref{fig::bsbad} shows that in this simple Gaussian setting, naive bootstrap and permutation testing methods yield essentially a $100\%$ false positive rate (we review intuition for this result in Section \ref{sec::failure}). 
In contrast, the mosaic permutation test has provable validity in finite samples.

\begin{figure}
    \includegraphics[width=\linewidth]{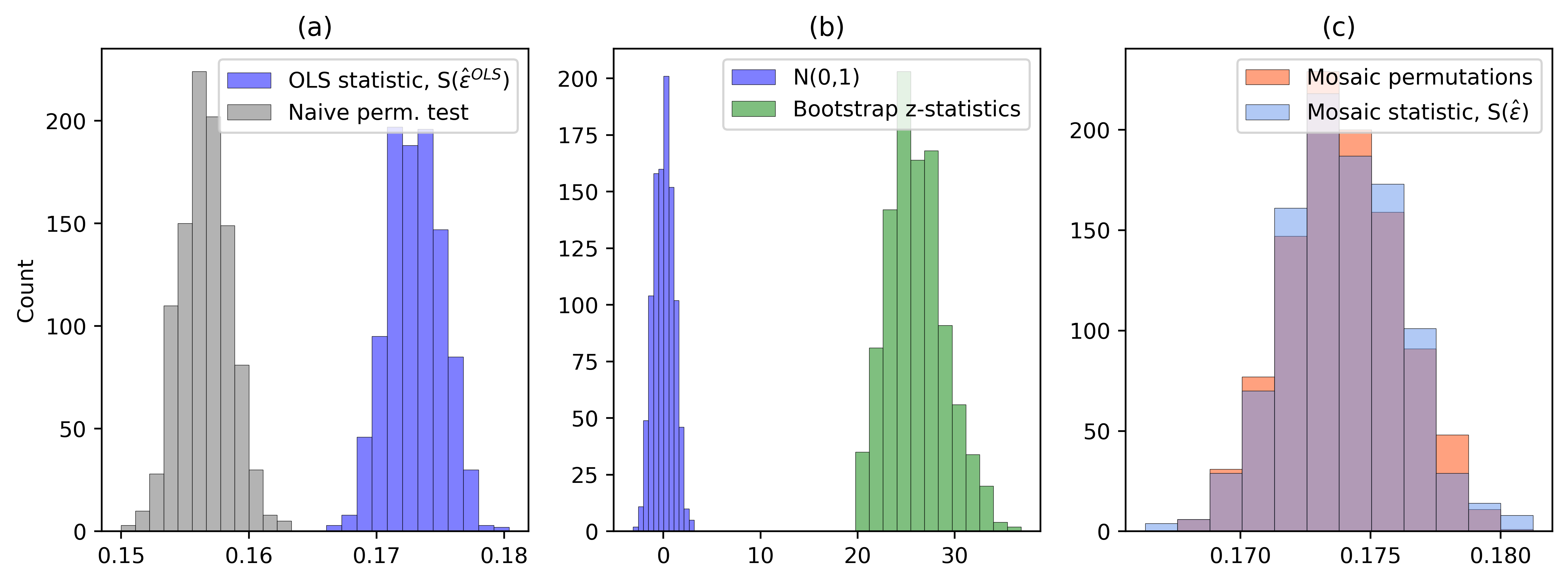}
    \caption{Semisynthetic simulation with $X_{tk}, \epsilon_{tj} \iid \mcN(0,1)$, and the exposures are taken from the BFRE model for the energy sector on \stardate. Note $T=350$, $p=183$, $k=18$, and we use the test statistic from Figure \ref{fig::motivation}. Panel 2(a) shows that a naive residual permutation test (discussed in Section \ref{subsec::naive_perm_test}) inaccurately simulates the null distribution of the test statistic $S(\hat\epsilon\ols)$---in fact, the true null distribution and the estimated one do not overlap. Panel 2(b) shows that naive bootstrap Z-statistics (discussed in Section \ref{subsec::naive_bootstrap}) are not approximately mean zero, nor do they have unit variance. All p-values based on these two naive methods are numerically indistinguishable from zero, leading to an empirical false positive rate of $100\%$. In contrast, the mosaic permutation test uses a different ``mosaic" estimator of the residuals $\hat\epsilon$. Using $\hat\epsilon$ in place of $\hat\epsilon\ols$ allows us to use a permutation method to accurately simulate the law of $S(\hat\epsilon)$ under the null, as shown in Panel 2(c)---see Section \ref{sec::methodology} for details.}\label{fig::bsbad}
\end{figure}

\textbf{2. Power and flexibility.} Our method permits the use of a wide set of test statistics while retaining provable false positive control. For example, it allows analysts to use regularized estimates of the covariance matrix of the residuals $\epsilon$, for example, via a graphical lasso \citep{glasso2007}, and it also permits the use of cross-validation to choose the regularization strength. The only restriction is that the test statistic must be a function of the mosaic estimator $\hat\epsilon$ of the residuals instead of a function of (e.g.) a conventional OLS estimator $\hat\epsilon\ols$. This is the price our method pays to quantify uncertainty.

That said, in semisynthetic simulations based on US stock data, we find that the mosaic test does not lose much power compared to an ``oracle" test based on $\hat\epsilon\ols$ (see Figure \ref{fig::mainsims}).

\textbf{3. Interpretable test statistics.} Our primary methodological contribution is to develop finite-sample tests of $\mcH_0$. However, once one has rejected $\mcH_0$, it may be of interest to learn how to \textit{improve} the factor model. With this motivation, our empirical analysis (Section \ref{subsec::improvement}) introduces practical test statistics which (i) adaptively estimate the sparsity of any missing factor exposures and (ii) are designed to diagnose when an estimate of missing factor exposures truly improves the model fit. That said, this is a secondary focus of our work---our primary focus is to detect model violations.

\textbf{Empirical application}: To illustrate our method, we test the goodness-of-fit of the BlackRock Fundamental Equity Risk (BFRE) model, from 2017 through 2023. Our analysis distinguishes between \textit{persistent} factors, which have explanatory power over long time periods, and \textit{transient} (non-persistent) factors. (Commercial models typically focus on including persistent factors but not necessarily transient ones.) We report three overall findings:
\begin{enumerate}[noitemsep, topsep=0pt, leftmargin=*]
    \item The BFRE model appears to explain the most significant correlations among asset returns, and in most sectors, we are unable to substantially improve the model fit.
    \item However, we find evidence of statistically significant unexplained correlations among (i) real estate stocks and (ii) healthcare stocks post-COVID. The correlations among healthcare stocks appear to be \textit{transient}, as incorporating them does not consistently improve the model. However, we show that adding a factor to account for extra correlations among real estate stocks \textit{persistently} improves the model fit.
    \item In contrast, removing existing factors from the model leads to much stronger evidence against the null. 
\end{enumerate}

We present our findings in detail in Section \ref{sec::realdata}. However, for illustration, Figure \ref{fig::motivation_soln} shows the results after applying the mosaic permutation test to the analysis in Figure \ref{fig::motivation}. As of \stardate, we do not find statistically significant evidence against $\mcH_0$ in the energy sector, but we do find evidence of unexplained correlations in the financial and healthcare sectors. This result is not obvious before one tests for statistical significance, as Figure \ref{fig::motivation_soln} confirms that the significance threshold for $S_t$ is not constant over time. 

\begin{figure}[!h]
    \includegraphics[width=\linewidth]{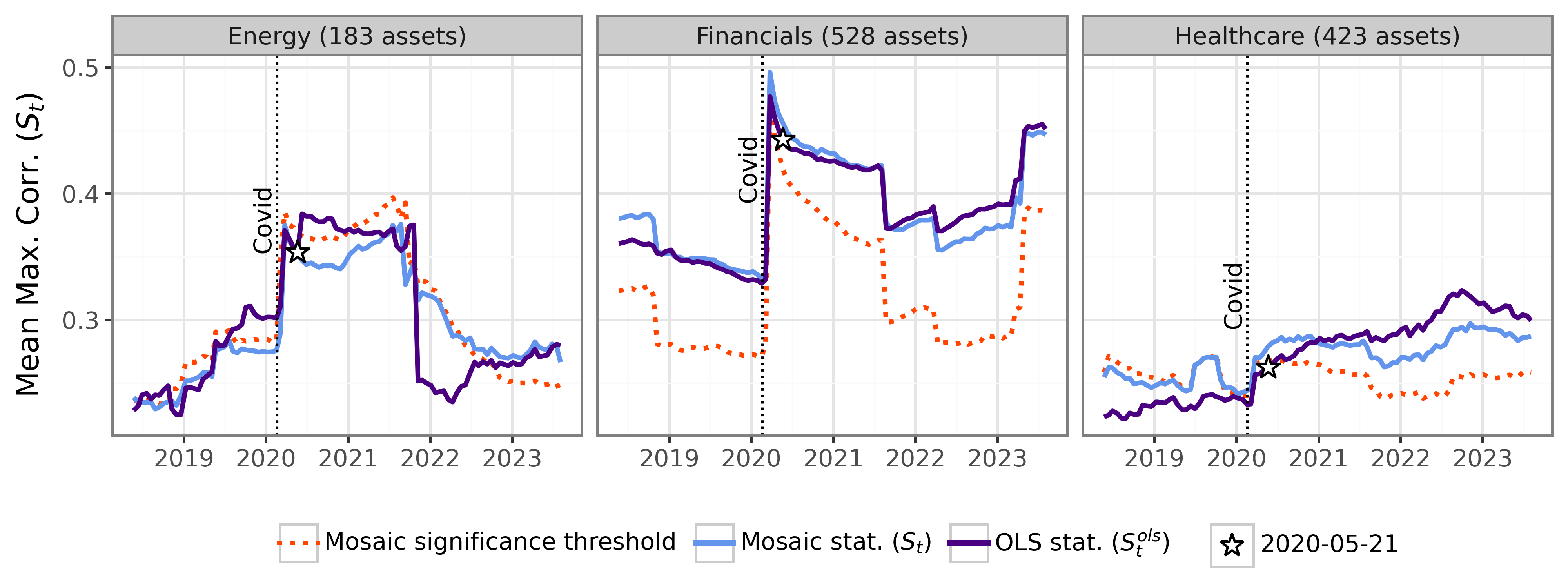}
    \caption{This plot is identical to Figure \ref{fig::motivation} with two additions. First, we compute a mosaic version $(S_t)$ of the OLS test statistic $(S_t\ols)$ by replacing the OLS residual estimates $\hat\epsilon_t\ols$ with mosaic estimates $\hat\epsilon_t$---see Sections \ref{sec::methodology}-\ref{sec::realdata} for details. Second, the dotted orange line displays the mosaic significance threshold, so $S_t$ is statistically significant at time $t$ if the blue line lies above the orange line. To better visualize the correlation between $S_t\ols$ and $S_t$, in this plot only, we shift the test statistic $S_t$ up by a small constant shift $(0.06, 0.05$ and $0.04$ in Energy, Financials, and Healthcare, respectively) so that average plotted values of $S_t$ and $S_t\ols$ are equal. This constant shift provably does not affect the mosaic p-value since the significance threshold shifts upwards by the same amount.
}
    \label{fig::motivation_soln}
\end{figure}

\subsection{Related literature}\label{subsec::literature}

Beyond the classical methods mentioned in Section \ref{subsec::motivation}, our work contributes to a wide statistical literature on inference for factor models, including inference in the high-dimensional setting where $p$ grows with $T$ \citep[e.g.,][]{bai2003, onatski2012, bai2016review},
inference on the number of factors $k$ \citep[e.g.,][]{onatski2009econometrica, onatski2010, alessi2010nfactors, owen2016bicross, dobriban2018deterministic_pa, dobriban2020_pa_theory}, tests for changepoints in the factor loadings \citep[e.g.,][]{breitung2011changepoint, bai2022likelihood_changepoint}, tests for whether observed proxies of the factor returns $X_t$ are good proxies \citep{bai2006evalproxies}, bootstrap methods to debias OLS estimates of the exposures \citep{goncalves2020factorbootstrap}, covariance matrix estimation \citep[e.g.,][]{fan2008covest, fan2011_cov_est_apprx_factor}, estimation and inference for dynamic factor models \citep[e.g.,][]{bates2013dynamic_factor_models, stock_watson_2016_review} and more---see \cite{bai2016review} for a review. Many of these techniques leverage key results from random matrix theory \citep{johnstone2001pca, paul2007, silversteinbai2010book}, sometimes in combination with permutation-based methods \citep{buja1992pa_permute}. 

Factor models are also widely discussed in the asset pricing literature (see \cite{dacheng2022annualreview} for review), including methods for explaining excess returns \citep{fama_french_1992, fama_french_2008, welch_goyal_2007, lewellen2015, freyberger_nonparametric_asset_pricing2020, gu2020_ml4assetpricing}, estimating factors and exposures \citep{chamberlain_rothschild_1983, connor_korajczyk_1986, kelly2019_instrumented_pca, lettau2020_theory}, estimating notions of factor strength \citep{bailey2021factorstrength}, estimating risk premia and stochastic discount factors \citep{pesaran2019riskpremia, kozak2020shrink_cross_section,  giglio2021testassets, anatolyev2022}, testing asset pricing models \citep{gibbons1989, campbell2006, fan2015_sparsepowerenhancement, pukthuanthong2018, jegadeesh2019, raponi2019largeN_fixedT}, comparing asset pricing models \citep{hansen1997_HJstatistic, kan2008, kan2013, harvey2015multipletesting}, and more. We note that this paper focuses on testing \textit{risk} models as opposed to asset-pricing models (see below); thus our methodology and empirical results are complementary to those in this literature.

Indeed, our work differs from the existing literature in three respects.
\begin{itemize}[leftmargin=*, itemsep=0.5pt, topsep=0pt]
    \item First, we solve a different problem. Motivated by real financial applications, we seek to test the goodness-of-fit of a risk model with known exposures $\{L_t\}_{t \in [T]}$. In contrast, most works in the statistical literature treat the exposures as unknown nuisance parameters, and we are not aware of existing works that explicitly test $\mcH_0$ for a known choice of exposures. 
    Furthermore, the asset pricing literature focuses on testing the economic assumptions underlying asset pricing models \citep[e.g.,][]{gibbons1989, jegadeesh2019} or testing whether a given factor partially explains the cross-section of expected asset returns \citep[e.g.,][]{harvey2015multipletesting}. In contrast, motivated by our focus on risk models, we test whether the residuals $\epsilon_t$ have independent components.
        
    \item  Second, even for the problems they solve, existing works only provide asymptotic control of the false positive rate. Such results typically require technical assumptions controlling the heavy-tailedness of the data, the stationarity of the factors and residuals, the strength of the factors, and the rates at which $T$ and $p$ diverge (see, e.g., \cite{bai2016review, raponi2019largeN_fixedT} for longer reviews). It is not clear that these methods can satisfactorily control false positives (i) if the technical assumptions are violated or (ii) if $T$ and/or $p$ are small (e.g., if $T \le 50$ and the number of factors $k$ is not negligible compared to the number of assets $p$). In contrast, our method exactly controls false positives in finite samples under a local exchangeability condition (see Section \ref{sec::methodology}).
    
    \item  Lastly, the vast majority of existing works require the analyst to use specific test statistics when performing hypothesis tests, such as likelihood ratios or the eigenvalues of the empirical covariance matrix \citep[e.g.,][]{bai2003, onatski2009econometrica, alessi2010nfactors, breitung2011changepoint, bai2022likelihood_changepoint}. In contrast, our work allows the use of regularization and black-box machine learning techniques to quantify evidence against the null.
\end{itemize}

We note also that a recent literature has established the asymptotic consistency of various bootstrap methods when approximating the law of maxima of sums of high-dimensional random vectors \cite[e.g.,][]{chern2013mbs, chern2017cltbs, deng2020hdbs, lopes2020hdbs, chern2022improvedclt, cherno2023hdbsreview}. Yet in our setting, we only observe estimated residuals $\{\hat\epsilon_t\}_{t=1}^T$ which may have a significantly different covariance matrix than the true residuals---indeed, if the number of factors $k$ is not negligible compared to the number of assets $p$, this difference is nonvanishing. This introduces additional bias, and it is not clear how to use the bootstrap to correct this (see Section \ref{subsec::naive_bootstrap}). Furthermore, our work provides methods to perform inference based on a wide variety of test statistics beyond the maxima of sums.

Finally, our work contributes to a growing literature on exact finite-sample permutation tests for linear models \citep{lei2020cpt, wen2022ResidualPT, dhaultfoeuille2022ARP, guan2023PALMRT}. However, these tests are not designed to apply to factor models, and they would only apply if all idiosyncratic returns have the same distribution, which is not realistic, since (e.g.) the variance of the idiosyncratic returns typically varies substantially across assets. In contrast, our theory allows the distribution of the idiosyncratic returns to vary arbitrarily across assets.

\subsection{Notation}

For $n \in \N$, define $[n] \defeq \{1,\dots,n\}$. For any $A \in \R^{n_1 \times n_2}$, $A_{i} \in \R^{n_2}$ and $A_{\cdot,j} \in \R^{n_1}$ denote the $i$th row and $j$th column of $A$, respectively. For subsets $I \subset [n_1], J \subset [n_2]$, $A_I \in \R^{|I| \times n_2}$ denotes the submatrix formed by the rows in $I$, $A_{\cdot,J} \in \R^{n_1 \times |J|}$ denotes the submatrix formed by the columns in $J$, and $A_{I,J} \in \R^{|I| \times |J|}$ denotes the submatrix formed by the rows in $I$ and the columns in $J$. $\epsilon \defeq \begin{bmatrix} \epsilon_1 & \dots & \epsilon_T \end{bmatrix}^{\top} \in \R^{T \times p}$
denotes the matrix of residuals and $\bY \in \R^{T \times p}$ denotes the returns. Thus, $\epsilon_t \in \R^p$ denotes the vector of all $p$ assets' residuals at time $t$, whereas $\epsilon_{\cdot,j} \in \R^{T}$ denotes the time series of residuals for asset $j$. We let $\hat\epsilon\ols \in \R^{T \times p}$ denote estimates of the residuals formed using cross-sectional OLS (see Eq. \ref{eq::hateps_ols_def}). $\hat\epsilon \in \R^{T \times p}$ denotes the mosaic estimates of the residuals, introduced in Section \ref{sec::methodology}.

\section{Performance of default bootstrap and permutation methods}\label{sec::failure}

To review from Section \ref{sec::intro}, the problem statement is to test the following factor model:
\begin{align}
    Y_t = L_t X_t + \epsilon_t & \text{ for } t = 1, \dots, T,
\end{align}
for outcomes $Y_t \in \R^p$, fixed and known exposures $L_t \in \R^{p \times k}$, unobserved factor returns $X_t \in \R^k$ and residuals $\epsilon_t \in \R^p$. We seek to test the null hypothesis $\mcH_0$ that the time series of residuals for each asset are independent:
\begin{equation}
    \mcH_0 : \epsilon_{\cdot,1}, \epsilon_{\cdot,2}, \dots, \epsilon_{\cdot,p} \in \R^T \text{ are jointly independent.}
\end{equation}
Sections \ref{subsec::naive_perm_test}-\ref{subsec::naive_bootstrap} now explain why naive permutation and bootstrap tests can yield false positive rates of up to $100\%$, as in Figure \ref{fig::bsbad}. The main challenge is that the estimated OLS residuals $\hat\epsilon\ols$ do not satisfy the same independence properties as the true residuals.

\subsection{Naive residual permutation tests are invalid}\label{subsec::naive_perm_test}

For simplicity, we assume for this section that the vectors of residuals $\epsilon_1, \dots, \epsilon_T \iid P_{\epsilon}$ are i.i.d. This assumption plus $\mcH_0$ together imply that separately permuting the residuals of each asset does not change the joint law of all of the residuals:

\begin{equation}\label{eq::epspermv2}
\epsilon \defeq 
\begin{bmatrix}
    \epsilon_{1,1} & \epsilon_{1,2} & \dots & \epsilon_{1,p} \\
    \epsilon_{2,1} & \epsilon_{2,2} & \dots & \epsilon_{2,p} \\
    \epsilon_{3,1} & \epsilon_{3,2} & \dots & \epsilon_{3,p} \\
    \vdots & \vdots & \dots & \vdots \\
    \epsilon_{T,1} & \epsilon_{T,2} & \dots & \epsilon_{T,p}
\end{bmatrix}
\disteq 
\begin{bmatrix}
    \epsilon_{\cblue{\pi_1}(1),1} & \epsilon_{\cred{\pi_2}(1),2} & \dots & \epsilon_{\cgreen{\pi_p}(1),p} \\
    \epsilon_{\cblue{\pi_1}(2),1} & \epsilon_{\cred{\pi_2}(2),2} & \dots & \epsilon_{\cgreen{\pi_p}(2),p} \\
   \epsilon_{\cblue{\pi_1}(3),1} & \epsilon_{\cred{\pi_2}(3),2} & \dots & \epsilon_{\cgreen{\pi_p}(3),p} \\
    \vdots & \vdots & \dots & \vdots \\
   \epsilon_{\cblue{\pi_1}(T),1} & \epsilon_{\cred{\pi_2}(T),2} & \dots & \epsilon_{\cgreen{\pi_p}(T),p}
\end{bmatrix},
\end{equation}
where above, $\pi_1, \dots, \pi_p : [T] \to [T]$ are arbitrary permutations applied to the $p$ columns of $\epsilon$. Thus, if we observed $\epsilon$, we could easily design a permutation test of $\mcH_0$ as follows.
\begin{enumerate}[leftmargin=*, topsep=0.5pt, itemsep=0.5pt]
    \item Permute each of the columns of $\epsilon$ uniformly at random, and repeat this $R$ times, yielding permuted matrices $\epsilon^{(1)}, \dots, \epsilon^{(R)} \in \R^{T \times p}$.
    \item Let $S(\epsilon)$ be any test statistic, such as the maximum empirical correlation among the residuals. Compute a p-value by comparing the value of $S(\epsilon)$ to $S(\epsilon^{(1)}), \dots, S(\epsilon^{(R)})$:
    \begin{equation}\label{eq::epspvalv2}
        \pval \defeq \frac{1 + \sum_{r=1}^R \I(S(\epsilon) \le S(\epsilon^{(r)}))}{R+1},
    \end{equation}
    where Equation \ref{eq::epspermv2} guarantees that this is a finite-sample p-value testing $\mcH_0$.
\end{enumerate}

Although we do not observe the residuals $\epsilon$, a ``naive residual permutation test" would simply plug in the OLS estimate $\hat\epsilon\ols \in \R^{T \times p}$ in place of $\epsilon$ (see Eq. \ref{eq::hateps_ols_def}). Unfortunately, this strategy will not work: while the coordinates of $\epsilon_t$ are independent under $\mcH_0$, the coordinates of $\hat \epsilon\ols$ are not. Indeed, $\cov(\hat \epsilon_t\ols) = H_t\ols \cov(\epsilon_t) H_t\ols$ has a reduced rank of at most $p-k$. As a result, naively replacing $\epsilon$ with $\hat\epsilon\ols$ will violate Eq. \ref{eq::epspermv2}---even under the null, the columns of $\hat\epsilon\ols$ will look more correlated than the permuted version of $\hat\epsilon\ols$. Indeed, Figure \ref{fig::bsbad}(a) uses the real BFRE model exposures to show that this ``naive permutation test" may cause an unacceptably high false positive rate. 

\subsection{Naive bootstrap methods are invalid}\label{subsec::naive_bootstrap}

Another way to test $\mcH_0$ would be to use the nonparametric bootstrap to compute a Z-statistic based on $S(\hat\epsilon\ols)$. This strategy does not adjust the estimated residuals to force them to satisfy the null---rather, it reframes the hypothesis testing problem as an estimation problem. Namely, suppose that $S(\hat\epsilon\ols)$ is an estimate of some parameter $\theta$ which equals zero under $\mcH_0$. For example, in Section \ref{subsec::bfre_motivation}, $S_t\ols$ is a (biased) estimate of the true mean absolute maximum correlation of the residuals, which is zero under $\mcH_0$. We might hope to use the bootstrap to debias and standardize $S(\hat\epsilon)$. While there are many ways to apply the bootstrap, perhaps the simplest (and most naive) is as follows:

\begin{enumerate}[leftmargin=*, noitemsep, topsep=0pt]
    \item Resample $T$ rows from $\hat\epsilon\ols$ uniformly at random and with replacement.\footnote{Since the test statistic only depends on $\bY$ through $\hat\epsilon\ols$, the residual bootstrap is in this case identical to the pairs bootstrap, which resamples pairs of exposures and returns $\{(L_t, Y_t)\}_{t=1}^T$.}
    \item Repeat this $B$ times, yielding $B$ bootstrapped residual matrices $\hat\epsilon^{\mathrm{ols},(1)}, \dots, \hat\epsilon^{\mathrm{ols},(B)} \in \R^{T \times p}$.
    \item Compute a bootstrap bias estimate for $S(\hat\epsilon\ols)$ as well as a $Z$-statistic which is intended to have zero mean and unit variance under the null:
    \begin{equation}
        \widehat{\mathrm{Bias}} = \frac{1}{B} \sum_{b=1}^B S(\hat\epsilon^{\mathrm{ols},(b)}) - \theta_{\mathrm{BS}} \,\,\,\,\, \text{ and }  \,\,\,\,\, Z_{\mathrm{BS}} = \frac{S(\hat\epsilon\ols) - \widehat{\mathrm{Bias}}}{\sqrt{\widehat{\var}(\{S(\hat\epsilon^{\mathrm{ols},(b)})\}_{b=1}^B)}},
    \end{equation}
    where $\theta_{\mathrm{BS}}$ is the value of the parameter $\theta$ calculated for the bootstrap empirical distribution.
\end{enumerate}

Unfortunately, inference based on the procedure above can be highly misleading, as the bootstrap distribution for $\{\hat\epsilon_t\}_{t=1}^T$ may not accurately approximate the law of $\{\epsilon_t\}_{t=1}^T$, for at least two reasons. First, the law of $\hat\epsilon_t$ is not the same as the law of $\epsilon_t$, since under the null, $\cov(\epsilon_t)$ is diagonal but $\cov(\hat\epsilon_t)$ is not, and this difference is nonvanishing when $k$ is not negligible compared to $p$. Second, the estimated residuals $\hat\epsilon\ols_t \in \R^p$ are ``high-dimensional" vectors in the sense that the number of assets $p$ is usually not negligible compared to the number of observations $T$. Thus, the bootstrap distribution of $S(\hat\epsilon\ols)$ may not accurately approximate the true law of $S(\hat\epsilon\ols)$ \citep{bickelfreedman1983bootstrap, elk2018bootstrap}. Indeed, even in simpler contexts, such high-dimensional approximation results are only known for certain classes of test statistics \citep{cherno2023hdbsreview}. Thus, the bias and variance estimates can be highly inaccurate, since they are based on a bootstrap distribution that differs substantially from the true data-generating process. And despite recent work on high-dimensional bootstraps \citep[see][]{cherno2023hdbsreview}, we are not aware of existing bootstrap methods with inferential guarantees for our problem.

Empirically, in Figure \ref{fig::bsbad}(b), the bootstrap bias estimate ($\approx 0.06$) is three times smaller than the true bias ($\approx 0.17$). As a result, the bootstrap Z-statistics are highly inaccurate and have an average value of $\approx 25$ (while we would expect to see an average of $\approx 0$ if the test were performing well), leading to essentially a $100\%$ false positive rate.

\section{Methodology}\label{sec::methodology}

\subsection{Main idea}\label{subsec::mainidea}

As discussed in Section \ref{subsec::naive_perm_test}, the key challenge in developing a permutation test for $\mcH_0$ is that the OLS residual estimates $\hat\epsilon\ols$ do not satisfy the same independence properties as the true residuals $\epsilon$. Our solution is to introduce a new estimator $\hat\epsilon$ that exactly preserves some of the independence properties of $\epsilon$. To ease readability and build intuition, this subsection introduces the simplest possible variant of the mosaic permutation test. Section \ref{subsec::mpt} then introduces the mosaic permutation test in full generality. However, Section \ref{subsec::mpt} is self-contained, so readers may skip to Section \ref{subsec::mpt} if they wish.

For exposition, we make two simplifying assumptions for this subsection only: (1) the vectors of residuals $\epsilon_1, \dots, \epsilon_T \iid P_{\epsilon}$ are i.i.d. and (2) the exposures $L_t = L \in \R^{p \times k}$ do not change with time. Then, the main idea is to split the assets into two groups, $G_1, G_2 \subset [p]$, and estimate the residuals separately for each group. This ensures that under the null, the \textit{estimated} residuals for the assets in $G_1$ and $G_2$ are independent. Formally:
\begin{enumerate}[leftmargin=*, itemsep=0.75pt, topsep=0pt]
    \item Partition the assets into two disjoint groups $[p] = G_1 \cup G_2$. For now, $G_1 = \{1, \dots, \floor{p/2}\}$ and $G_2 = \{\floor{p/2} + 1, \dots, p\}$. (Section \ref{subsec::defaulttiling} discusses how to choose $G_1, G_2$.)
    \item For $i \in \{1,2\}$, let $\hat \epsilon_{t,G_i}$ denote the OLS estimate of $\epsilon_{t,G_i} \in \R^{|G_i|}$ based on $Y_{t,G_i} \in \R^{|G_i|}$, the returns of the assets in group $G_i$ at time $t$. Formally, let $H_i = (I_{|G_i|} - L_{G_i} (L_{G_i}^{\top} L_{G_i})^{-1} L_{G_i}^{\top})$ be the OLS projection matrix based on $L_{G_i}$, the exposures for the assets in $G_i$. Then we define $\hat \epsilon_{t,G_i} \defeq H_i Y_{t,G_i} = H_i \epsilon_{t,G_i}$.
    \begin{remark}     
    In each regression above, the parameters are the factor returns $X_t \in \R^k$ and the ``number of observations" is the number of stocks $|G_i|$ in group $i$. Splitting the assets into two groups halves the number of ``observations," leading to higher estimation error. Typically, the number of stocks is sufficiently large that we will still obtain reasonably good estimates $\hat \epsilon$. However, whenever $p < 2k$, this procedure will be powerless.
    \end{remark}
\end{enumerate}
Since the groups of stocks $G_1, G_2$ are disjoint, under $\mcH_0$, $\hat \epsilon_{t,G_1} = H_1 \epsilon_{t,G_1}$ and $\hat \epsilon_{t,G_2} = H_2 \epsilon_{t,G_2}$ are independent. Furthermore, $\{\epsilon_{t,G_i}\}_{t=1}^{T}$ are i.i.d. for each $i \in \{1,2\}$. Therefore, we can separately permute $\{\epsilon_{t,G_1}\}_{t=1}^{T}$ and $\{\epsilon_{t,G_2}\}_{t=1}^T$ without changing the law of $\hat \epsilon$:
\begin{equation}
\hat \epsilon \defeq 
    \begin{bmatrix} \hat \epsilon_{1,G_1} & \hat \epsilon_{1,G_2} \\ \hat \epsilon_{2,G_1} & \hat \epsilon_{2,G_2} \\ \vdots & \vdots \\ \hat \epsilon_{T,G_1} & \hat \epsilon_{T,G_2} 
     \end{bmatrix} 
\disteq 
    \begin{bmatrix} 
            \hat\epsilon_{\cblue{\pi_1}(1),G_1} & \hat \epsilon_{\cred{\pi_2}(1),G_2} \\ \hat\epsilon_{\cblue{\pi_1}(2),G_1} & \hat \epsilon_{\cred{\pi_2}(2),G_2} \\ 
            \vdots & \vdots \\  \hat\epsilon_{\cblue{\pi_1}(T),G_1} & \hat \epsilon_{\cred{\pi_2}(T),G_2} 
    \end{bmatrix} \defeq \hat \epsilon^{(\pi)}
\end{equation}
where $\pi_1, \pi_2 : [T] \to [T]$ are any permutations. The idea is best understood visually, and a color-assisted illustration of this equality is given below with $T=6$ observations. 
\begin{equation}\label{eq::pic1}
\hat\epsilon =  
\begin{tabular}{ V{4} C{0.15\textwidth} V{4} C{0.15\textwidth} V{4} } 
 \multicolumn{2}{c}{$\overbrace{\hspace{0.275\textwidth}}^{\text{p assets split into two subsets}}$} \\ \hlineB{4}
 \colorone $\hat\epsilon_{1,G_1}$ & \coloronenew $\hat\epsilon_{1,G_2}$ \\ \hline  
 \colortwo $\hat\epsilon_{2,G_1}$ & \colortwonew $\hat\epsilon_{2,G_2}$ \\ \hline  
 \colorthree  $\hat\epsilon_{3,G_1}$ & \colorthreenew $\hat\epsilon_{3,G_2}$ \\ \hline  
 \colorfour $\hat\epsilon_{4,G_1}$ & \colorfournew $\hat\epsilon_{4,G_2}$ \\ \hline  
 \colorfive $\hat\epsilon_{5,G_1}$ & \colorfivenew $\hat\epsilon_{5,G_2}$ \\ \hline  
 \colorsix $\hat\epsilon_{6,G_1}$ & \colorsixnew $\hat\epsilon_{6,G_2}$ \\ \hlineB{4}  
 \multicolumn{2}{c}{} \\
\end{tabular}
\,\,\,\disteq\,\,\, 
\begin{tabular}{ V{4} C{0.15\textwidth} V{4} C{0.15\textwidth} V{4} }
  \multicolumn{2}{c}{$\overbrace{\hspace{0.275\textwidth}}^{\text{ $\hat\epsilon_{\cdot,G_1}, \hat\epsilon_{\cdot,G_2}$ are separately permuted}}$} \\ \hlineB{4}
 \colorfive $\hat\epsilon_{5,G_1}$ & \colorthreenew $\hat\epsilon_{3,G_2}$ \\ \hline  
 \colorone $\hat\epsilon_{1,G_1}$ & \colorsixnew $\hat\epsilon_{6,G_2}$ \\ \hline  
 \colorfour  $\hat\epsilon_{4,G_1}$ & \coloronenew $\hat\epsilon_{1,G_2}$ \\ \hline  
 \colorthree $\hat\epsilon_{3,G_1}$ & \colortwonew $\hat\epsilon_{2,G_2}$ \\ \hline  
 \colorsix $\hat\epsilon_{6,G_1}$ & \colorfivenew $\hat\epsilon_{5,G_2}$ \\ \hline  
 \colortwo $\hat\epsilon_{2,G_1}$ & \colorfournew $\hat\epsilon_{4,G_2}$ \\ \hlineB{4}  
 \multicolumn{2}{c}{} \\
\end{tabular}
\end{equation}
Above, the two groups are shown in different colors, and the shading of each cell denotes its original position in time---for this reason, the shadings in the right panel indicate that $\hat \epsilon_{\cdot,G_1}$ and $\hat \epsilon_{\cdot,G_2}$ have been separately permuted. 

After sampling permutations uniformly at random to create new estimated residual matrices $\tilde \epsilon^{(1)}, \dots, \tilde \epsilon^{(R)}$, we can compute a valid p-value using any test statistic $S : \R^{T \times p} \to \R$:
\begin{equation}\label{eq::hatepspval}
    \pval \defeq \frac{1 + \sum_{r=1}^R \I(S(\hat \epsilon) \le S(\tilde \epsilon^{(r)}))}{R+1}.
\end{equation}
For example, $S(\hat \epsilon)$ could measure the maximum absolute correlation of a (regularized) estimate of the covariance matrix of $\epsilon_t$. We discuss the choice of test statistic in more detail in Sections \ref{sec::realdata} and \ref{sec::extensions}. The key result is that this p-value is valid for any test statistic.

\begin{theorem}\label{thm::validitybasic} $\pval$ in Eq. \ref{eq::hatepspval} is an exact p-value testing $\mcH_0$ assuming $\epsilon_1, \dots, \epsilon_T \iid P_{\epsilon}$ are i.i.d.
\end{theorem}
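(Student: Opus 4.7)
The plan is to reduce the validity of $\pval$ to the standard exchangeability argument for permutation tests. The key fact to establish is that, under $\mcH_0$ with i.i.d.\ $\epsilon_t$, the joint law of $\hat\epsilon$ is invariant under separately permuting the rows of the two column groups: for any fixed permutations $\pi_1, \pi_2 : [T] \to [T]$,
\[
\hat\epsilon \disteq \hat\epsilon^{(\pi_1,\pi_2)}.
\]
Once this is in hand, the validity of $\pval$ is essentially immediate.

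To prove this invariance, I would first establish independence across groups. Under $\mcH_0$ the columns $\epsilon_{\cdot,1}, \dots, \epsilon_{\cdot,p}$ are jointly independent, so since $G_1, G_2$ is a disjoint partition of $[p]$, the submatrices $\epsilon_{\cdot,G_1}$ and $\epsilon_{\cdot,G_2}$ are independent. By construction $\hat\epsilon_{t,G_i} = H_i \epsilon_{t,G_i}$ is a fixed (deterministic) linear function of $\epsilon_{\cdot,G_i}$ alone, where $H_i = I_{|G_i|} - L_{G_i}(L_{G_i}^\top L_{G_i})^{-1} L_{G_i}^\top$ depends only on the fixed exposures. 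Hence $\hat\epsilon_{\cdot,G_1}$ and $\hat\epsilon_{\cdot,G_2}$ are independent as well. Next, I would establish row-exchangeability within each group: fixing $i \in \{1,2\}$, the i.i.d.\ assumption on $\epsilon_1, \dots, \epsilon_T$ makes the subvectors $\epsilon_{1,G_i}, \dots, \epsilon_{T,G_i}$ i.i.d., so their joint law is invariant under any row-permutation $\pi_i$, and applying $H_i$ rowwise preserves this invariance. Combining cross-group independence with row-exchangeability inside each group yields the displayed identity, since the joint law of $(\hat\epsilon_{\cdot,G_1}, \hat\epsilon_{\cdot,G_2})$ factors across $i$.

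Given the invariance, validity is standard. Let $(\pi_1^{(r)}, \pi_2^{(r)})_{r=1}^R$ be the i.i.d.\ uniformly random pairs of permutations used to form $\tilde\epsilon^{(r)}$, drawn independently of $\epsilon$, and let $\pi^{(0)}$ be the identity pair. Conditioning on the permutations and applying the invariance shows that the tuple $(\hat\epsilon, \tilde\epsilon^{(1)}, \dots, \tilde\epsilon^{(R)})$ is exchangeable. Pushing through the measurable function $S$ preserves exchangeability, so $\bigl(S(\hat\epsilon), S(\tilde\epsilon^{(1)}), \dots, S(\tilde\epsilon^{(R)})\bigr)$ is an exchangeable tuple of real-valued random variables. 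By the classical rank-uniformity argument, the rank of $S(\hat\epsilon)$ among these $R+1$ values is stochastically $\ge$ uniform on $\{1,\dots,R+1\}$ (accounting for ties in favor of the null), which gives $\Pr(\pval \le \alpha) \le \alpha$.

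The main conceptual obstacle is proving the joint invariance $\hat\epsilon \disteq \hat\epsilon^{(\pi_1,\pi_2)}$; everything after that is off-the-shelf. Under the i.i.d.\ assumption the invariance is relatively painless, but it is worth noting that both ingredients are essential: the split into disjoint column groups is what ensures that the OLS residuals on each side depend on disjoint entries of $\epsilon$ and so stay independent across groups, and the i.i.d.\ assumption is what lets us permute rows within a group. The general case without i.i.d.\ residuals, where only \emph{local} exchangeability holds, is where the real technical work lies and is deferred to Section \ref{subsec::mpt}.
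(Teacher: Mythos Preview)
Your proposal is correct and follows essentially the same route as the paper: the paper treats Theorem~\ref{thm::validitybasic} as the $M=2$ special case of Theorem~\ref{thm::main}, whose proof proceeds by (i) establishing the invariance $\hat\epsilon \disteq \hat\epsilon^{(\pi_1,\pi_2)}$ from cross-group independence plus within-group row exchangeability, (ii) upgrading this to exchangeability of $(\hat\epsilon,\tilde\epsilon^{(1)},\dots,\tilde\epsilon^{(R)})$, and (iii) applying the standard rank argument. The only place your sketch is thinner than the paper is the passage from the marginal invariance in (i) to the joint exchangeability in (ii), which the paper handles carefully via Lemmas~\ref{lem::simple_exch_result}--\ref{lem::technical_exch_result}; your phrase ``conditioning on the permutations and applying the invariance'' gestures at but does not fully spell out the composition-with-inverse trick needed there.
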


This simple procedure already has many desirable properties, but it lacks stability due to the choice of $G_1$ and $G_2$. For example, if the residuals of stocks $1$ and $2$ are highly correlated, we will have no power to detect this if both stocks are placed in the same group, since their estimated idiosyncratic returns $\hat \epsilon_{\cdot,1}, \hat \epsilon_{\cdot,2}$ will never be ``separated" by different permutations. We address this problem in the next section.

\subsection{The mosaic permutation test}\label{subsec::mpt}

We now introduce the general mosaic permutation test, which is more powerful and stable than the simple method in Section \ref{subsec::mainidea}. As an added benefit, we will also make the test more robust to autocorrelation and nonstationarity among the residuals $\epsilon_1, \dots, \epsilon_T$.

In Section \ref{subsec::mainidea}, we separated the asset returns $\bY \in \R^{T \times p}$ into two disjoint groups, computed residual estimates $\hat \epsilon$ separately for each group, and then permuted within each group. Now, we partition the data $\bY$ into an arbitrary number $M$ of rectangles along both axes. Formally, for $m=1,\dots,M$, let $B_m \subset [T]$ denote a subset or ``batch" of observations and $G_m \subset [p]$ denote a subset or ``group" of assets. We say $\{(B_m, G_m)\}_{m=1}^M$ is a \textit{tiling} if for every timepoint $t$ and asset $j$, there is exactly one pair $(B_m, G_m)$ such that $t \in B_m$ and $j \in G_m$. See Figure \ref{fig::tilingex} for an illustration of this definition.

\begin{figure}[!h]
    \centering
    \includegraphics[width=\linewidth]{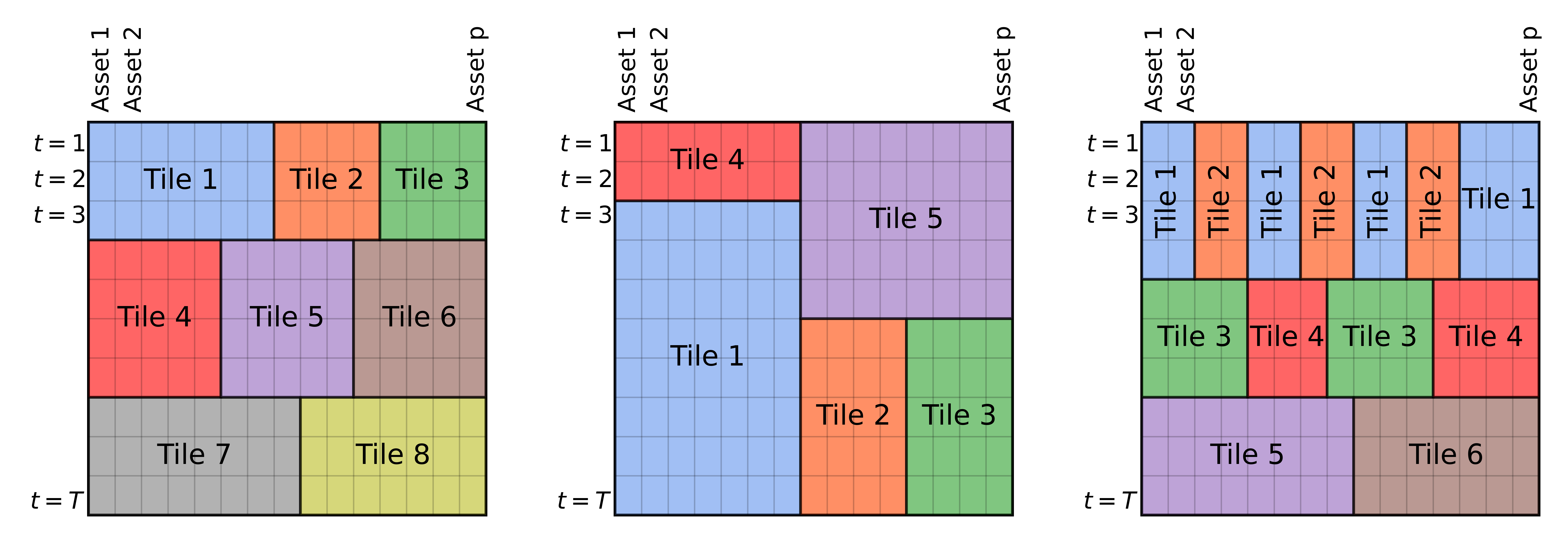}
    \caption{This figure shows three examples of tilings of the data matrix $\bY$. The right-most example emphasizes that each tile need not be contiguous in the initial ordering of the features (although in illustrations, we often display examples of contiguous tiles since they are easier to plot). Note that in our default choice of tiling (Section \ref{subsec::defaulttiling}), the initial ordering of the features does not affect the results of the test.}
    \label{fig::tilingex}
\end{figure}

We will soon discuss how to choose the tiling (in Section \ref{subsec::defaulttiling}). For now, given an arbitrary tiling, we refer to the submatrices $\bY_{(m)} = \bY_{B_m, G_m}$ and $\epsilon_{(m)} \defeq \epsilon_{B_m,G_m}$ as the $m$th \emph{tiles} of the matrices $\bY$ and $\epsilon$, respectively. Before presenting the mosaic permutation test, we make two assumptions, generalizing the assumptions from Section \ref{subsec::mainidea}.

First, in Section \ref{subsec::mainidea}, we assumed that the exposures $L_t$ did not change with time. Now, we ask that the analyst chooses the tiles such that the exposure matrices $\{L_t\}_{t=1}^T$ may change across tiles, but not within tiles.
For example, in our application to the BFRE model, the exposures change every week. As a result, we choose the tiles such that each tile contains data from only one week. If $L_t$ takes unique values at every observation, then testing $\mcH_0$ is possible but requires an extension introduced in Section \ref{subsec::nonconst_lt}.

\begin{assumption}\label{assump::batched_exposure} The exposures $\{L_t\}_{t \in B_m}$ in each tile are all equal.
\end{assumption}

Second, in Section \ref{subsec::mainidea}, we assumed that the idiosyncratic returns $\epsilon_1, \dots, \epsilon_T$ were i.i.d. We now relax this assumption to allow for a large degree of nonstationarity among the residuals. In particular, we assume that each asset's returns are exchangeable \textit{within} tiles, but not necessarily between tiles.

\begin{assumption}[Local exchangeability]\label{assump::localexch} For every asset $j \in [p]$, we assume the following. Let $\pi : [T] \to [T]$ be any permutation such that $\bY_{t,j}$ and $\bY_{\pi(t), j}$ are always in the same tile. Then
\begin{equation}
    (\epsilon_{1,j}, \dots, \epsilon_{T,j}) \disteq
    (\epsilon_{\pi(1),j}, \dots, \epsilon_{\pi(T),j}).
\end{equation}
\end{assumption}

Assumption \ref{assump::localexch} allows the distribution of the residuals to drift between tiles, making this assumption much weaker than the i.i.d. assumption in Theorem \ref{thm::validitybasic}. Indeed, this assumption is related to the motivation for many classical procedures for time series data, such as the block bootstrap \citep{kunsch1989block_bootstrap}. Armed with these assumptions, Algorithm \mosaicpermalgnum\, defines the mosaic permutation test.

\hrulefill 
\addtocounter{algorithm}{1}

\vspace{-0.3cm}
\textbf{Algorithm 1}: The mosaic permutation test. 

\vspace{-0.425cm}
\hrulefill

\textbf{Inputs}: Asset returns $\bY \in \R^{T \times p}$, exposures $L_t \in \R^{p \times k}$ for $t \in [T]$, tiles $\{(B_m, G_m)\}_{m=1}^M$ and a test statistic $S : \R^{T \times p} \to \R$. 

\begin{steps}[leftmargin=*, topsep=0.5pt, itemsep=0.5pt]
\item For each tile $m=1,\dots,M$, we let $\hat \epsilon_{(m)}$ denote the OLS estimate of $\epsilon_{(m)}$ using only the data in $\bY_{(m)}$. Precisely, let $L_{(m)} \in \R^{|G_m| \times k}$ denote the exposures for the assets in the $m$th tile (note by Assumption \ref{assump::batched_exposure} that the exposures are constant over time within the tile). Let $H_m \defeq (I_{|G_m|} - L_{(m)} (L_{(m)}\trans L_{(m)})^{-1} L_{(m)}\trans)$ be the OLS projection matrix based on $L_{(m)}$. Then we define
    \begin{equation}\label{eq::hateps_tile_def}
        \hat\epsilon_{(m)} \defeq \bY_{(m)} H_m = \epsilon_{(m)} H_m
    \end{equation}
and $\hat \epsilon \in \R^{T \times p}$ is defined such that $\hat\epsilon_{B_m, G_m} \defeq \hat\epsilon_{(m)}$ for $m \in [M]$.
    
\item For $r=1, \dots, R$, randomly reorder the rows within each tile and let $\tilde{\epsilon}^{(r)} \in \R^{T \times p}$ denote the resulting matrix. Below, this is illustrated with $T=7$ observations and $M=4$ tiles:
\begin{equation}\label{eq::full_picture}
\hat\epsilon = 
\begin{tabular}{ V{4}C{0.1\textwidth}V{4}C{0.1\textwidth}V{4}C{0.1\textwidth}V{4} } 
 \multicolumn{3}{c}{$\overbrace{\hspace{0.275\textwidth}}^{\text{partitioned into $M=4$ tiles}}$} \\ 
 \hlineB{4}
\multicolumn{2}{V{4} C{0.2\textwidth} V{4}}{\oone $\hat\epsilon_{1,G_1}$} & \tone $\hat\epsilon_{1,G_2}$ \\ \hline
\multicolumn{2}{V{4} C{0.2\textwidth} V{4}}{\otwo $\hat\epsilon_{2,G_1}$} & \ttwo $\hat\epsilon_{2,G_2}$ \\ \hline 
\multicolumn{2}{V{4} C{0.2\textwidth} V{4}}{\othree $\hat\epsilon_{3,G_1}$} & \tthree $\hat\epsilon_{3,G_2}$ \\ \hline 
\multicolumn{2}{V{4} C{0.2\textwidth} V{4}}{\ofour $\hat\epsilon_{4,G_1}$} & \tfour $\hat\epsilon_{4,G_2}$ \\ \hlineB{4} 
\thone $\hat\epsilon_{5,G_3}$ & \multicolumn{2}{C{0.2\textwidth} V{4}}{\fone $\hat\epsilon_{5,G_4}$} \\ \hline  \thtwo $\hat\epsilon_{6,G_3}$ & \multicolumn{2}{C{0.2\textwidth} V{4}}{\ftwo $\hat\epsilon_{6,G_4}$} \\ \hline  
\ththree $\hat\epsilon_{7,G_3}$ & \multicolumn{2}{C{0.2\textwidth} V{4}}{\fthree $\hat\epsilon_{7,G_4}$} \\ \hlineB{4} 
\multicolumn{3}{c}{} \\ 
\end{tabular}
\,\,\,\,\disteq\,\,\,\, 
\begin{tabular}{ V{4}C{0.1\textwidth}V{4}C{0.1\textwidth}V{4}C{0.1\textwidth}V{4} }
 \multicolumn{3}{c}{$\overbrace{\hspace{0.275\textwidth}}^{\text{separately permute each tile}}$} \\ 
 \hlineB{4}
\multicolumn{2}{V{4} C{0.2\textwidth} V{4}}{\ofour $\hat\epsilon_{4,G_1}$} & \ttwo $\hat\epsilon_{2,G_2}$ \\ \hline
\multicolumn{2}{V{4} C{0.2\textwidth} V{4}}{\othree $\hat\epsilon_{3,G_1}$} & \tone $\hat\epsilon_{1,G_2}$ \\ \hline 
\multicolumn{2}{V{4} C{0.2\textwidth} V{4}}{\oone $\hat\epsilon_{1,G_1}$} & \tfour $\hat\epsilon_{4,G_2}$ \\ \hline 
\multicolumn{2}{V{4} C{0.2\textwidth} V{4}}{\otwo $\hat\epsilon_{2,G_1}$} & \ttwo $\hat\epsilon_{3,G_2}$ \\ \hlineB{4} 
\ththree $\hat\epsilon_{7,G_3}$ & \multicolumn{2}{C{0.2\textwidth} V{4}}{\ftwo $\hat\epsilon_{6,G_4}$} \\ \hline  
\thone $\hat\epsilon_{5,G_3}$ & \multicolumn{2}{C{0.2\textwidth} V{4}}{\fthree $\hat\epsilon_{7,G_4}$} \\ \hline  
\thtwo $\hat\epsilon_{6,G_3}$ & \multicolumn{2}{C{0.2\textwidth} V{4}}{\fone $\hat\epsilon_{5,G_4}$} \\ \hlineB{4}  
\multicolumn{3}{c}{} \\ 
\end{tabular}
\end{equation}

Mathematically, let $P_m^{(r)} \in \R^{|B_m| \times |B_m|}$ denote a uniformly random permutation matrix for $m \in [M], r \in [R]$. Then $\tilde\epsilon^{(r)}$ is defined such that $\tilde{\epsilon}^{(r)}_{B_m, G_m} \defeq P_m^{(r)} \hat \epsilon_{(m)}$ for $m \in [M]$.
    
\item For any test statistic $S : \R^{T \times p} \to \R$, compute the p-value 
\begin{equation}\label{eq::hatepspval2}
    \pval \defeq \frac{1 + \sum_{r=1}^R \I(S(\hat \epsilon) \le S(\tilde \epsilon^{(r)}))}{R+1}.
\end{equation}
\end{steps}
\hrulefill

\begin{remark}\label{rem::nobs2} If the $m$th tile contains data from $t_m = |B_m|$ timepoints, then Eq. \ref{eq::hateps_tile_def} is equivalent to running $t_m$ separate cross-sectional regressions to compute each row of $\hat\epsilon_{(m)}$. I.e., $\hat \epsilon_{t,G_m} \defeq H_m Y_{t,G_m}$ for all $t \in B_m$. Thus, the ``number of observations" is the number of stocks in $G_m$, and the parameters are the factor returns $X_t \in \R^k$.
\end{remark}

Figure \ref{fig::intro_figure} gives a simple illustration of the mosaic permutation test with $M=10$ tiles; note that the method in Section \ref{subsec::mainidea} is an example of this procedure with $M=2$ tiles. Theorem \ref{thm::main} states that $\pval$ is a valid p-value assuming only Assumptions \ref{assump::batched_exposure}-\ref{assump::localexch}. We emphasize that these assumptions allow for (i) the residuals $\epsilon$ and factors $\{X_t\}_{t=1}^T$ to be arbitrarily heavy-tailed, (ii) the factors $\{X_t\}_{t=1}^T$ to be arbitrarily nonstationary and autocorrelated, and (iii) the residuals $\{\epsilon_t\}_{t=1}^T$ to be nonstationary between batches, and (iv) the use of any test statistic $S(\cdot)$.

\begin{theorem}\label{thm::main} Suppose Assumptions \ref{assump::batched_exposure}-\ref{assump::localexch} hold. Then under $\mcH_0$, Eq. \ref{eq::hatepspval2} defines a valid p-value satisfying $\P(\pval \le \alpha) \le \alpha$ for any $\alpha \in (0,1)$. 
\end{theorem}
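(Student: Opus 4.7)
The plan is to follow the standard randomization-test template, which reduces validity of $\pval$ to establishing that $\hat\epsilon$ is invariant in distribution under the group of tile-wise row permutations. The key algebraic observation is that applying a row permutation within a tile commutes with that tile's deterministic OLS projection.

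\textbf{Step 1 (distributional invariance of $\epsilon$).} Let $\bm{\pi} = (\pi_1, \ldots, \pi_M)$ be any fixed collection of permutations with $\pi_m : B_m \to B_m$, and define $\bm{\pi} \cdot \epsilon$ entrywise by $(\bm{\pi} \cdot \epsilon)_{t,j} \defeq \epsilon_{\pi_{m(t,j)}(t),\,j}$, where $m(t,j)$ is the unique tile containing $(t,j)$. Restricted to column $j$, this action is a permutation $\sigma_j : [T] \to [T]$ such that $t$ and $\sigma_j(t)$ always lie in the same tile containing asset $j$; so Assumption \ref{assump::localexch} gives $\epsilon_{\cdot,j} \disteq (\epsilon_{\sigma_j(t),j})_{t \in [T]}$ for each $j$. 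Because $\mcH_0$ makes the columns jointly independent, the joint law of $\epsilon$ factorizes over columns, and I can assemble these column-wise equalities into $\epsilon \disteq \bm{\pi} \cdot \epsilon$.

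\textbf{Step 2 (lifting the invariance to $\hat\epsilon$).} By Assumption \ref{assump::batched_exposure}, the projection $H_m$ is a deterministic function of $L_{(m)}$ alone. Since $\hat\epsilon_{(m)} = \epsilon_{(m)} H_m$ and a row permutation $P_m$ acts by left multiplication, $P_m \hat\epsilon_{(m)} = (P_m \epsilon_{(m)}) H_m$ by associativity. In other words, $\bm{\pi} \cdot \hat\epsilon$ is exactly the matrix of tile-wise OLS residuals computed from $\bm{\pi} \cdot \epsilon$. Applying this deterministic tile-wise projection to both sides of Step 1 yields $\hat\epsilon \disteq \bm{\pi} \cdot \hat\epsilon$ for every $\bm{\pi}$ in the product group $\Pi \defeq \prod_m \mathrm{Sym}(B_m)$, and hence also when $\bm{\pi}$ is drawn uniformly from $\Pi$ independently of $\hat\epsilon$.

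\textbf{Step 3 (validity of $\pval$).} The permutations $\bm{\pi}^{(1)}, \ldots, \bm{\pi}^{(R)}$ used in Algorithm 1 are iid uniform on $\Pi$. By the classical randomization-test argument---introduce an auxiliary uniform $\bm{\pi}^{(0)} \in \Pi$ independent of everything, use Step 2 to replace the implicit identity permutation by $\bm{\pi}^{(0)}$, and observe that $(\bm{\pi}^{(0)},\, \bm{\pi}^{(1)}\bm{\pi}^{(0)},\, \ldots,\, \bm{\pi}^{(R)}\bm{\pi}^{(0)})$ are iid uniform on $\Pi$---the augmented vector $(\hat\epsilon, \tilde\epsilon^{(1)}, \ldots, \tilde\epsilon^{(R)})$ is exchangeable. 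Applying $S$ componentwise and using the usual rank argument (with the $+1$'s absorbing ties) yields $\P(\pval \le \alpha) \le \alpha$ for all $\alpha \in (0,1)$.

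\textbf{Anticipated obstacle.} The only delicate point is stating what ``tile-wise permutation'' means for $\epsilon$: because different assets at the same timepoint can lie in different tiles, the induced action permutes entries within each column separately (using a possibly different permutation for different columns) rather than rearranging whole rows. Once that is pinned down, Step 1 is a direct combination of local exchangeability (column marginals) with $\mcH_0$ (column independence), and Steps 2--3 reduce to the algebraic identity $P_m(\epsilon_{(m)} H_m) = (P_m \epsilon_{(m)}) H_m$ together with the standard permutation-test exchangeability lemma.
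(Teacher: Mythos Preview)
Your proposal is correct and follows essentially the same approach as the paper: first establish invariance of $\epsilon$ under tile-wise row permutations via local exchangeability plus $\mcH_0$, then push this through the deterministic projections $H_m$ to obtain invariance of $\hat\epsilon$, and finally deduce exchangeability of $(\hat\epsilon,\tilde\epsilon^{(1)},\ldots,\tilde\epsilon^{(R)})$ and apply the standard rank argument. The only cosmetic difference is that the paper packages your Step~3 (the passage from ``$\hat\epsilon\disteq\bm{\pi}\cdot\hat\epsilon$ for all $\bm{\pi}$'' to exchangeability of the augmented vector) as a separate technical lemma, proving it via the inverse-permutation trick, whereas you invoke the equivalent group-theoretic randomization-test argument directly.
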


Theorem \ref{thm::main} is proved in Appendix \ref{appendix::proofs}. However, the main idea is that the residuals for the tiles $\{\hat\epsilon_{(m)}\}_{m=1}^M$ are estimated using separate data. Thus, separately reordering the rows within each tile does not change the joint law of the full estimated residual matrix $\hat\epsilon$. Formally, if $P_1 \in \{0,1\}^{|B_1| \times |B_1|}, \dots, P_M \in \{0,1\}^{|B_M| \times |B_M|} $ are permutation matrices, 
\begin{equation}\label{eq::full_disteq}
    (\hat\epsilon_{(1)}, \dots, \hat \epsilon_{(M)}) \disteq (P_1 \hat\epsilon_{(1)}, \dots, P_M \hat \epsilon_{(M)}).
\end{equation}

\begin{remark}[Regularization]\label{rem::regularization} Theorem \ref{thm::main} allows the use of arbitrary test statistics $S(\hat\epsilon)$, including regularized estimates of $\cov(\epsilon)$. However, we require the use of unregularized OLS regressions in each tile to estimate $\hat\epsilon$. The reason is that regularized (e.g.) ridge estimates of the residuals $\epsilon$ will not project out the influence of the factor returns $\{X_t\}_{t \in [T]}$, causing violations of Eq. \ref{eq::full_disteq} since all tiles may depend somewhat on the factor returns $\{X_t\}_{t \in [T]}$. 
That said, incorporating regularized estimates of the residuals is a promising direction for future work (see Section \ref{sec::discussion}). 
\end{remark}

\subsection{A default choice of tiling}\label{subsec::defaulttiling}

We recommend choosing the tiling by doing the following:
\begin{enumerate}[topsep=1pt, itemsep=0.5pt, leftmargin=*]
    \item First, partition the time points into $[T] = B_1 \cup \dots \cup B_{I}$ into $I \in \N$ batches. By default, we take each batch to be $10$ consecutive time points, so $B_1 = \{1, \dots, 10\}, B_2 = \{11, \dots, 20\}$, etc. That said, if necessary, one may make the batches smaller to guarantee that the exposures are constant within each batch (satisfying Assumption \ref{assump::batched_exposure}).
    \item For each batch $i=1,\dots,I$, randomly partition the assets into $D$ groups $[p] = G_{i,1} \cup \dots \cup G_{i,D}$ of (roughly) equal size. We recommend setting $D = \max\left(2, \ceil{\frac{p}{5k}}\right)$.

    \item We let $\{(B_i, G_{i,k}) : d \in [D], i \in [I]\}$ be the final set of $M = I \cdot D$ tiles. For example, in Figure \ref{fig::tilingex}, the tilings in the left and right panels are of this form (while the one in the middle panel is not).
\end{enumerate}

Above, we use small batch sizes of $\approx 10$ observations because smaller contiguous batches are more robust to nonstationarity and autocorrelation among the residuals. Furthermore, using more batches increases the stability of the test and decreases the likelihood that any one random partition of the assets dramatically affects the results. 

Furthermore, the choice of $D$ balances the following trade-off. On the one hand, using larger $D$ increases the probability that any two assets are in different tiles and thus that their estimated idiosyncratic returns can be ``separated" by permutations, since there is a $\approx \frac{1}{D}$  chance that any two assets lie in the same group. On the other hand, as per Remark \ref{rem::nobs2}, we must separately estimate the value of the factors $X_t \in \R^k$ within each tile using a linear regression with $\frac{p}{D}$ observations. Increasing $D$ will reduce the number of observations per regression and increase the estimation error of $\hat \epsilon$. Thus, choosing $D = \max(2, \floor{\frac{p}{5k}})$ maximizes the number of tiles subject to the constraint that there are $5$ times as many observations as there are parameters in each regression used to estimate $\hat \epsilon$.

\section{Application to the BFRE model}\label{sec::realdata}

We now apply our method to the Blackrock Fundamental Equity Risk (BFRE) model introduced in Section \ref{subsec::bfre_motivation}. We ask whether the BFRE model accurately describes correlations among asset returns in three sectors of the US economy: energy, financials, and healthcare. Appendix \ref{appendix::othersectors} also analyzes six additional sectors; however, the results below are representative of our findings in the appendix.

Overall, the BFRE model appears to explain the majority of correlations among assets. However, we find evidence of a missing persistent factor among real estate stocks and a missing transient factor in healthcare post-COVID. We do not find consistent evidence against the null in the energy sector. These findings are supported by four analyses:
\begin{itemize}[noitemsep, topsep=0pt, leftmargin=*]
    \item Our main analysis straightforwardly applies the mosaic permutation test to each sector.
    \item Our second analysis finds in an ablation test that removing a fraction of the existing factors from the BFRE model leads to strong evidence against the null. 
    \item Our third analysis analyzes the degree to which we can improve the BFRE model.
    \item Our final analysis delves into our results in the financial sector, where we find evidence of correlations among certain real estate asset residuals. 
\end{itemize}

\subsection{Main analysis}\label{subsec::main_bfre}

We now present our main analysis of the BFRE model. Methodologically, we use the default choice of tiling discussed in Section \ref{subsec::defaulttiling}, and we use the mean maximum absolute correlation (MMC) test statistic from Section \ref{subsec::motivation}. We use this test statistic because it is simple and interpretable, although we shall soon discuss other choices. 

Figure \ref{fig::motivation_soln} plots the test statistic and significance threshold for the energy, financial, and healthcare sectors based on sliding windows of $350$ observations. To assess significance, Figure \ref{fig::mmc_pval} plots two types of Z-statistics. First, it shows the exact mosaic Z-statistic $Z_{\mathrm{exact}} = \Phi^{-1}(1-\pval)_+$, where $\Phi^{-1}$ denotes the inverse CDF of a standard normal distribution. $Z_{\mathrm{exact}}$ is stochastically dominated by the positive part of a standard normal under the null, so large values of $Z$ are evidence against the null. However, since $p_{\val}$ is discrete, $Z_{\mathrm{exact}}$ is bounded below a maximum value which increases slowly with the number of randomizations (in our analysis, $Z_{\mathrm{exact}} \le 3.24$). Thus, Figure \ref{fig::mmc_pval} also plots the approximate Z-statistic:
\begin{equation}\label{eq::apprx_Z}
    Z\apprx = \frac{S(\hat\epsilon) - \bar S}{\sqrt{\frac{1}{R+1} \sum_{r=0}^R (S(\tilde\epsilon^{(r)}) - \bar S)^2}} \text{ for } \bar S = \sum_{r=0}^{R+1} S(\tilde\epsilon^{(r)}),
\end{equation}
where above, $\hat\epsilon$ are mosaic residuals, $\tilde\epsilon^{(1)}, \dots, \tilde\epsilon^{(R)}$ are permuted variants of the mosaic residuals, and we use the convention $\tilde\epsilon^{(0)} \defeq \hat\epsilon$. $Z\apprx$ may not be exactly Gaussian, but Lemma \ref{lem::apprx_Z} in Appendix \ref{appendix::apprx_Z} proves that it has zero mean and unit variance under the null. Furthermore,  $Z\apprx$ can take large values without excessive computation.

\begin{figure}[!h]
    \centering
    \includegraphics[width=\linewidth]{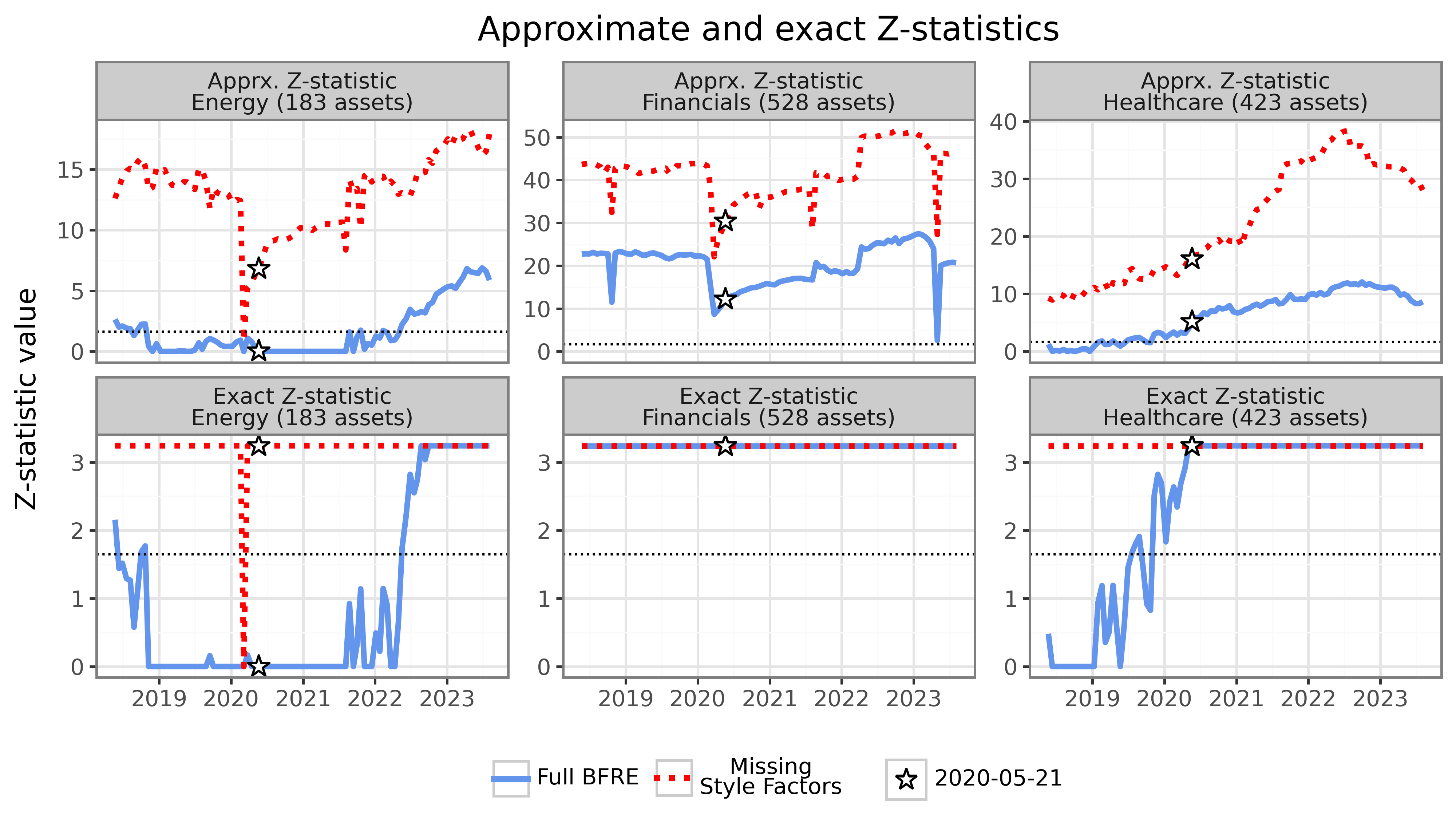}
    \caption{This figure plots exact and approximate mosaic Z-statistics computed for the analysis in Figure \ref{fig::motivation_soln} and the ablation analysis in Figure \ref{fig::ablation}. The exact Z-statistics $\Phi^{-1}(1-\pval)_+$ are stochastically dominated by the positive part of a standard normal under the null, but their maximum value is $\approx 3.24$, corresponding to the minimum adjusted p-value of $\frac{3}{5001}$ when using $R=5000$ randomizations. In contrast, the approximate Z-statistics from Eq. \ref{eq::apprx_Z} may not be Gaussian; however, they are provably mean zero and have unit variance under the null (see Appendix \ref{appendix::apprx_Z}). For the exact Z-statistics, we apply a Bonferroni correction for the $m=3$ sectors, but we do not apply a multiplicity correction across all $100$ time points. The dotted black line denotes the significance threshold $(1.64)$ for $\alpha=0.05$. There are $k=20, k=27$, and $k=27$ factors in energy, financials, and healthcare, respectively, although a few factors are not used at all time points.}\label{fig::mmc_pval}
\end{figure}

The findings are qualitatively different for different sectors. In the energy sector, we generally do not find statistically significant evidence against the null before mid-2022. In contrast, in the healthcare sector, we primarily find evidence against the null after the COVID-19 pandemic began, suggesting that we may be detecting a transient factor. In the financial sector, we find highly statistically significant evidence of violations of the null $\mcH_0$ at all time points, and the resulting p-values are highly significant $(p <0.001)$. That said, it is not clear if this represents a large effect size. Heuristically, the test statistic value is often close to the significance threshold (in healthcare and financials, respectively, the difference is never larger than $3\%$ and $8\%$). To investigate this further, we perform three additional analyses in the next three sections.

\begin{remark} We stress that failing to reject the null $\mcH_0$ does not confirm that the null holds. For example, a different test statistic might yield stronger evidence against the null in the healthcare sector pre-COVID (although we are unaware of such a statistic). However, our analysis illustrates that the mosaic method can assess whether any given test statistic provides evidence against the null. 
\end{remark}

\begin{remark}
The energy sector contains fewer assets than the other sectors, and we should expect lower power in smaller sectors (since having fewer assets can lead to lower-quality residual estimates). However, Appendix \ref{appendix::othersectors} analyzes two other sectors of comparable size, where we find highly significant evidence against the null. Thus, our results in the energy sector are not explained by its size.
\end{remark}

\begin{remark} Our dataset has essentially no missing data, with the exception that some assets did not exist for parts of our analysis. E.g., ``Oak Street Health" was not publicly traded until 2020.
Despite this, Appendix \ref{appendix::proofs} details a simple strategy which yields an exact p-value as long as each asset's residuals are locally exchangeable (Assumption \ref{assump::localexch}) conditional on the data availability pattern. Essentially, it suffices to choose tiles such that no tile contains any missing data. 
Please see Appendix \ref{subsec::addnmethoddetails} for further methodological details.
\end{remark}

\subsection{Ablation test}

We now perform an ablation test that removes a subset of factors from the BFRE model. Indeed, the BFRE model contains two types of factors: (i) binary \textit{industry} factors, where asset $j$'s exposure to (e.g.) the banking industry factor is the binary indicator of whether asset $j$ is a bank, and (ii) twelve continuous \textit{style} factors, such as ``size," where a corporation's exposure to the ``size" factor is a measure of the size of the corporation.

We now repeat our analysis of the BFRE model after removing the style factors. Figure \ref{fig::mmc_pval} shows that the mosaic Z-statistics become more significant---in fact, the discrete mosaic p-values are nearly always equal to their minimum possible value ($< 0.001$) in this analysis. Furthermore, Figure \ref{fig::ablation} shows that the differences between the mosaic test statistics and their significance thresholds double or more compared to the analysis of the full BFRE model. Overall, this gives more (heuristic) evidence that the BFRE model accounts for the most significant correlations among assets, even though it does not fit perfectly. Soon, we will see more evidence for this claim.

\begin{figure}
    \centering
    \includegraphics[width=\linewidth]{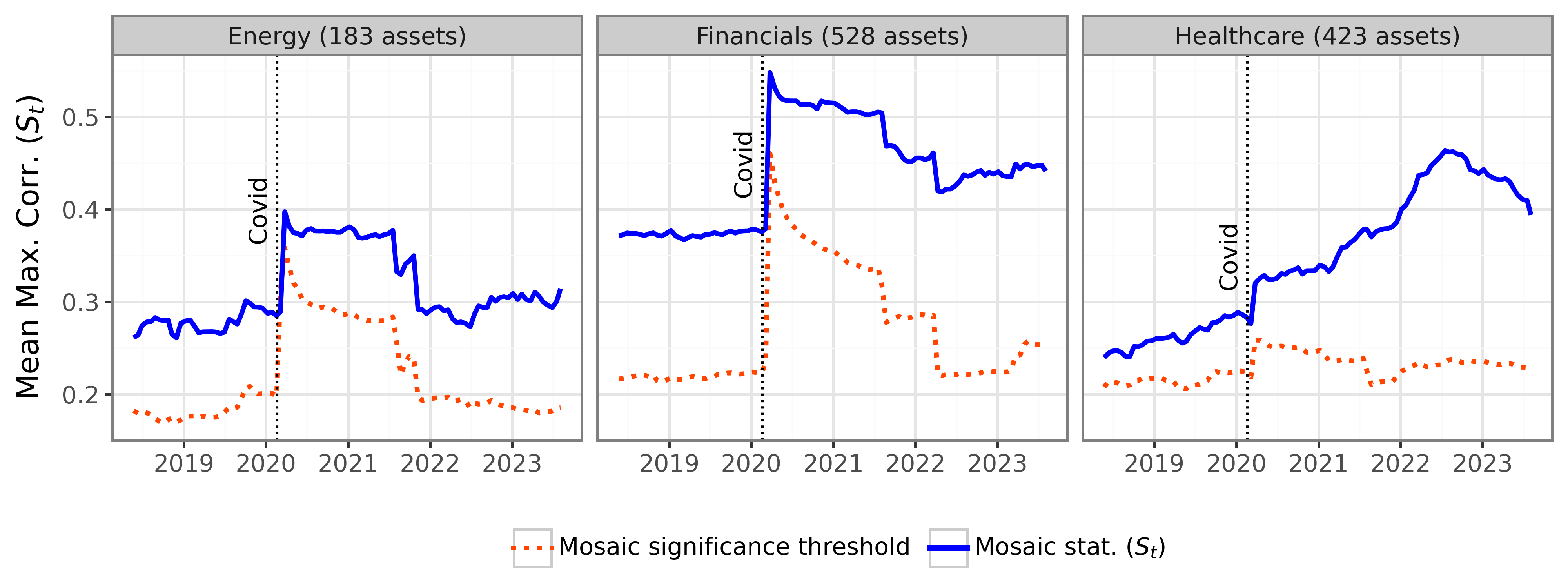}
    \caption{This figure repeats the analysis in Figure \ref{fig::motivation}, except now we perform an ablation test where we remove the twelve style factors from the BFRE model.
    }
    \label{fig::ablation}
\end{figure}

\subsection{Improving the model}\label{subsec::improvement}

The main contribution of our paper is a method for testing the goodness-of-fit of a factor model. That said, another important question is whether one can \textit{improve} a model. We now illustrate how the mosaic permutation test can help answer this question. Consider a simple alternative where the BFRE model is missing at least one component:
\begin{equation}
    Y_t = L_t X_t + v Z_t + \epsilon_t,
\end{equation}
where $L_t \in \R^{p \times k}$ are the pre-existing BFRE factor exposures, $X_t$ are the BFRE factors, $Z_t \in \R$ is an additional factor, and $v \in \R^p$ is an unknown (missing) factor exposure. We ask: can we estimate a new exposure $v$ so that the model fit is improved? 

To measure model performance, we split the data into two folds. On the first fold, we estimate new exposures $\hat v$. On the second fold, we check whether $\hat v$ allows us to better predict each residual $\hat \epsilon_{t,j}$ from the residuals in the other tiles. (Under the null, $\hat\epsilon_{t,j}$ is independent of the residuals of all assets in different tile from $j$ at time $t$, so this prediction task is hopeless unless the null is violated.) Finally, we check statistical significance using the mosaic permutation test. The details of this analysis are described below.

\underline{Step 1: estimating $v$}. For each sector, we construct several estimators $\hat v_0, \dots, \hat v_I$ of $v$ using $T_0 \approx 1250$ observations from January 2016 until January 2021. If $\hat C \in \R^{p \times p}$ denotes the empirical correlation matrix of the mosaic residual estimates $\{\hat\epsilon_t\}_{t=1}^{T_0}$, our first estimate $\hat v_0$ is the top eigenvector of $\hat C$. However, sparse estimates of $v$ could yield higher power. Thus, we also approximately solve a sparse PCA objective for various sparsity levels $\ell$:
\begin{equation}\label{eq::sparse_pca}
    \hat v \approx \argmax_{\|v\|_2 = 1} v^T \hat C v \text{ s.t. } \|v\|_0 \le \ell,
\end{equation}
where $\|v\|_0 = |\{i \in [p] : v_i \ne 0\}|$ denotes the number of nonzero entries of $v$. Exactly solving this optimization problem is computationally intractable, so we use a simple greedy approximation (Appendix \ref{appendix::greedy_sparse_pca}). We compute estimators $\hat v_1, \dots, \hat v_I$ for $I=10$ values of $\ell$, evenly spaced between $20$ and $p$.

\underline{Step 2: measuring out-of-sample performance.} For each $\hat v_i$, we now estimate the out-of-sample error of the new model. Formally, for any time point $t > T_0$ and asset $j$ in the $m$th tile, we check whether $\hat v_i$ allows us to predict the mosaic residual $\hat\epsilon_{t,j}$ from $\hat\epsilon_{t,-G_m}$, the set of residuals at time $t$ which are not in the same tile as $\hat\epsilon_{t,j}$. To do this, we first estimate $Z_t$, the missing factor return, using $\hat\epsilon_{t,-G_m}$:
$$\hat Z_t^{(i,j)} = \frac{\hat\epsilon_{t,-G_m}^T \hat v_{i,-G_m}}{\|\hat v_{i,-G_m}\|_2^2} \in \R.$$
Then, we set $\hat \gamma_{t,j}^{(i)} \defeq \hat v_{i,j} \cdot \hat Z_t^{(i,j)}$ to be the out-of-sample OLS estimate of $\hat\epsilon_{t,j}$ based on $\hat Z_t^{(i,j)}$. Note that $\hat \gamma_{t,j}^{(i)}$ depends only on $\hat \epsilon_{t,-G_m}$ and the first fold of the dataset, so if the two folds of the dataset are independent and $\mcH_0$ holds, then $\hat\gamma_{t,j}^{(i)}$ is independent of $\hat\epsilon_{t,j}$. In contrast, if $\mcH_0$ is violated and $\hat v_i$ explains additional correlations among the assets, then $\hat\gamma_{t,j}^{(i)}$ should predict $\hat\epsilon_{t,j}$. As an aggregate measure of model performance, we compute the out-of-sample $R^2$ of these new predictions:
\begin{equation*}
    \hat r_i^2 = 1 - \frac{\sum_{t=T_0+1}^T \sum_{j=1}^p \left(\hat\gamma_{t,j}^{(i)} - \hat \epsilon_{t,j} \right)^2}{\sum_{t=T_0+1}^T \sum_{j=1}^p \hat\epsilon_{t,j}^2}.
\end{equation*}

Inspired by \cite{owen2016bicross, owen2009bicross}, we refer to this as a mosaic \textit{bi-cross validation} (BCV) $R^2$. Our final test statistic is the maximum $R^2$ over all sparse-PCA estimates $\hat v_0, \dots, \hat v_I$:
\begin{equation}
    S = \max(\hat r_0^2, \dots, \hat r_I^2).
\end{equation}
By taking the maximum out-of-sample $R^2$ values, we hope to gain power no matter the underlying sparsity level of $v$. We compute this statistic in sliding windows of up to $350$ observations on the second fold of data, and we check for statistical significance using the mosaic permutation test.

Figure \ref{fig::r2_plot} shows that the maximum BCV $R^2$ value is negative and statistically insignificant for the energy sector, and it is positive (usually $\approx 1\%$) and highly significant for financials. In line with our prior analyses, the test statistic value is small, suggesting that adding an estimated factor only slightly improves the model. Interestingly, the maximum BCV $R^2$ statistic is \textit{negative} but occasionally statistically significant for the healthcare sector. In other words, adding an extra estimated factor can \textit{degrade} model performance, but under the null, we would expect to see an even larger degradation. This result could be a false positive due to multiplicity; however, it also aligns with prior theoretical analysis of PCA and factor models. For example, \cite{paul2007} showed that when testing whether a covariance matrix equals the identity, if the population maximum eigenvalue is slightly larger than $1$, the empirical maximum eigenvalue can often be statistically significant even while the empirical maximum eigenvector is approximately orthogonal to the population maximum eigenvector. Similarly, \cite{owen2016bicross} emphasized that incorporating a missing factor exposure with a small effect size can degrade out-of-sample predictions by increasing their variance. Overall, this analysis suggests that the unexplained correlations among healthcare assets are transient, reasonably small, or both. 

We also continue our ablation test and repeat this analysis after removing the style factors from the BFRE model. As shown by Figure \ref{fig::r2_plot}, the maximum BCV $R^2$ values become larger and are uniformly highly significant. Interestingly, for the energy sector, the maximum BCV $R^2$ values are still small ($\le 1\%$); we leave the question of why to future work. Nonetheless, this analysis gives more evidence that the BFRE model explains the most significant correlations among assets.

\begin{figure}
    \centering
    \includegraphics[width=\linewidth]{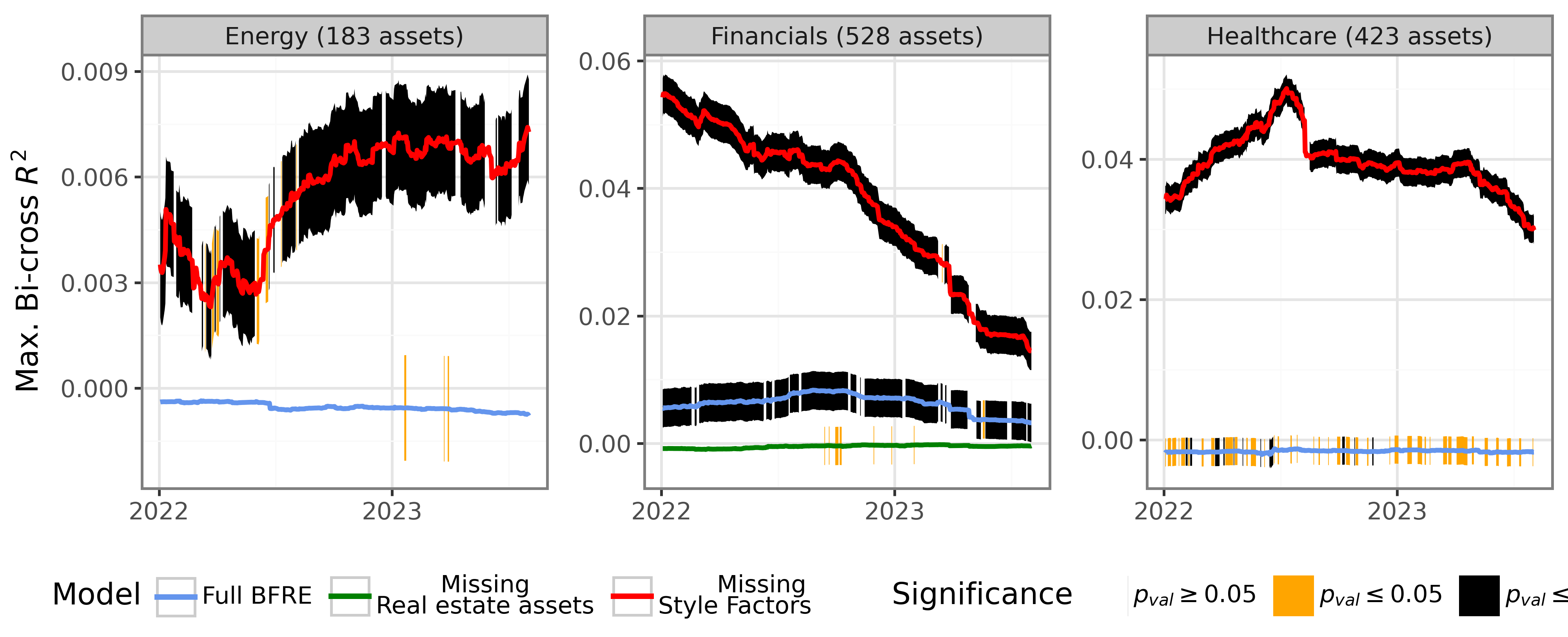}
    \caption{This figure plots the maximum bi-cross validation error statistic over time using a sliding window of $350$ observations, both for the BFRE model as well as the ablation study where we remove the twelve style factors from the BFRE model. For the financial sector, we also perform an analysis where we remove the stocks classified as real estate assets. The black and orange shading shows the significance of the mosaic p-value for this statistic. Occasionally, the p-value is significant even when the $R^2$ is negative, suggesting that the model does not fit perfectly but we do not know how to improve it (see Section \ref{subsec::improvement} for discussion).}
    \label{fig::r2_plot}
\end{figure}

\subsection{A missing factor among financial assets}

Previously, we consistently found the most significant evidence against the null in the financial sector. We now investigate this result. In short, we find evidence of unexplained correlations among certain real estate assets. 

In particular, when running (approximate) sparse PCA as per Eq. \ref{eq::sparse_pca} in the financial sector with $\ell=20$ stocks based on data up to January 2021, $100\%$ of the selected stocks have significant real estate exposure, and $85\%$ are classified primarily as ``real estate" assets in the BFRE model.\footnote{The BFRE model contains industry factors for real estate and real estate investment trusts (REITs), and $85\%$ of the selected assets are more exposed to one of these two industry factors than to any other industry.} 
To confirm this result, we recompute the maximum BCV $R^2$ statistic for the financial sector after removing real estate assets from the analysis. Figure \ref{fig::r2_plot} shows that our method is unable to improve the model for the financial sector after doing so. That said, this does not rule out the possibility that a more sophisticated analysis could detect additional unexplained correlations among financial stocks.

Figure \ref{fig::oos_corr} also plots an out-of-sample estimate of the correlation matrix of the (estimated) residuals of the top $\ell=10$ assets selected by the sparse PCA analysis. The assets are ordered by the sign of the estimated maximum eigenvector, and the plot confirms that assets with the same sign are positively correlated out-of-sample, and assets with different signs are negatively correlated. This result gives more confirmation that there are unexplained correlations among real estate assets. That said, the correlation patterns in Figure \ref{fig::oos_corr} are relatively sparse---for example, the residuals of MGIC Investment, the Radian Group, and the Essent group (which all provide mortgage insurance) are highly correlated with each other, but they are only weakly correlated with the other selected assets. This supports our finding that the existing model explains the most significant correlations among the assets.

\begin{figure}
    \centering
    \includegraphics[width=\linewidth]{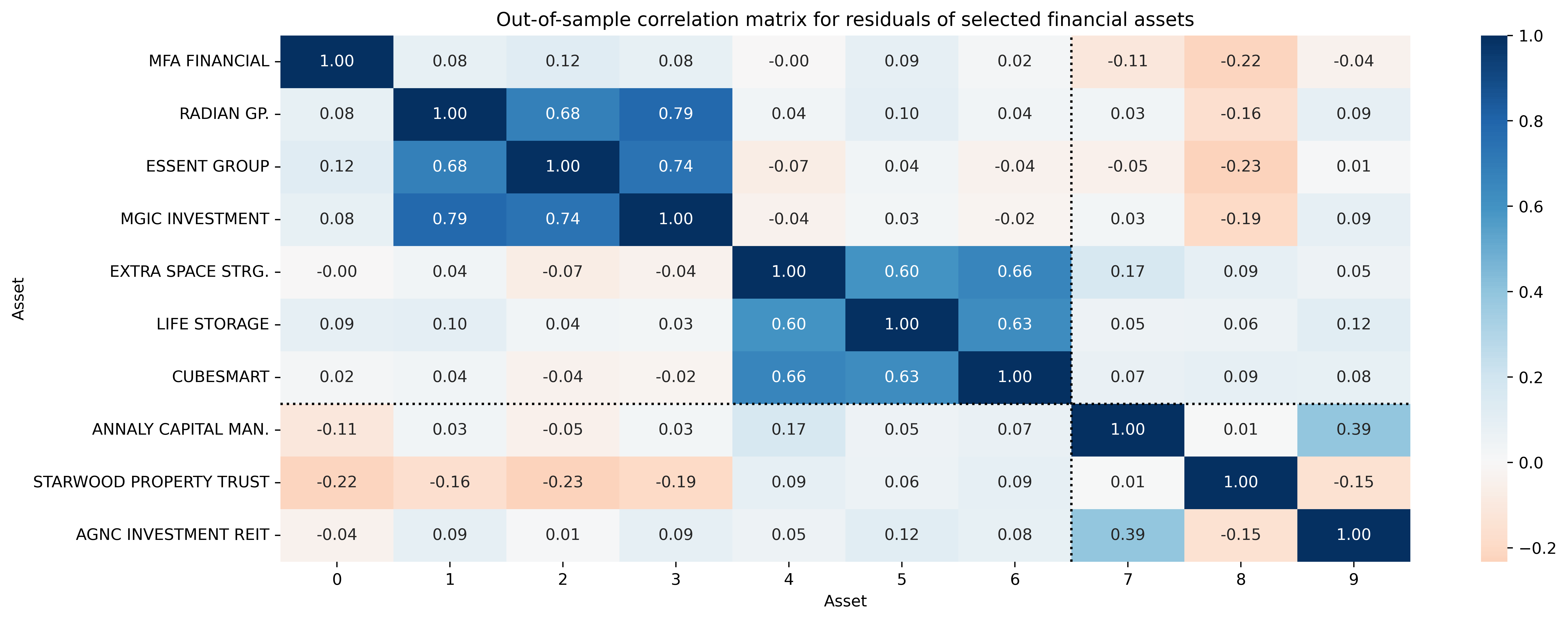}
    \caption{This figure shows the correlation matrix of the estimated residuals of the top ten assets selected by the sparse PCA analysis in Eq. \ref{eq::sparse_pca}. Note that the sparse PCA is performed on data up to January 2021, and the correlation estimates are computed out-of-sample using data after January 2021. The assets are shown in order of the value of the estimated eigenvector $\hat v$, and the dotted black line shows the point at which the eigenvector's entries switch from negative to positive. As expected, we see largely positive correlations on the block diagonal outlined by the black lines and largely negative correlations otherwise.}
    \label{fig::oos_corr}
\end{figure}

\section{Extensions}\label{sec::extensions}

\subsection{Adaptive choices of test statistic}\label{subsec::adaptive_test_stat}

How can one choose the most powerful test statistic from a set of candidates $S_1(\hat\epsilon), \dots, S_d(\hat \epsilon)$? We have already discussed a few answers to this question, such as (i) setting the final statistic $S = \max_{i=1}^d S_i(\hat\epsilon)$ to be the maximum of the candidates or (ii) using a single statistic which employs cross-validation to adaptively estimate nuisance parameters such as the sparsity level. However, this section explores another powerful and efficient technique.

As a concrete example, consider the \textit{quantile of maximum absolute correlations} (QMC) statistic. Recall that $\hmc_j = \max_{j' \ne j} |\hat C_{j,j'}|$ is the maximum absolute estimated correlation between the residuals of asset $j$ and another asset. The QMC statistic is the $\gamma$ empirical quantile of the estimated maximum absolute correlations:
\begin{equation}\label{eq::qmc}
    S_\gamma(\hat\epsilon) = Q_{\gamma}\left(\left\{\hmc_j : j \in [p]\right\}\right).
\end{equation}
Compared to the MMC statistic from Section \ref{sec::intro}, the QMC statistic is designed to increase power in sparse settings. For example, if only $1\%$ of the residuals are correlated, the MMC statistic will be corrupted by noise, whereas the value of the QMC statistic with $\gamma = 0.99$ will depend primarily on the $\hmc_j$ values for non-null assets. However, how can we choose $\gamma$ from candidates $\gamma_1, \dots, \gamma_d$? 

To choose among a set of $d$ candidate statistics $S_1(\hat\epsilon), \dots, S_d(\hat\epsilon)$, consider any function $f(\hat\epsilon, \tilde\epsilon^{(1)}, \dots, \tilde\epsilon^{(R)})$ which peeks at $\hat\epsilon$ and its mosaic permutations $\{\tilde\epsilon^{(r)}\}_{r=1}^R$ and aggregates evidence among all $d$ test statistics. We refer to $f : \R^{(R+1) \times T \times p} \to \R$ as a ``meta test-statistic."
E.g., $f$ could return the maximum approximate Z-statistic over all $d$ candidates:
\begin{equation}\label{eq::fadapt}
    f(\hat\epsilon, \tilde\epsilon^{(1)}, \dots, \tilde\epsilon^{(R)}) \stackrel{\mathrm{e.g.}}{=} \max_{i=1}^d \frac{S_i(\hat\epsilon) - \sum_{r=1}^R S_i(\tilde\epsilon^{(r)})}{\sqrt{\widehat{\var}(\{S_i(\tilde\epsilon^{(r)})\}_{r=1}^R)}}.
\end{equation}
To compute a p-value based on $f$, 
the key is that $\{\tilde\epsilon^{(r)}\}_{r=0}^R$ are exchangeable, where for notational convenience we set $\tilde\epsilon^{(0)} \defeq \hat\epsilon$. Therefore for any permutation $\pi : [R] \to [R]$, 
\begin{equation}
    f(\tilde\epsilon^{(0)}, \tilde\epsilon^{(1)}, \dots, \tilde\epsilon^{(R)}) \disteq f(\tilde\epsilon^{(\pi(0))}, \tilde\epsilon^{(\pi(1))}, \dots, \tilde\epsilon^{(\pi(R))}).
\end{equation}
Thus, we can compute a p-value based on any meta test-statistic by randomly permuting the order of $\{\tilde\epsilon^{(r)}\}_{r=0}^R$ and checking if this decreases the value of $f$. Algorithm \ref{alg::adaptive_test_stat} formally describes this procedure.

\begin{algorithm}[!h]
\caption{Adaptive meta-test statistic}\label{alg::adaptive_test_stat}
\textbf{Input}: Returns $\bY \in \R^{T \times p}$, exposures $L_t \in \R^{p \times k}$ for $t \in [T]$, tiles $\{(B_m, G_m)\}_{m=1}^M$ and a meta test-statistic $f : \R^{(R+1) \times T \times p} \to \R$.
\begin{steps}[topsep=0pt, itemsep=0.5pt, leftmargin=*]
    \item Construct the mosaic residual estimate $\hat\epsilon \in \R^{T \times p}$ and its permuted variants $\tilde\epsilon^{(r)} \in \R^{T \times p}$ for $r=1,\dots,R$, as described in Algorithm \mosaicpermalgnum. Set $\tilde\epsilon^{(0)} = \hat\epsilon$.
    \item Compute the original meta test-statistic $f(\tilde\epsilon^{(0)}, \dots, \tilde\epsilon^{(R)})$.
    \item Sample uniformly random permutations $\pi_1, \dots, \pi_K : \{0, \dots, R\} \to \{0, \dots, R\}$.
    \item Compute the final adaptive p-value \begin{equation}\label{eq::padapt}
    p_{\mathrm{adaptive}} = \frac{1 + \sum_{\ell=1}^{K} \I\left(f(\tilde\epsilon^{(0)}, \tilde\epsilon^{(1)}, \dots, \tilde\epsilon^{(R)}) \le f(\tilde\epsilon^{(\pi_\ell(0))}, \tilde\epsilon^{(\pi_\ell(1))}, \dots, \tilde\epsilon^{(\pi_\ell(R))})\right)}{1 + K}.
\end{equation}
\end{steps}
\end{algorithm}

\begin{corollary}\label{cor::adaptive_test_stat} $p_{\mathrm{adaptive}}$ is a valid p-value under the assumptions of Theorem \ref{thm::main}.
\end{corollary}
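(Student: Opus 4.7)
The plan is to reduce the adaptive p-value to a standard permutation p-value by showing that the $(K+1)$-tuple of meta-test-statistic values
\[
W_0 \defeq f(\tilde\epsilon^{(0)},\dots,\tilde\epsilon^{(R)}), \quad W_\ell \defeq f(\tilde\epsilon^{(\pi_\ell(0))},\dots,\tilde\epsilon^{(\pi_\ell(R))}) \text{ for } \ell=1,\dots,K,
\]
is exchangeable under $\mcH_0$. Once this is established, the validity $\P(p_{\mathrm{adaptive}} \le \alpha) \le \alpha$ follows from the standard lemma that the rank-based p-value $(1+\sum_{\ell=1}^K \I(W_0 \le W_\ell))/(1+K)$ is stochastically dominated by $\mathrm{Uniform}(0,1)$ whenever $(W_0,\dots,W_K)$ is exchangeable.

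The core step is to first establish exchangeability of the sequence $(\tilde\epsilon^{(0)},\tilde\epsilon^{(1)},\dots,\tilde\epsilon^{(R)})$ itself. By construction in Algorithm~\mosaicpermalgnum, $\tilde\epsilon^{(r)}$ is obtained by applying iid uniform within-tile permutation matrices $P^{(r)} = (P_1^{(r)},\dots,P_M^{(r)})$ to $\hat\epsilon$, while $\tilde\epsilon^{(0)} = \hat\epsilon$ corresponds to the identity. The key invariance (Eq.~\ref{eq::full_disteq}) proved in Theorem~\ref{thm::main} states that $(\hat\epsilon_{(1)},\dots,\hat\epsilon_{(M)}) \disteq (Q_1\hat\epsilon_{(1)},\dots,Q_M\hat\epsilon_{(M)})$ for any independently sampled uniform within-tile permutations $Q_m$. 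Introducing an auxiliary uniform permutation $P^{(0)}$ independent of everything else and setting $\tilde\epsilon_{\mathrm{new}}^{(0)} \defeq P^{(0)}\hat\epsilon$, this invariance gives
\[
(\hat\epsilon,\,\tilde\epsilon^{(1)},\dots,\tilde\epsilon^{(R)}) \disteq (P^{(0)}\hat\epsilon,\,P^{(1)}\hat\epsilon,\dots,P^{(R)}\hat\epsilon),
\]
and the right-hand side is exchangeable because conditional on $\hat\epsilon$ the permutations $P^{(0)},\dots,P^{(R)}$ are iid uniform. Hence $(\tilde\epsilon^{(0)},\dots,\tilde\epsilon^{(R)})$ is exchangeable.

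To lift this to exchangeability of $(W_0,\dots,W_K)$, the same trick is applied at the outer level. Replace the identity permutation $\pi_0$ with an independent uniform random permutation $\pi_0'$ of $\{0,\dots,R\}$; by the exchangeability just established, the joint distribution of $(\tilde\epsilon^{(\pi_0'(0))},\dots,\tilde\epsilon^{(\pi_0'(R))}, \tilde\epsilon^{(\pi_1(\cdot))}, \dots, \tilde\epsilon^{(\pi_K(\cdot))})$ agrees with the original joint distribution where $\pi_0$ is the identity. With $\pi_0',\pi_1,\dots,\pi_K$ iid uniform, the resulting vector $(W_0,\dots,W_K)$ is manifestly exchangeable as a measurable function of iid inputs and of $(\tilde\epsilon^{(0)},\dots,\tilde\epsilon^{(R)})$. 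Applying the standard permutation-test lemma then concludes the proof.

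The main obstacle is conceptual rather than technical: one must carefully articulate why replacing an ``identity'' permutation with a uniform one leaves the joint distribution unchanged. This relies crucially on the fact that the invariance from Theorem~\ref{thm::main} holds in joint distribution (not merely marginally), which allows both the ``inner'' and ``outer'' exchangeability arguments to go through without any assumptions on the meta-statistic $f$.
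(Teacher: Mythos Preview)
Your proposal is correct and follows essentially the same approach as the paper: establish exchangeability of $(\tilde\epsilon^{(0)},\dots,\tilde\epsilon^{(R)})$ (which the paper simply cites from the proof of Theorem~\ref{thm::main}, Eq.~\ref{eq::tilde_eps_exch}, rather than re-deriving), then lift this to exchangeability of $(W_0,\dots,W_K)$ via the ``replace the identity by a uniform permutation'' trick, which is exactly the content of the paper's Lemma~\ref{lem::simple_exch_result}. The only difference is organizational: the paper packages the outer argument as a separate lemma and proves exchangeability directly by fixing an arbitrary $\tau$, whereas you phrase both levels via auxiliary uniform permutations and conditional-iid reasoning, but the underlying mechanism is identical.
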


Algorithm \ref{alg::adaptive_test_stat} uses \textit{two} distinct layers of permutations: one set of mosaic permutations to construct $\{\tilde\epsilon^{(r)}\}_{r=1}^R$, and a second set of simple random permutations $\pi_1, \dots, \pi_K$ to compute an adaptive p-value based on $f$. The second layer of permutations adds conceptual complexity, but it yields a highly computationally efficient procedure compared to one which (e.g.) repeatedly cross-validates an expensive machine learning algorithm. 

\subsection{Adaptively choosing the tiling}\label{subsec::adpativetiling}

Although Section \ref{subsec::defaulttiling} gives a good default choice of tiling, another option is to \textit{learn} a good choice of tiles that ``separate" assets whose idiosyncratic returns are correlated. However, in general, if the tiling is chosen using $\bY$, $\hat \epsilon$ will not necessarily be invariant to any permutations under the null because of the dependence between $\{(B_m, G_m)\}_{m=1}^M$ and $\bY$. In other words, naive ``double dipping" leads to inflated false positives. 

However, we can sequentially choose the $m$th tile $(B_m, G_m)$ based on the estimated residuals from the previous $m-1$ tiles as long as our choice of $(B_m, G_m)$ does not depend on the \emph{order} of the rows within each of the previous $m-1$ tiles. Precisely, suppose that we can write $B_m$, $G_m$ as functions $b_m, g_m$ of the previous tiles as well as auxiliary randomness $U_m \iid \Unif(0,1)$:  
\begin{equation}\label{eq::bmgm}
    B_m = b_m(\hat\epsilon_{(1)}, \dots, \hat\epsilon_{(m-1)}, U_m) \text{ and } G_m = g_m(\hat\epsilon_{(1)}, \dots, \hat\epsilon_{(m-1)}, U_m).
\end{equation}
If $b_m$ and $g_m$ are invariant to permutations of the previous tiles, then $\pval$ will be a valid p-value, as stated by the following lemma. Intuitively, the proof of the lemma follows from the fact that Eq. \ref{eq::full_disteq} still holds after conditioning on the tiles $\{(B_m, G_m)\}_{m=1}^M$ (see Appendix \ref{appendix::proofs} for a proof).

\begin{lemma}\label{lem::adaptive_tile} Suppose for each $m=1,\dots, M$, $(B_m, G_m)$ does not depend on the order of the rows within the first $m-1$ tiles. Formally, for any permutation matrices $P_j \in \R^{|B_j| \times |B_j|}$, we assume that
\begin{equation*}
    b_m(\hat\epsilon_{(1)}, \dots, \hat\epsilon_{(m-1)}, U_m) 
    =
    b_m(P_1 \hat\epsilon_{(1)}, \dots, P_{m-1} \hat\epsilon_{(m-1)}, U_m),
\end{equation*}
and the same holds when replacing $b_m$ with $g_m$. Suppose also that Assumption \ref{assump::batched_exposure} holds and, for simplicity, that $\{\epsilon_{t,j}\}_{t=1}^T \iid P_j$ is i.i.d. for each $j \in [p]$. Then $p_{\mathrm{val}}$ is still a valid p-value testing $\mcH_0$ as in Theorem \ref{thm::main}.
\end{lemma}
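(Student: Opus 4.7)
The plan is to reduce the adaptive case to the same exchangeability skeleton used in Theorem \ref{thm::main} by proving, via induction on $m$, the joint distributional identity
\begin{equation*}
\bigl((B_j, G_j, \hat\epsilon_{(j)})\bigr)_{j=1}^M \;\disteq\; \bigl((B_j, G_j, P_j \hat\epsilon_{(j)})\bigr)_{j=1}^M,
\end{equation*}
where $P_1, \dots, P_M$ are independent uniformly random permutation matrices of the appropriate sizes (with the $P_j$ coupled to the realized tile sizes $|B_j|$). Once this identity is in hand, the validity of $\pval$ follows by the standard permutation-test exchangeability argument: conditional on the realized tiling, the collection $(\hat\epsilon, \tilde\epsilon^{(1)}, \dots, \tilde\epsilon^{(R)})$ is exchangeable, so $\P(\pval \le \alpha) \le \alpha$ exactly as in the proof of Theorem \ref{thm::main}.

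The base case $m=1$ is immediate because $(B_1, G_1) = (b_1(U_1), g_1(U_1))$ depends only on the independent auxiliary randomness $U_1$; conditional on $(B_1, G_1)$, the i.i.d. row structure of $\epsilon$ (implied by the per-asset i.i.d. assumption together with $\mcH_0$) makes $\epsilon_{(1)}$ row-exchangeable, and thus $\hat\epsilon_{(1)} = \epsilon_{(1)} H_1 \disteq P_1 \hat\epsilon_{(1)}$ for any random $P_1$ independent of $\epsilon$. For the inductive step, assume the identity for tiles $1,\dots,m-1$. By the permutation-invariance hypothesis on $b_m, g_m$, the pair $(B_m, G_m) = (b_m,g_m)(\hat\epsilon_{(1)},\dots,\hat\epsilon_{(m-1)}, U_m)$ is a deterministic function of inputs that are themselves invariant under $(\hat\epsilon_{(j)}) \mapsto (P_j \hat\epsilon_{(j)})$; this extends the inductive identity to include $(B_m, G_m)$. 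To add $\hat\epsilon_{(m)}$, I would condition on $(B_1,G_1,\hat\epsilon_{(1)}),\dots,(B_{m-1},G_{m-1},\hat\epsilon_{(m-1)})$ as well as on $(B_m, G_m)$, and argue that the conditional law of $\epsilon_{(m)}$ is still i.i.d. across rows, whence $\hat\epsilon_{(m)} = \epsilon_{(m)} H_m \disteq P_m \hat\epsilon_{(m)}$ independently of everything that came before.

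The main obstacle is justifying this conditional row-exchangeability of $\epsilon_{(m)}$ when tiles are allowed to share assets: an asset $j \in G_m$ may also lie in some earlier $G_{j'}$ with $B_{j'}$ disjoint from $B_m$, so conditioning on the earlier $\hat\epsilon_{(j')}$ leaks information about $(\epsilon_{t,j})_{t \in B_{j'}}$. The lemma's per-asset i.i.d. hypothesis $\{\epsilon_{t,j}\}_{t=1}^T \iid P_j$ is exactly what resolves this: the time coordinates in $B_m$ and $B_{j'}$ are disjoint, so $(\epsilon_{t,j})_{t \in B_m}$ is independent of $(\epsilon_{t,j})_{t \in B_{j'}}$, and combined with $\mcH_0$ across assets, the conditional distribution of $\epsilon_{(m)}$ given all earlier tile data remains product-of-i.i.d.-rows. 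One must also be careful that the conditioning on $(B_m, G_m)$ itself does not bias this conditional law; this is handled by the invariance hypothesis, which guarantees that the event $\{(B_m, G_m) = (b, g)\}$ depends on the earlier $\hat\epsilon_{(j)}$ only through row-permutation-invariant functions, and hence is independent of the row-order within each previous tile conditional on the unordered row contents. With these pieces assembled, the inductive step closes, and the lemma follows from the validity argument of Theorem \ref{thm::main} applied conditionally on the realized tiling.
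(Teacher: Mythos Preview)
Your inductive argument is correct and constitutes a genuinely different route from the paper's. The paper does not induct on $m$; instead, it fixes a realized tiling $\tau=\{(\beta_m,\gamma_m)\}$, invokes the i.i.d.\ hypothesis to obtain the \emph{marginal} identity $(P_1\epsilon_{\beta_1,\gamma_1},\dots,P_M\epsilon_{\beta_M,\gamma_M})\disteq(\epsilon_{\beta_1,\gamma_1},\dots,\epsilon_{\beta_M,\gamma_M})$, and then upgrades this to a \emph{conditional} identity by showing the event $\{\mcT=\tau\}$ is itself invariant under the permutations. Concretely, the paper defines $\mcT^{\mathrm{perm}}$ as the tiling that the algorithm would output on the permuted data and observes that, on $\{\mcT=\tau\}$, the row-invariance hypothesis on $b_m,g_m$ forces $\mcT^{\mathrm{perm}}=\mcT$, whence $\I(\mcT=\tau)=\I(\mcT^{\mathrm{perm}}=\tau)$. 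Conditioning on this symmetric event then yields Eq.~\ref{eq::cond_local_exch} in one stroke, and the rest follows from Theorem~\ref{thm::main}.

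Your approach instead exploits the sequential structure directly, using the disjointness of tiles together with the full mutual independence of the entries $\epsilon_{t,j}$ (which is exactly what per-asset i.i.d.\ plus $\mcH_0$ provides) to argue that $\epsilon_{(m)}$ is conditionally independent of the past and hence row-exchangeable. This is more constructive and arguably mirrors the algorithm more closely. The paper's coupling trick, on the other hand, is shorter and generalizes more transparently to the weaker local-exchangeability setting (Assumption~\ref{assump::localexch_support}), since it only needs the marginal row-exchangeability of $\epsilon$ with respect to any tiling in the support of $\mcT$, not the stronger mutual independence across disjoint tiles that drives your inductive step.
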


\begin{remark} In Appendix \ref{appendix::proofs}, we relax the i.i.d. assumption in Lemma \ref{lem::adaptive_tile} to a local exchangeability assumption (see Remark \ref{rem::localexch_support}).
For simplicity, we defer this extension to Appendix \ref{appendix::proofs}. 
\end{remark}

Lemma \ref{lem::adaptive_tile} allows us to use many different methods to adaptively choose the tiling. For example, our default non-adaptive choice of tiling involved randomly partitioning the assets into $D$ groups $[p] = G_{i,1} \cup \dots \cup G_{i,D}$ for batch $i \in [I]$ of the observations. Instead, one could make an adaptive choice satisfying the permutation-invariance constraint from Lemma \ref{lem::adaptive_tile} as follows. For $i=1$, we choose the groups randomly as before. Then, sequentially for $i \ge 2$, we let $\hat\Sigma^{(i)}$ be the empirical covariance estimator which (a) only uses information from the first $i-1$ batches and (b) only uses information from \textit{within} tiles (and not between tiles). Precisely, fix any pair of assets $j, j' \in [p]$. Let $A = \left\{t \in [T] : (j, j') \in G_{i_0,d} \text{ for some } i_0 \le i, d \in [D] \right\}$ be the set of time points in the first $i-1$ batches where assets $j,j'$ are in the same group. (Note $A$ depends on $i, j, j'$ but for simplicity, we suppress this dependence). The within-tile covariance estimate is defined as
\begin{equation*}
    \hat\Sigma^{(i)}_{j,j'} = \frac{1}{|A|} \sum_{t \in A} \hat\epsilon_{t,j} \hat\epsilon_{t,j'} - \left(\frac{1}{|A|} \sum_{t \in A} \hat\epsilon_{t,j} \right) \left(\frac{1}{|A|} \sum_{t \in A} \hat\epsilon_{t,j'}\right).
\end{equation*}
It is easy to see that $\hat\Sigma^{(i)}$ does not depend on the ordering of the rows of any of the tiles. Thus, we can choose the groups for the $i$th batch based on $\hat\Sigma^{(i)}$ while preserving validity. In particular, we suggest choosing partitions which approximately solve the following optimization problem:
\begin{equation}\label{eq::estg}
    G_{i,1}, \dots, G_{i,D} = \argmax_{G_{i,1}, \dots, G_{i,D} \text{ partition } [p]} \sum_{j,j' \in [p]} |\hat \Sigma^{(i)}_{j,j'}| \cdot \I(j, j' \text{ are in different groups}).
\end{equation}
In other words, for each $i$, we choose $G_{i,1}, \dots, G_{i,D}$ to maximize the sum of the absolute estimated correlations between residuals that are not in the same group. Although exactly solving this optimization problem is computationally prohibitive, we can solve it approximately using a greedy randomized algorithm outlined in Appendix \ref{appendix::greedy_adaptive_tiles}. 

\subsection{Allowing the exposures to change with each observation}\label{subsec::nonconst_lt}

Motivated by our real applications, our analysis so far assumes that the exposures $L_t \in \R^{p \times k}$ are constant within tiles. However, if $L_t$ changes with every observation, a fix is to replace $L_t$ with an \textit{augmented} exposure matrix $L_t\opt$:
\begin{equation}\label{eq::aug_exposures}
    L_t\opt \defeq \begin{cases} 
    \begin{bmatrix} L_t & L_{t+1} \end{bmatrix} & t \text{ is odd} \\
    \begin{bmatrix} L_{t-1} & L_{t} \end{bmatrix} & t \text{ is even } \end{cases} \in \R^{p \times 2k}.
\end{equation}
By construction, $L_t\opt$ only changes every two observations. E.g., for the first two time points, $L_1\opt = L_2\opt = \begin{bmatrix} L_1 & L_2 \end{bmatrix} \in \R^{p \times 2k}.$ Furthermore, since $L_t$ is a submatrix of $L_t\opt$, if the null holds for the original model $Y_t = L_t X_t + \epsilon_t$, it also holds for the augmented model $Y_t = L_t\opt X_t\opt + \epsilon_t$, since we can set $X_t\opt \in \R^{2k}$ to equal $(X_t, 0)$ for even $t$ and $(0, X_t)$ for odd $t$. Thus, after augmenting the exposures, we can apply the mosaic permutation test (all tiles will contain exactly two observations to ensure $L_t\opt$ is constant within tiles). However, the cost is that we must estimate twice as many nuisance parameters when estimating $\hat\epsilon$.

\section{Do the mosaic residual estimates cause a loss of power?}\label{sec::sims}

Our method requires the test statistic $S(\hat\epsilon)$ to be a function of \textit{mosaic} residual estimates instead of a function $S(\hat\epsilon\ols)$ of OLS residual estimates. 
We hope that the mosaic statistic is a good proxy for the OLS statistic (which indeed seems to be the case in Figure \ref{fig::motivation_soln}), so this section analyzes via simulations whether the mosaic test has lower power than an oracle test based on the OLS statistic. Our simulations also show the effectiveness of the adaptive test statistic introduced in Section \ref{subsec::adaptive_test_stat}.

We conduct semisynthetic simulations where the exposures $L_t$ are constant over time and equal to the BFRE exposures for the financial sector on \stardate. We sample the factor returns $X_{tk}$ as i.i.d. $t_4$ variables. The residuals satisfy $\epsilon_t = \gamma_t + Z_t v$ for $\gamma_{t,j} \iid t_4, Z_t \iid t_4$ and $v \in \R^p$. In words, the residuals are i.i.d. $t_4$ variables plus a ``missing" factor component based on an extra factor return $Z_t \in \R$ and corresponding exposures $v \in \R^p$. We let $v$ have $\ceil{s_0 p}$ nonzero coordinates chosen uniformly at random, and the nonzero values equal $\frac{\rho}{\sqrt{\ceil{s_0 p}}}$---thus, $\rho$ measures the signal size and $s_0$ measures the sparsity. All simulations use $T=50$ observations. We compare three methods: 
\begin{enumerate}[itemsep=0.5pt, topsep=0pt, leftmargin=*]
    \item First, we apply the mosaic permutation test (MPT) with the default choice of tiling and the adaptive quantile maximum correlation (QMC) statistic from Eq. \ref{eq::qmc}. Since we do not know the optimal quantile $\gamma$ a priori, we use Eqs. \ref{eq::fadapt} and \ref{eq::padapt} to compute an adaptive p-value which aggregates evidence across the test statistics using $\gamma \in \{0.01, 0.1, 0.25, 0.5, 0.75, 0.9, 0.99\}$.
    \item Second, we compare this to the performance of the MPT with an ``oracle" QMC statistic, where we pick the single value of $\gamma$ which maximizes power---this is an oracle because, in practice, the optimal value of $\gamma$ would be unknown. 
    \item Lastly, we also compute the oracle QMC statistic applied to the OLS residuals $\hat\epsilon\ols$. We check the significance of the OLS oracle QMC statistic by comparing it to its true null distribution (with $\rho = 0$). This is a ``doubly oracle" test statistic since, in a real data analysis, we would not know the optimal choice of $\gamma$, nor would we know the OLS statistic's null distribution. That said, comparing to this OLS statistic will help us understand whether the mosaic residual estimates cause a loss in power. 
\end{enumerate}

Figure \ref{fig::mainsims} shows the results. For various sparsities $s_0$ and signal sizes $\rho$, the MPT does not lose much power compared to the oracle tests, suggesting that the adaptive QMC statistic effectively adapts to the unknown sparsity level. Furthermore, the MPT oracle and OLS double oracle---which use the same test statistic but are applied to different residual estimates---have similar power. Indeed, the average power difference is $3\%$, and the maximum power difference is $10\%$. This result should not be too surprising, since $\hat\epsilon\ols$ and $\hat\epsilon$ are estimating the same residuals $\epsilon$, and thus the OLS and mosaic statistic should be highly correlated. Indeed, in our application (Figure \ref{fig::motivation_soln}), in all three sectors, the mosaic statistics are empirically $\ge 85\%$ correlated with the OLS statistics. Thus, in this simulation, the mosaic test is competitive with an oracle method based on OLS residuals.

\begin{figure}
    \centering
    \includegraphics[width=\linewidth]{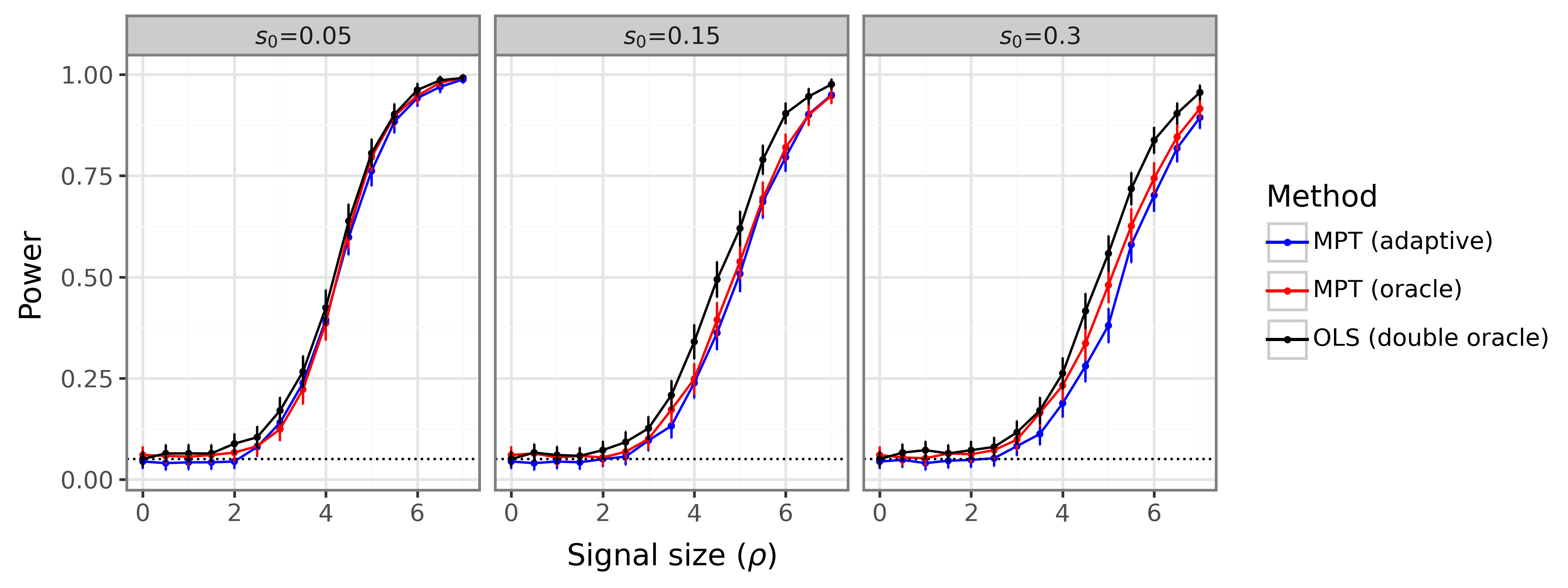}
    \caption{This figure shows the power of the mosaic permutation test with an adaptive QMC statistic as well as the power of the two oracles from Section \ref{sec::sims}. It shows that (1) the adaptive QMC statistic from Section \ref{subsec::adaptive_test_stat} effectively adapts to the unknown sparsity of the alternative, and (2) the MPT does not lose much power compared to an oracle procedure using the OLS residuals $\hat\epsilon\ols$ in place of the mosaic residuals $\hat\epsilon$. For example, the average power difference between the MPT oracle and OLS double oracle (which use the same test statistic but applied to different residual estimates) is $3\%$, and the maximum power difference is $10\%$.
    The dotted black line shows the nominal level $\alpha = 0.05$. All methods control the false positive rate when $\rho = 0$ and $\mcH_0$ holds.} 
    \label{fig::mainsims}
\end{figure}

\section{Discussion}\label{sec::discussion}

This paper introduces the mosaic permutation test, an exact and nonparametric goodness-of-fit test for factor models with known exposures. In an empirical application to the BlackRock Fundamental Equity Risk model, we demonstrate how to use the mosaic permutation test to diagnose financial factor models. Additionally, our simulations and theory show the power and flexibility of the mosaic permutation test, which can be used in combination with a wide variety of test statistics to quickly detect unexplained correlations among variables. Lastly, although this paper focuses on applications to financial factor models, our methods can also be applied to test the goodness-of-fit of pre-existing factor models in any domain, including psychology \citep[e.g.,][]{mccrae1992fivefactorpsych} and genetics \citep[e.g.,][]{gain2021geneticfactormodels}. That said, our work leaves open several possible directions for future research.
\begin{itemize}[leftmargin=*]
    \item \underline{Confidence intervals}. It would be interesting to produce confidence intervals that quantify \textit{how severely} the null is violated. Indeed, this might help analysts understand the economic significance of any rejections made by our methods.
    \item \underline{Anytime-valid tests}. If one sequentially produces many mosaic p-values over time, one will eventually obtain a false positive by chance \citep{ramdas2023gametheoretic}. This limitation should not affect our empirical findings, since the p-values in (e.g.) Figure \ref{fig::mmc_pval} are small enough to remain significant even after an appropriate multiplicity correction. Nonetheless, producing an anytime-valid test may be worthwhile.
    \item \underline{Factor models with known factor returns}. This paper analyzes factor models with known exposures $L_t$. However, in some cases, it may be more realistic to assume the factor returns $X_t$ are known. In future work, we plan to extend the mosaic permutation test to apply to this new setting (among others). 
    \item \underline{Using regularization}: Our methods currently require the use of \textit{unregularized} OLS regressions within each tile to estimate the residuals (see Remark \ref{rem::regularization}). However, to increase power, it might be valuable to develop methods that can use regularization.
    \item \underline{Robustness}: 
    It might be valuable to develop tests that are robust to slight inaccuracies in the exposures $L_t$. Indeed, this could also help relax the assumption that $L_t$ is locally constant (Assumption \ref{assump::batched_exposure}), since small within-tile changes in $L_t$ could be viewed as small ``inaccuracies." Similarly, it would be useful to develop theory quantifying the robustness of the existing test, i.e., by bounding the excess error in some interpretable way.
\end{itemize}

\section{Code and data availability}

We implemented our methods in the python package \texttt{mosaicperm}. All other code used in the paper is available at \url{https://github.com/amspector100/mosaic_factor_paper/}. Although we are not able to make the BFRE model data available, we have provided a publicly available sample dataset that allows one to obtain qualitatively similar results (see the GitHub repository for more details).

\section{Acknowledgements}

The authors would like to thank John Cherian, Kevin Guo, Guido Imbens, Lihua Lei, and Bobby Luo for valuable suggestions. A.S. was partially supported by the Two Sigma Graduate Fellowship Fund, the Citadel GQS PhD Fellowship, and a Graduate Research Fellowship from the National Science Foundation. R.F.B. was supported by the Office of Naval Research via grant N00014-20-1-2337 and by the National Science Foundation via grant DMS-2023109. E.J.C. was supported by the Office of Naval Research grant N00014-20-1-2157, the National Science Foundation grant DMS-2032014, and the Simons Foundation under award 814641. We disclose that R. Kahn is an employee of BlackRock, and E. Candes and T. Hastie are scientific advisors at BlackRock (see https://www.blackrock.com/corporate/ai). This information is relevant since the paper analyzes a BlackRock factor model. However, the authors certify that they have no particular interest in any of the results in the paper. We also disclose that a BlackRock employee reviewed this paper before its circulation. They approved the manuscript with no modifications.

\bibliography{ref}
\bibliographystyle{apalike}
\appendix

\section{Proofs}\label{appendix::proofs}

\subsection{Proof of Theorem \ref{thm::main} and Corollary \ref{cor::adaptive_test_stat}}\label{subsec::mainproofs}

We now prove Theorem \ref{thm::main} and Corollary \ref{cor::adaptive_test_stat}. (Theorem \ref{thm::validitybasic} is a case of Theorem \ref{thm::main} with $M=2$ tiles.) Then, we consider the case where some data are missing (Corollary \ref{cor::missingdata}).
 
\begingroup
\def\thetheorem{\ref{thm::main}} 
\begin{theorem} Suppose Assumptions \ref{assump::batched_exposure}-\ref{assump::localexch}. Then under $\mcH_0$, Eq. \ref{eq::hatepspval2} defines a valid p-value satisfying $\P(\pval \le \alpha) \le \alpha$ for any $\alpha \in (0,1)$. 
\begin{proof} The proof is in three steps. The first step reviews two useful consequences of Assumption \ref{assump::localexch} (namely Equations \ref{eq::mainstepone} and \ref{eq::mainsteptwo}). The second step uses these results to show that the mosaic estimate $\hat\epsilon$ and its permuted variants $\tilde\epsilon^{(1)}, \dots, \tilde\epsilon^{(R)}$ are jointly exchangeable. The final step shows the validity of the resulting p-value. These steps use standard statistical results about exchangeability, but we include them all for completeness.

\underline{Step 1}: Recall that $\epsilon_{(m)} \defeq \epsilon_{B_m, G_m} \in \R^{|B_m| \times |G_m|}$ denotes the residuals in the $m$th tile, for $m \in [M]$. Observe that under $\mcH_0$ and Assumption \ref{assump::localexch}, for any permutation matrices $P_1 \in \R^{|B_1| \times |B_1|}, \dots, P_M \in \R^{|B_M| \times |B_M|}$, we have the distributional equality:
\begin{equation}\label{eq::mainstepone}
    (P_1 \epsilon_{(1)}, \dots, P_M \epsilon_{(M)}) \disteq (\epsilon_{(1)}, \dots, \epsilon_{(M)}).
\end{equation}
The above equation holds because each column (asset) in $\epsilon$ is separately permuted with a permutation that only swaps entries \textit{within} tiles (not between tiles). Local exchangeability (Assumption \ref{assump::localexch}) guarantees that this does not change the \textit{marginal} distribution of each column of $\epsilon$, and the null $\mcH_0$ guarantees that all columns of $\epsilon$ are independent.

Notably, this implies the following result. Let $P_m^{(r)} \in \R^{|B_m| \times |B_m|}$ denote the randomly sampled permutation matrix for tile $m$ for the  $r = 1, \dots, R$ randomizations in Algorithm \mosaicpermalgnum, and let $P_m^{(0)} = I_{|B_m|}$ denote the identity matrix. Then the following holds:

\begin{equation}\label{eq::mainsteptwo}
    \left[(P_1^{(r)} \epsilon_{(1)}, \dots, P_M^{(r)} \epsilon_{(M)})\right]_{r=0}^R \text{ are exchangeable.}
\end{equation}

This is a standard consequence of Equation \ref{eq::mainstepone}, so we defer the proof to Lemma \ref{lem::technical_exch_result}.

\underline{Step 2}: Recall from Equation \ref{eq::hateps_tile_def} that the $m$th estimated tile satisfies $\hat \epsilon_{(m)} \defeq \epsilon_{(m)} H_m$ for a deterministic projection matrix $H_m \in \R^{|G_m| \times |G_m|}$. This notation implicitly uses the assumption that the exposures do not change within tiles (Assumption \ref{assump::batched_exposure}), as otherwise, $H_m$ might change for different rows of $\hat\epsilon_{(m)}$. Since $H_m$ is a deterministic matrix for $m \in [M]$, Equation \ref{eq::mainsteptwo} immediately implies
\begin{equation}
    \left[(P_1^{(r)} \epsilon_{(1)} H_1, \dots, P_M^{(r)} \epsilon_{(M)} H_M)\right]_{r=0}^R \text{ are exchangeable.}
\end{equation}
Recall that by definition, $P_m^{(0)}$ is the identity matrix, so $\hat\epsilon$ is simply a deterministic concatenation of $(P_1^{(0)} \epsilon_{(1)} H_1, \dots, P_M^{(0)} \epsilon_{(M)} H_M)$. Similarly, by definition, $\tilde\epsilon^{(r)}$ is just the appropriate concatenation of $(P_1^{(r)} \epsilon_{(1)} H_1, \dots, P_M^{(r)} \epsilon_{(M)} H_M)$. Thus, the previous equation implies
\begin{equation}\label{eq::tilde_eps_exch}
    (\hat \epsilon, \tilde\epsilon^{(1)}, \dots, \tilde\epsilon^{(R)}) \text{ are exchangeable.}
\end{equation}

\underline{Step 3}: We now show that $p_{\mathrm{val}}$ is a valid p-value using Step 2. In particular, since $S : \R^{T \times p} \to \R$ is a deterministic function, we know that $(S(\hat \epsilon), S(\tilde\epsilon^{(1)}), \dots, S(\tilde\epsilon^{(R)})) \text{ are exchangeable.}$ This implies that if $\tau$ denotes the rank of $S(\hat\epsilon)$ among $S(\hat\epsilon), S(\tilde\epsilon^{(1)}), \dots, S(\tilde\epsilon^{(R)})$ where ties are broken uniformly at random and smaller ranks denote larger values, then $\tau \sim \Unif(\{1, \dots, R+1\})$. Note, however, that 
\begin{equation*}
    p_{\mathrm{val}} = \frac{\sum_{r=1}^{R+1} \I(S(\hat\epsilon) \le S(\tilde\epsilon^{(r)})) + 1}{R+1} \ge \frac{\tau}{R+1},
\end{equation*}
where the deterministic equality follows because $\tau$ breaks ties uniformly at random but $p_{\mathrm{val}}$ is defined to always break ties conservatively. Thus, $p_{\mathrm{val}}$ stochastically dominates $\frac{\tau}{R+1}$, proving that $\P(\pval \le \alpha) \le \P(\tau \le (R+1) \alpha) \le \alpha$. This also proves that if there are no ties with probability $1$, then $p_{\mathrm{val}} = \frac{\tau}{R+1} \sim \Unif\left(\{\frac{1}{R+1}, \dots, 1 \}\right)$.
\end{proof}
\end{theorem}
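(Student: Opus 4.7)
The plan is to show that the augmented sequence $(\hat\epsilon, \tilde\epsilon^{(1)}, \dots, \tilde\epsilon^{(R)})$ is exchangeable under $\mcH_0$, and then use the standard permutation-test argument to convert exchangeability of a test statistic's values into a valid p-value. I would organize the argument into three steps corresponding to (i) a distributional invariance of the true residuals $\epsilon$, (ii) a lift of that invariance to the estimator $\hat\epsilon$, and (iii) the p-value validity.

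In the first step I would fix arbitrary permutation matrices $P_m \in \R^{|B_m|\times|B_m|}$ for $m \in [M]$ and argue that
\begin{equation*}
(P_1 \epsilon_{(1)}, \dots, P_M \epsilon_{(M)}) \disteq (\epsilon_{(1)}, \dots, \epsilon_{(M)}).
\end{equation*}
The reason is that each column $\epsilon_{\cdot,j}$ is permuted only by row-swaps within a single tile, so local exchangeability (Assumption \ref{assump::localexch}) preserves the marginal law of each column, while $\mcH_0$ says the columns are independent; independence lets me combine the column-wise equalities into a joint equality. Then I would extend this from a single deterministic permutation to an exchangeable sequence of $R+1$ randomly drawn permutations $\{P_m^{(r)}\}_{r=0}^R$ (with $P_m^{(0)} = I$), which is a routine consequence of the invariance above because the $P_m^{(r)}$ are i.i.d.\ across $r$ and independent of $\epsilon$.

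The second step is to transfer this exchangeability from $\epsilon$ to $\hat\epsilon$. Here I would invoke Assumption \ref{assump::batched_exposure} to write $\hat\epsilon_{(m)} = \epsilon_{(m)} H_m$, where $H_m$ is a single fixed matrix because $L_t$ is constant on $t\in B_m$. Right-multiplication by the fixed $H_m$ is a deterministic, measurable function applied tile-by-tile, so applying it coordinatewise to the exchangeable sequence from step one preserves exchangeability, yielding that $(\hat\epsilon, \tilde\epsilon^{(1)}, \dots, \tilde\epsilon^{(R)})$ is exchangeable. Crucially, the row-permutation $P_m^{(r)}$ commutes with right-multiplication by $H_m$, which is what ties the definition of $\tilde\epsilon^{(r)}$ back to the permuted true residuals.

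Finally, applying the deterministic function $S(\cdot)$ preserves exchangeability of the tuple, so the rank of $S(\hat\epsilon)$ among the $R+1$ values $\{S(\tilde\epsilon^{(r)})\}_{r=0}^R$ (with random tie-breaking) is uniform on $\{1,\dots,R+1\}$; conservative tie-breaking makes $\pval$ stochastically dominate this uniform rank divided by $R+1$, yielding $\P(\pval \le \alpha) \le \alpha$. The main obstacle I anticipate is the first step: carefully justifying why local exchangeability of each individual column combined with $\mcH_0$ implies the joint distributional equality when one permutes all columns simultaneously by tile-respecting (but otherwise unrelated) permutations. Once that joint invariance is in hand, the remaining exchangeability lift and the permutation-test conclusion are largely bookkeeping.
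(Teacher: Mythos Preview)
Your proposal is correct and follows essentially the same three-step structure as the paper's proof: establish distributional invariance of the true tile residuals under within-tile row permutations via $\mcH_0$ plus local exchangeability, lift to exchangeability of $(\hat\epsilon,\tilde\epsilon^{(1)},\dots,\tilde\epsilon^{(R)})$ using $\hat\epsilon_{(m)}=\epsilon_{(m)}H_m$ with $H_m$ fixed by Assumption~\ref{assump::batched_exposure}, and finish with the standard rank argument. One small wording caveat: the $P_m^{(r)}$ are not literally i.i.d.\ across $r$ since $P_m^{(0)}=I$ is deterministic, so the passage from single-permutation invariance to exchangeability of the $(R+1)$-tuple requires a short argument (the paper isolates this as a separate lemma), but your outline already flags this step and the fix is indeed routine.
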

\endgroup

\begingroup
\def\thecorollary{\ref{cor::adaptive_test_stat}}
\begin{corollary} $p_{\mathrm{adaptive}}$ is a valid p-value under the assumptions of Theorem \ref{thm::main}.
\end{corollary}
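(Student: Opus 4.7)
The plan is to reduce the corollary to the joint exchangeability of $(\tilde\epsilon^{(0)}, \tilde\epsilon^{(1)}, \dots, \tilde\epsilon^{(R)})$ that was already established in Step 2 of the proof of Theorem \ref{thm::main} (Eq.\ \ref{eq::tilde_eps_exch}), with the convention $\tilde\epsilon^{(0)} \defeq \hat\epsilon$. Setting $\pi_0 \defeq \mathrm{id}$ and $F_\ell \defeq f(\tilde\epsilon^{(\pi_\ell(0))}, \dots, \tilde\epsilon^{(\pi_\ell(R))})$ for $\ell = 0, 1, \dots, K$, the goal is to show that $(F_0, F_1, \dots, F_K)$ is exchangeable. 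Once this is established, the bound $\P(p_{\mathrm{adaptive}} \le \alpha) \le \alpha$ follows by verbatim transcription of the rank argument in Step 3 of the proof of Theorem \ref{thm::main}, applied now to the $(K+1)$-tuple $(F_0, \dots, F_K)$ rather than to the $(R+1)$-tuple of residual matrices.

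To prove exchangeability of $(F_0, \dots, F_K)$, I would introduce an auxiliary uniformly random permutation $\pi_0^\star$ of $\{0, 1, \dots, R\}$, drawn independently of $\epsilon$, the inner mosaic permutations, and $\pi_1, \dots, \pi_K$, and define $F_0^\star \defeq f(\tilde\epsilon^{(\pi_0^\star(0))}, \dots, \tilde\epsilon^{(\pi_0^\star(R))})$. Since $\pi_0^\star, \pi_1, \dots, \pi_K$ are i.i.d.\ uniform and independent of $E \defeq (\tilde\epsilon^{(0)}, \dots, \tilde\epsilon^{(R)})$, conditional on $E$ the vector $(F_0^\star, F_1, \dots, F_K)$ consists of $K+1$ i.i.d.\ random variables, and is therefore exchangeable unconditionally. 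It then suffices to show $(F_0, F_1, \dots, F_K) \disteq (F_0^\star, F_1, \dots, F_K)$. The key is a standard change-of-variables: conditional on $\pi_0^\star = \sigma$, the exchangeability of $E$ together with the independence of $E$ from the $\pi_\ell$'s gives $(E \circ \sigma, \sigma^{-1}\pi_1, \dots, \sigma^{-1}\pi_K) \disteq (E, \pi_1, \dots, \pi_K)$, and averaging over $\sigma$ delivers the required distributional equality after pushing through the map $(e, q_1, \dots, q_K) \mapsto (f(e), f(e \circ q_1), \dots, f(e \circ q_K))$.

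The main obstacle I anticipate is purely bookkeeping: correctly tracking the two distinct layers of randomness --- the inner mosaic permutations $\{P_m^{(r)}\}$ already absorbed into $E$, and the outer permutations $\pi_\ell$ acting on the index set $\{0, 1, \dots, R\}$ --- and invoking the independence between them that Algorithm \ref{alg::adaptive_test_stat} guarantees by construction. No new probabilistic ideas or assumptions beyond those of Theorem \ref{thm::main} are required, since Eq.\ \ref{eq::tilde_eps_exch} is the only input invoked.
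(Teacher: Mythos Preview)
Your proposal is correct and follows essentially the same route as the paper: you invoke the exchangeability of $(\tilde\epsilon^{(0)},\dots,\tilde\epsilon^{(R)})$ from Eq.~\ref{eq::tilde_eps_exch}, reduce to showing that $(F_0,\dots,F_K)$ is exchangeable, and then apply the rank argument from Step~3 of Theorem~\ref{thm::main}. The paper packages the exchangeability step as Lemma~\ref{lem::simple_exch_result}, proving it directly by fixing an arbitrary permutation $\tau$ and left-translating by $P_{\tau(0)}^{-1}$; your auxiliary-$\pi_0^\star$ symmetrization is a minor organizational variant of the same change-of-variables, with the identical algebraic core (left-translate by $\sigma^{-1}$ and use that uniform permutations are invariant under one-sided multiplication).
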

\begin{proof} Recall from Eq. \ref{eq::tilde_eps_exch} that $\{\tilde\epsilon^{(r)}\}_{r=0}^R$ are exchangeable, where $\tilde\epsilon^{(0)} \defeq \hat\epsilon$. 
Also, recall that $p_{\mathrm{adaptive}}$ is defined as follows. Let $\pi_k : \{0, \dots, R\} \to \{0, \dots, R\}$ be uniformly random permutations for $1 \le k \le K$ and $\pi_0$ be the identity mapping. For a deterministic ``meta test-statistic" $f : \R^{(R+1) \times T \times p} \to \R$, define $T_k = f(\tilde\epsilon^{(\pi_k(0))}, \dots, \tilde\epsilon^{(\pi_k(R))})$ for $0 \le k \le K$. Then
\begin{equation*}
    p_{\mathrm{adaptive}} \defeq \frac{1 + \sum_{k=1}^K \I(T_0 \le T_k)}{K+1}.
\end{equation*}
Using the same logic as Step 3 in the proof of Theorem \ref{thm::main}, it suffices to show that $\{T_k\}_{0 \le k \le K}$ are exchangeable. However, since $\{\tilde\epsilon^{(r)}\}_{r=0}^R$ are exchangeable and $\pi_1, \dots, \pi_K$ are uniformly random permutations, $\{T_k\}_{0 \le k \le K}$ are exchangeable as well. (This is a standard statistical argument, although for completeness, we prove this result in Lemma \ref{lem::simple_exch_result}.)
\end{proof}
\endgroup
\addtocounter{theorem}{-1}

\begin{remark}\label{rem::missingdata}
The proof of Theorem \ref{thm::main} requires that the tiles $\{B_m \times G_m\}_{m=1}^M$ are disjoint, but nowhere does it require that they fully partition $[T] \times [p]$, i.e., $\bigcup_{m=1}^M B_m \times G_m$ need not equal $[T] \times [p]$. This suggests a simple strategy to deal with missing data, which is to choose the tiles so that no tile contains any missing data (this means that some observations $Y_{t,j}$ will not be in any tile). Then, we proceed as usual: we define the mosaic residuals as the concatenation of the residuals in each tile,
$$\hat\epsilon = \left(\hat\epsilon_{(1)}, \dots, \hat\epsilon_{(M)} \right) \in \R^{|B_1| \times |G_1|} \times \dots \times \R^{|B_M| \times |G_M|},$$
and the permuted residuals are defined as $\tilde\epsilon^{(r)} = (P_1^{(r)} \hat\epsilon_{(1)}, \dots, P_M^{(r)} \hat\epsilon_{(M)})$, where $P_m^{(r)} \in \{0,1\}^{|B_m| \times |B_m|}$ are uniformly random permutation matrices. Although $\hat\epsilon$ and $\tilde\epsilon^{(r)}$ are no longer rectangular matrices, we can still define the p-value as in Eq. \ref{eq::hatepspval2}. Namely, for any test statistic $S : \prod_{m=1}^M \R^{|B_m| \times |G_m|} \to \R$, the p-value is
\begin{equation}\label{eq::missingdata_pval}
    p_{\val} = \frac{1 + \sum_{r=1}^R \I(S(\hat\epsilon) \le S(\tilde\epsilon^{(r)}))}{R+1}.
\end{equation}
The following corollary proves the validity of this p-value.
\end{remark}

\begin{corollary}\label{cor::missingdata} For each $j \in [p], t \in [T]$, let $A_{t,j} \in \{0,1\}$ be the indicator of whether $Y_{t,j}$ is observed. Suppose the following:
\begin{enumerate}[noitemsep, topsep=0pt]
    \item The tiles $\{B_m, G_m\}_{m=1}^M$ are chosen such that (i) Assumption \ref{assump::batched_exposure} holds and (ii) in each tile, no data are missing. That is, for all $m \in [M], t \in B_m, j \in G_m$, $A_{t,j} = 1$. 
    \item The tiles $\{B_m, G_m\}_{m=1}^M$ are chosen as a function of the missing data pattern $\{A_{t,j}\}_{t \in [T], j\in [p]}$ and an independent uniform $U \sim \Unif(0,1)$. In other words, one cannot use $Y$ to construct the tiles.  
    \item Assumption \ref{assump::localexch} (local exchangeability) holds conditional on the missing data pattern $\{A_{t,j}\}_{t \in [T], j\in [p]}$.
\end{enumerate}
Then if $p_{\val}$ is defined using Eq. \ref{eq::missingdata_pval}, $\P(p_{\val} \le \alpha) \le \alpha$ for any $\alpha \in (0,1)$ under $\mcH_0$.
\begin{proof} 
To prove the result, condition on the missing data pattern $\{A_{t,j}\}_{t \in [T], j\in [p]}$ and the external uniform $U \sim \Unif(0,1)$ which may be used to construct randomized tiles. Conditional on these quantities, the tiles $\{B_m, G_m\}_{m=1}^M$ are deterministic, Assumptions \ref{assump::batched_exposure}-\ref{assump::localexch} (locally constant exposures and local exchangeability) both hold by hypothesis, and no tile contains missing data. In other words, the assumptions of Theorem \ref{thm::main} still hold---the only difference is that now, the union of the tiles $\prod_{m=1}^M B_m \times G_m$ is no longer equal to $[T] \times [p]$. However, by the previous remark, this assumption is not needed in the proof of Theorem \ref{thm::main}. This completes the proof.
\end{proof}
\end{corollary}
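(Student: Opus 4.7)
The plan is to reduce the corollary to Theorem \ref{thm::main} by conditioning on the missing data pattern and the auxiliary randomness used to construct the tiles. The key observation, already flagged in Remark \ref{rem::missingdata}, is that the proof of Theorem \ref{thm::main} nowhere requires $\bigcup_{m=1}^M B_m \times G_m = [T]\times[p]$; only the disjointness of the tiles is used, together with Assumptions \ref{assump::batched_exposure}--\ref{assump::localexch} and the null. So the main task is to verify that every step of the original proof survives intact once we pass to a suitable conditional probability.

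First I would fix a realization of the missing data indicators $\{A_{t,j}\}$ and the external $U \sim \Unif(0,1)$ and work conditionally. By hypothesis (2), the tiles $\{(B_m, G_m)\}_{m=1}^M$ become deterministic, hypothesis (3) hands us local exchangeability of each column of $\epsilon$ in this conditional world, and $\mcH_0$ still asserts cross-column independence. Hypothesis (1) guarantees both that the exposures are constant within each tile and that each tile is a submatrix of $Y$ consisting entirely of observed entries, so the OLS projections $H_m$ and the estimates $\hat\epsilon_{(m)} = \epsilon_{(m)} H_m$ are well-defined exactly as in Eq. \ref{eq::hateps_tile_def}.

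Next I would run Steps 1--3 of the proof of Theorem \ref{thm::main} verbatim in this conditional model. Step 1 gives the distributional equality $(P_1 \epsilon_{(1)}, \dots, P_M \epsilon_{(M)}) \disteq (\epsilon_{(1)}, \dots, \epsilon_{(M)})$ because each column is permuted only within a single tile, which invokes local exchangeability and within-column permutation invariance, and independence across columns then upgrades this to the joint equality. Step 2 extends this to the exchangeability of $(\hat\epsilon, \tilde\epsilon^{(1)}, \dots, \tilde\epsilon^{(R)})$ by applying the deterministic projections $H_m$; this is purely a rewriting step, and it is indifferent to whether the tiles cover $[T]\times[p]$. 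Step 3 then uses exchangeability of $(S(\hat\epsilon), S(\tilde\epsilon^{(1)}), \dots, S(\tilde\epsilon^{(R)}))$ together with conservative tie-breaking to conclude $\P(p_{\val} \le \alpha \mid \{A_{t,j}\}, U) \le \alpha$. Marginalizing over $\{A_{t,j}\}$ and $U$ removes the conditioning and yields the unconditional bound.

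The only subtlety---and the closest thing to an obstacle---is bookkeeping around the test statistic: in the missing-data setting $\hat\epsilon$ and each $\tilde\epsilon^{(r)}$ live in the ragged product space $\prod_{m=1}^M \R^{|B_m|\times |G_m|}$ rather than in $\R^{T\times p}$, so I have to be explicit that $S$ is defined on this space and that the concatenation map sending $(P_1^{(r)} \hat\epsilon_{(1)}, \dots, P_M^{(r)} \hat\epsilon_{(M)})$ to $\tilde\epsilon^{(r)}$ is deterministic. Once that is spelled out, the exchangeability in Step 2 transfers along the concatenation and the rank argument in Step 3 applies unchanged, completing the proof.
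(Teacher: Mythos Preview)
Your proposal is correct and follows essentially the same approach as the paper: condition on the missing data pattern and the auxiliary uniform so that the tiles become deterministic, verify that Assumptions \ref{assump::batched_exposure}--\ref{assump::localexch} hold conditionally, invoke Remark \ref{rem::missingdata} to note that the proof of Theorem \ref{thm::main} does not require the tiles to cover $[T]\times[p]$, and then marginalize. Your version is slightly more explicit in walking through Steps 1--3 of Theorem \ref{thm::main} and in flagging the domain bookkeeping for $S$, but the argument is the same.
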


\subsection{Properties of the approximate Z-statistic}\label{appendix::apprx_Z}

We now show that the approximate Z-statistic in Eq. \ref{eq::apprx_Z} has mean zero and unit variance under the null. This result is a simple consequence of exchangeability, but we state it for completeness.

\begin{lemma}\label{lem::apprx_Z} Let $\hat\epsilon \in \R^{T \times p}$ be mosaic residuals and $\tilde\epsilon^{(1)}, \dots, \tilde\epsilon^{(R)} \in \R^{T \times p}$ be $R$ permuted variants. For any test statistic $S : \R^{T \times p} \to \R$, define
\begin{equation}\label{eq::apprx_Z_appendix}
    Z\apprx = \frac{S(\hat\epsilon) - \bar S}{\sqrt{\frac{1}{R+1} \sum_{r=0}^R (S(\tilde\epsilon^{(r)}) - \bar S)^2}} \text{ for } \bar S = \sum_{r=0}^{R+1} S(\tilde\epsilon^{(r)}),
\end{equation}
where $\tilde\epsilon^{(0)} \defeq \hat \epsilon$. Then under $\mcH_0$ and Assumptions \ref{assump::batched_exposure}-\ref{assump::localexch}, $\E[Z\apprx] = 0$ and $\var(Z\apprx) = 1$.
\begin{proof} As notation, let $Z\apprx^{(r)} = \frac{S(\tilde\epsilon^{(r)}) - \bar S}{\sqrt{\frac{1}{R+1} \sum_{r=0}^R (S(\tilde\epsilon^{(r)}) - \bar S)^2}}$, for $r=0, \dots, R$, so $Z\apprx = Z\apprx^{(0)}$. By construction, 
\begin{equation*}
    \sum_{r=0}^R Z\apprx^{(r)} = 0 \text{ and } \sum_{r=0}^R (Z\apprx^{(r)})^2 = \frac{\sum_{r=0}^R (S(\tilde\epsilon^{(r)}) - \bar S)^2}{\frac{1}{R+1} \sum_{r=0}^R (S(\tilde\epsilon^{(r)}) - \bar S)^2} = R + 1.
\end{equation*}
By exchangeability of $\{\tilde\epsilon^{(r)}\}_{r=0}^R$, we have that $\{Z\apprx^{(r)}\}_{r=0}^R$ are exchangeable as well. Thus, taking expectations, the previous equation implies that
\begin{equation*}
    \sum_{r=0}^R \E[Z\apprx^{(r)}] = (R+1) \E[Z\apprx^{(0)}] = 0 \text{ and } 
    \sum_{r=0}^R \E[(Z\apprx^{(r)})^2] = (R+1) \E[(Z\apprx^{(0)})^2] = R + 1.
\end{equation*}
Dividing by $R+1$ on both sides yields the result.
\end{proof}
\end{lemma}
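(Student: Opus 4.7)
The plan is to reduce the claim to an exchangeability argument built on top of the main result of the paper. Specifically, by Equation~\ref{eq::tilde_eps_exch} in the proof of Theorem~\ref{thm::main}, the tuple $(\hat\epsilon, \tilde\epsilon^{(1)}, \dots, \tilde\epsilon^{(R)})$ is exchangeable under the null and Assumptions~\ref{assump::batched_exposure}--\ref{assump::localexch}. Applying the deterministic function $S$ coordinatewise shows that $\{S(\tilde\epsilon^{(r)})\}_{r=0}^R$ (with $\tilde\epsilon^{(0)} \defeq \hat\epsilon$) is an exchangeable sequence of real-valued random variables.

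Next, I would define, for each $r \in \{0, 1, \dots, R\}$, the quantity
\[
    Z^{(r)} \defeq \frac{S(\tilde\epsilon^{(r)}) - \bar S}{\sqrt{\frac{1}{R+1} \sum_{r'=0}^R (S(\tilde\epsilon^{(r')}) - \bar S)^2}},
\]
so that $Z\apprx = Z^{(0)}$. The key observation is that the transformation $(s_0, \dots, s_R) \mapsto (z_0, \dots, z_R)$ is equivariant under permutations of its arguments: the denominator depends only on the unordered multiset of the $s_r$'s, and the numerator of $z_r$ depends only on $s_r$ and this multiset. Hence $\{Z^{(r)}\}_{r=0}^R$ inherits exchangeability from $\{S(\tilde\epsilon^{(r)})\}_{r=0}^R$, and in particular all $Z^{(r)}$ have the same marginal distribution.

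Finally, I would exploit two deterministic identities that hold on every sample path. By construction of $\bar S$ as the average of the $S(\tilde\epsilon^{(r)})$, the numerators sum to zero, giving $\sum_{r=0}^R Z^{(r)} = 0$. By construction of the denominator, $\sum_{r=0}^R (Z^{(r)})^2 = R+1$. Taking expectations on both identities and using exchangeability to conclude that $\E[Z^{(r)}]$ and $\E[(Z^{(r)})^2]$ are constant in $r$ yields $(R+1)\E[Z\apprx] = 0$ and $(R+1)\E[(Z\apprx)^2] = R+1$, hence $\E[Z\apprx] = 0$ and $\var(Z\apprx) = 1$.

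There is essentially no main obstacle here: the result is a clean consequence of exchangeability combined with the self-normalizing structure of $Z\apprx$. The only point of mild care is verifying the equivariance of the normalizing map so that exchangeability of $\{S(\tilde\epsilon^{(r)})\}_{r=0}^R$ transfers to $\{Z^{(r)}\}_{r=0}^R$; once that is in place, the two deterministic identities do the rest, and no assumption beyond those of Theorem~\ref{thm::main} is needed.
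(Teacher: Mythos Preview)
Your proposal is correct and follows essentially the same argument as the paper: define the standardized versions $Z^{(r)}$, observe that $\{Z^{(r)}\}_{r=0}^R$ inherits exchangeability from $\{\tilde\epsilon^{(r)}\}_{r=0}^R$, and combine the two deterministic identities $\sum_r Z^{(r)} = 0$ and $\sum_r (Z^{(r)})^2 = R+1$ with identical marginals to conclude. The only difference is that you are a bit more explicit about the permutation-equivariance of the normalizing map, which the paper leaves implicit.
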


\subsection{Proof of Lemma \ref{lem::adaptive_tile}}

We now prove a slightly more general version of Lemma \ref{lem::adaptive_tile}. In particular, Lemma \ref{lem::adaptive_tile} assumed for simplicity that for each asset, the true residuals $\{\epsilon_{t,j}\}_{t=1}^T \iid P_j$ were i.i.d. This can be relaxed to a type of local exchangeability---however, we cannot directly assume Assumption \ref{assump::localexch}, because Assumption \ref{assump::localexch} defines local exchangeability with respect to a \textit{fixed} tiling, and in Lemma \ref{lem::adaptive_tile}, the tiling $\{(B_m, G_m)\}_{m=1}^M$ is random. Instead, we must assume the following:

\begin{assumption}\label{assump::localexch_support} Local exchangeability (Assumption \ref{assump::localexch}) holds with respect to any fixed tiling $\{(\beta_m, \gamma_m)\}_{m=1}^M$ in the support of the learned (random) tiling $\{(B_m, G_m\}_{m=1}^M$. I.e., Assumption \ref{assump::localexch} holds for any fixed tiling satisfying $\P(\{(B_m, G_m)\}_{m=1}^M = \{(\beta_m, \gamma_m)\}_{m=1}^M) > 0$.
\end{assumption}

\begin{remark} Assumption \ref{assump::localexch_support} always holds if the residuals $\{\epsilon_{t,j}\}_{t=1}^T \iid P_j$ are i.i.d. for each asset.
\end{remark}

\begin{remark}\label{rem::localexch_support}
In practice, Assumption \ref{assump::localexch_support} is not much stronger than Assumption \ref{assump::localexch}. In particular, there is not much benefit to adaptively choosing the batches $B_1, \dots, B_M \subset [T]$---the main benefit is to choose the groups of assets $G_1, \dots, G_M \subset [p]$ adaptively to ``separate" assets with highly correlated residuals. Thus, in Section \ref{subsec::adpativetiling}, we suggested choosing $B_1, \dots, B_M$ so that each $B_m$ is a member of a prespecified partition $\beta_1, \dots, \beta_{M\opt} \subset [T]$ of $[T]$, where $\beta_1 = \{1, \dots, 10\}, \beta_2 = \{11, \dots, 20\}$, etc. In this case, Assumption \ref{assump::localexch_support} reduces to the regular local exchangeability assumption (Assumption \ref{assump::localexch}) with respect to the tiling $(\{\beta_m, [p]\}_{m=1}^{M\opt}$.     
\end{remark}

Having stated Assumption \ref{assump::localexch_support}, we now state and prove a more general variant of Lemma \ref{lem::adaptive_tile}. As notation, recall that we assume that the $m$th batch $B_m = b_m(\hat\epsilon_{(1)}, \dots, \hat\epsilon_{(m-1)}, U_m)$ and group $G_m = g_m(\hat\epsilon_{(1)}, \dots, \hat\epsilon_{(m-1)}, U_m)$ are (potentially randomized) functions of the residuals from the first $m-1$ tiles, where $U_m \iid \Unif(0,1)$ are independent uniform noise.

\begin{lemma}\label{lem::adaptive_tiling_appendix} Suppose Assumptions \ref{assump::batched_exposure} and \ref{assump::localexch_support} hold. Furthermore, for each $m=1,\dots,M$, $(B_m, G_m)$ are random but do not depend on the order of the observations within each of the first $m-1$ tiles. Formally, we assume that for any permutation matrices $P_j \in \R^{|B_j| \times |B_j|}$, 
\begin{equation*}
    b_m(\hat\epsilon_{(1)}, \dots, \hat\epsilon_{(m-1)}, U_m) 
    =
    b_m(P_1 \hat\epsilon_{(1)}, \dots, P_{m-1} \hat\epsilon_{(m-1)}, U_m),
\end{equation*}
and the same holds when replacing $b_m$ with $g_m$. Then $p_{\mathrm{val}}$ in Eq. \ref{eq::hatepspval2} is a valid p-value testing $\mcH_0$.

\begin{proof} It suffices to show that Equation \ref{eq::mainstepone} holds conditionally on the choice of tiles $\{(B_m, G_m)\}_{m \in [M]}$: 
\begin{equation}\label{eq::cond_local_exch}
    (P_1 \epsilon_{(1)}, \dots, P_M \epsilon_{(M)}) \disteq (\epsilon_{(1)}, \dots, \epsilon_{(M)}) \mid \{(B_m, G_m)\}_{m \in [M]}.
\end{equation}
After showing this, the original proof of Theorem \ref{thm::main} will go through after conditioning on the tiles.

The proof is simple, but the notation is subtle. To ease comprehension, recall that by definition $\epsilon_{(m)} = \epsilon_{B_m, G_m}$ where $B_m$ and $G_m$ are random. In this proof, we will use the notation $\epsilon_{B_m, G_m}$ instead of $\epsilon_{(m)}$ to make the dependence on $B_m$ and $G_m$ explicit. 

Let $\mcT = \{(B_m, G_m)\}_{m \in [M]}$ denote the choice of tiles and let $\tau = \{(\beta_m, \gamma_m)\}_{m \in [M]}$ denote an arbitrary \textit{fixed} tiling in the support of $\mcT$. It suffices to show that for any fixed permutation matrices $P_1 \in \R^{|\beta_1| \times |\beta_1|}, \dots, P_M \in \R^{|\beta_M| \times |\beta_M|}$,
\begin{equation}\label{eq::sufficient_for_lemma}
    (P_1 \epsilon_{\beta_1, \gamma_1}, \dots, P_M \epsilon_{\beta_M, \gamma_M}) \disteq (\epsilon_{\beta_1, \gamma_1}, \dots, \epsilon_{\beta_M, \gamma_M}) \mid \mcT = \tau.
\end{equation}
To show this, we note that Assumption \ref{assump::localexch_support} yields the \textit{marginal} result that
\begin{equation}\label{eq::marg_result}
    (P_1 \epsilon_{\beta_1, \gamma_1}, \dots, P_M \epsilon_{\beta_M, \gamma_M}) \disteq (\epsilon_{\beta_1, \gamma_1}, \dots, \epsilon_{\beta_M, \gamma_M}).
\end{equation}
To convert this to a conditional result, note that since $\mcT$ is a function of $\hat\epsilon$ (which itself is a deterministic function of $\epsilon$), we can thus write $\mcT(\epsilon_{\beta_1, \gamma_1}, \dots, \epsilon_{\beta_M, \gamma_M})$ as some function of $\epsilon_{\beta_1, \gamma_1}, \dots, \epsilon_{\beta_M, \gamma_M}$. Using this fact, we define $\mcTperm \defeq \mcT(P_1 \epsilon_{\beta_1, \gamma_1}, \dots, P_M \epsilon_{\beta_M, \gamma_M})$ to be equal to the tiling we \textit{would} have chosen based on the permuted residuals $(P_1 \epsilon_{\beta_1, \gamma_1}, \dots, P_M \epsilon_{\beta_M, \gamma_M})$. Eq. \ref{eq::marg_result} now directly implies that
\begin{equation}
    \left[(P_1 \epsilon_{\beta_1, \gamma_1}, \dots, P_M \epsilon_{\beta_M, \gamma_M}), \I(\mcTperm = \tau)\right]
    \disteq 
    \left[(\epsilon_{\beta_1, \gamma_1}, \dots, \epsilon_{\beta_M, \gamma_M}), \I(\mcT = \tau)\right].
\end{equation}
However, by assumption, $\mcT$ does not depend on the order of rows within each tile defined by $\{(B_m, G_m)\}_{m \in [M]}$. Thus, whenever $\mcT = \tau$, we have that $\{(\beta_m, \gamma_m)\}_{m \in [M]} = \{(B_m, G_m)\}_{m \in [M]}$ and thus $\mcT = \mcTperm$, since $\mcTperm$ is just the value of $\mcT$ after permuting the rows of each tile defined by $\{(\beta_m, \gamma_m)\}_{m \in [M]}$. Thus, $\mcT = \tau$ if and only if $\mcTperm = \tau$, so $\I(\mcT = \tau) = \I(\mcTperm = \tau)$. Combining this with the previous result yields that
\begin{equation}
    \left[(P_1 \epsilon_{\beta_1, \gamma_1}, \dots, P_M \epsilon_{\beta_M, \gamma_M}), \I(\mcT = \tau)\right]
    \disteq 
    \left[(\epsilon_{\beta_1, \gamma_1}, \dots, \epsilon_{\beta_M, \gamma_M}), \I(\mcT = \tau)\right].
\end{equation}
This immediately implies that Eq. \ref{eq::sufficient_for_lemma} holds, since if $(X,Z) \disteq (Y,Z)$ for any random variables $(X,Y,Z)$, then the conditional distributions $X \mid Z \disteq Y \mid Z$ must be equal as well. This concludes the proof.
\end{proof}
\end{lemma}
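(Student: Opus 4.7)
The plan is to reduce the adaptive-tiling setting to the fixed-tiling setting of Theorem \ref{thm::main} by conditioning on the realized adaptive tiling $\mcT = \{(B_m, G_m)\}_{m=1}^M$. Specifically, the goal is to show that conditional on $\mcT = \tau$ for any fixed $\tau$ in the support, the residuals satisfy the same within-tile exchangeability that drove the proof of Theorem \ref{thm::main}. Once this conditional invariance is in hand, the construction of $\hat\epsilon$ and $\tilde\epsilon^{(r)}$, the resulting exchangeability of $(\hat\epsilon, \tilde\epsilon^{(1)}, \dots, \tilde\epsilon^{(R)})$, and the validity of the rank-based p-value all transfer verbatim from Step 3 of the proof of Theorem \ref{thm::main}. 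Conditional validity given $\mcT$ then implies marginal validity by integration.

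The technical core is to establish, for any fixed $\tau = \{(\beta_m, \gamma_m)\}_{m=1}^M$ in the support of $\mcT$ and any compatible within-tile permutation matrices $P_m \in \R^{|\beta_m| \times |\beta_m|}$, the conditional equality
\begin{equation*}
(P_1 \epsilon_{\beta_1, \gamma_1}, \dots, P_M \epsilon_{\beta_M, \gamma_M}) \disteq (\epsilon_{\beta_1, \gamma_1}, \dots, \epsilon_{\beta_M, \gamma_M}) \,\big|\, \mcT = \tau.
\end{equation*}
Assumption \ref{assump::localexch_support} directly supplies the \emph{marginal} version of this equality, since it demands local exchangeability with respect to every fixed tiling in the support of $\mcT$. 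To upgrade the marginal statement to a conditional one, I would introduce an auxiliary random variable $\mcTperm$, defined as the tiling that the functions $(b_m, g_m)$ would output if fed the permuted residuals in place of the originals. The invariance hypothesis---that $b_m$ and $g_m$ ignore within-tile row orderings---guarantees that on the event $\{\mcT = \tau\}$, the permutations $P_m$ act within the tiles of $\tau$ and therefore $\mcTperm = \mcT = \tau$. Combining this identity with the marginal distributional equality promotes it to the joint equality of $\left((P_1\epsilon_{\beta_1,\gamma_1}, \dots, P_M\epsilon_{\beta_M,\gamma_M}), \I(\mcT=\tau)\right)$ and $\left((\epsilon_{\beta_1,\gamma_1}, \dots, \epsilon_{\beta_M,\gamma_M}), \I(\mcT=\tau)\right)$, from which the desired conditional equality follows.

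The main obstacle is handling the double role played by $\epsilon$: the same residuals both generate the adaptive tiling $\mcT$ and are the object whose within-tile exchangeability we require conditional on $\mcT$. Naive conditioning introduces spurious dependence between the tiling and the within-tile row orderings, and the invariance hypothesis on $(b_m, g_m)$ is precisely the lever needed to break this circularity. Keeping careful notational separation between $\mcT$ (evaluated on the original residuals) and $\mcTperm$ (evaluated on the permuted ones), and then arguing they coincide on the conditioning event, is the cleanest way to avoid confusion. Beyond this subtlety, the remainder of the argument is bookkeeping: once conditional within-tile exchangeability is secured, applying $H_m$ to each tile, sampling the mosaic permutations, and ranking $S(\hat\epsilon)$ against its permuted counterparts proceed exactly as in the proof of Theorem \ref{thm::main}, yielding $\P(p_{\mathrm{val}} \le \alpha \mid \mcT) \le \alpha$ and thus $\P(p_{\mathrm{val}} \le \alpha) \le \alpha$.
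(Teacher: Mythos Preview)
Your proposal is correct and follows essentially the same route as the paper's own proof: condition on the realized tiling $\mcT=\tau$, use Assumption \ref{assump::localexch_support} for the marginal within-tile exchangeability, introduce the auxiliary $\mcTperm$ and use the permutation-invariance of $(b_m,g_m)$ to show $\mcTperm=\mcT$ on $\{\mcT=\tau\}$, then pass from the resulting joint distributional equality to the conditional one and invoke the remainder of Theorem \ref{thm::main}'s argument. The organization, the key auxiliary object, and the logical flow all match the paper's proof.
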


\subsection{Technical proofs}

\begin{lemma}\label{lem::simple_exch_result} Suppose $Z = (Z_1, \dots, Z_n)$ are exchangeable random variables. Let $\pi_1, \dots, \pi_K : [n] \to [n]$ be random permutations, sampled uniformly at random. Let $Y_k = (Z_{\pi_k(1)}, \dots, Z_{\pi_k(n)})$ for $k \in [K]$ and set $Y_0 = (Z_1, \dots, Z_n)$. Then $(Y_0, Y_1, \dots, Y_K)$ are exchangeable.
\begin{proof} Fix any permutation $\tau : \{0, \dots, K\} \to \{0, \dots, K\}$. It suffices to show that
\begin{equation*}
    (Y_0, Y_1, \dots, Y_K) \disteq (Y_{\tau(0)}, Y_{\tau(1)}, \dots, Y_{\tau(K)}).
\end{equation*}
As notation, let $P_k$ be the permutation matrix such that $P_k Z = Y_k$, for $k \in \{0, \dots, K\}$. Let $J \defeq P_{\tau(0)}^{-1}$. The first step is to observe that
\begin{equation*}
    (Y_{\tau(0)}, Y_{\tau(1)}, \dots, Y_{\tau(K)}) \defeq (P_{\tau(0)} Z, \dots, P_{\tau(K)} Z) \disteq (P_{\tau(0)} J Z, \dots, P_{\tau(K)} J Z),
\end{equation*}
where the distributional equality holds conditional on $\{P_k\}_{k \in [K]}$ since $Z \disteq JZ$ is exchangeable. The second step is to observe that by construction, $P_{\tau(0)} J = P_0$ is the identity permutation. Since $\{P_k\}_{k \in [K]}$ are uniformly random permutations, we know
\begin{equation}\label{eq::permutation_fact}
    (P_{\tau(0)} J, \dots, P_{\tau(K)} J) \disteq (P_0, \dots, P_K).
\end{equation}
Combining the two previous results yields
\begin{equation*}
    (Y_{\tau(0)}, \dots, Y_{\tau(K)}) \disteq (P_{\tau(0)} J Z, \dots, P_{\tau(K)} J Z) \disteq (P_{0} Z, \dots, P_{K} Z) = (Y_0, \dots, Y_K)
\end{equation*}
where the second-to-last equality holds conditional on $Z$.
\end{proof}
    
\end{lemma}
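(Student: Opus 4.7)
My plan is to verify exchangeability directly: fix an arbitrary permutation $\tau : \{0, \dots, K\} \to \{0, \dots, K\}$ and show $(Y_{\tau(0)}, \dots, Y_{\tau(K)}) \disteq (Y_0, \dots, Y_K)$. I would write $Y_k = P_k Z$, where $P_0 = I$ and $P_k$ is the permutation matrix associated to $\pi_k$ for $k \ge 1$, so that $P_1, \dots, P_K$ are i.i.d.\ uniform on the symmetric group and independent of $Z$ (independence being implicit from the statement).

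The first key step is to use exchangeability of $Z$ to ``re-center'' the permuted tuple so that its first coordinate becomes $Y_0$. Set $J = P_{\tau(0)}^{-1}$. Since $J$ is a function of $\{P_k\}$ and hence independent of $Z$, exchangeability of $Z$ gives $JZ \disteq Z$ jointly with $\{P_k\}$. Substituting yields
\[
(Y_{\tau(0)}, \dots, Y_{\tau(K)}) = (P_{\tau(0)} Z, \dots, P_{\tau(K)} Z) \disteq (P_{\tau(0)} J Z, \dots, P_{\tau(K)} J Z),
\]
and by construction $P_{\tau(0)} J = I$, so the first coordinate on the right is $Z = Y_0$.

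The remaining step is the purely group-theoretic identity $(I, P_{\tau(1)} J, \dots, P_{\tau(K)} J) \disteq (P_0, P_1, \dots, P_K)$. I would establish this by showing that the map $\phi : (P_1, \dots, P_K) \mapsto (P_{\tau(k)} P_{\tau(0)}^{-1})_{k=1}^K$ (with convention $P_0 = I$) is a bijection of the finite set $S_n^K$: locating the unique $k_0 \in \{1, \dots, K\}$ with $\tau(k_0) = 0$ recovers $P_{\tau(0)} = Q_{k_0}^{-1}$, and each other $Q_k \cdot P_{\tau(0)}$ recovers $P_{\tau(k)}$. Since $\phi$ is a bijection on a finite set, it preserves the uniform distribution, and combining this with the re-centering step (using independence from $Z$) closes the argument.

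The main obstacle I anticipate is just notational bookkeeping, particularly distinguishing the trivial case $\tau(0) = 0$ (where $J = I$ and the result is immediate from iid-ness of $P_1, \dots, P_K$) from $\tau(0) \ge 1$. The substantive probabilistic content is minimal: the only distributional input is a single application of exchangeability of $Z$, and the rest reduces to the elementary fact that right-multiplication by a fixed permutation is a bijection on the symmetric group.
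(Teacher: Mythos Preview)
Your proposal is correct and follows essentially the same approach as the paper: both fix $\tau$, set $J = P_{\tau(0)}^{-1}$, use exchangeability of $Z$ to replace $Z$ by $JZ$ (conditionally on the $P_k$), and then reduce to the distributional identity $(P_{\tau(0)}J,\dots,P_{\tau(K)}J)\disteq(P_0,\dots,P_K)$. The only difference is that you spell out the bijection argument on $S_n^K$ for this last identity, whereas the paper simply asserts it as a consequence of the $P_k$ being uniformly random.
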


\begin{lemma}\label{lem::technical_exch_result} Using the notation and assumptions from Theorem \ref{thm::main}, the following holds: 
\begin{equation*}
    \left[(P_1^{(r)} \epsilon_{(1)}, \dots, P_M^{(r)} \epsilon_{(M)})\right]_{r=0}^R \text{ are exchangeable.}
\end{equation*}
\begin{proof} To show this, we will show that for any fixed permutation $\pi : \{0, \dots, R\} \to \{0, \dots, R\}$,
\begin{equation}
    \left[(P_1^{(r)} \epsilon_{(1)}, \dots, P_M^{(r)} \epsilon_{(M)})\right]_{r=0}^R \disteq 
    \left[(P_1^{(\pi(r))} \epsilon_{(1)}, \dots, P_M^{(\pi(r))} \epsilon_{(M)})\right]_{r=0}^R 
\end{equation} 

We will show the result in three steps.

\begin{remark} Our proof essentially follows the proof of Lemma \ref{lem::simple_exch_result}, except we apply this argument simultaneously to each tile $m \in [M]$.
\end{remark}

\underline{Step 1}: Let $\Pi_m$ denote the (random) inverse of $P_m^{\pi(0)}$. Equation \ref{eq::mainstepone} (from Step 1 of the proof of Theorem \ref{thm::main}) implies that
\begin{equation}\label{eq::substep1}
    (\epsilon_{(1)}, \dots, \epsilon_{(M)})
    \disteq 
    (\Pi_1 \epsilon_{(1)}, \dots,  \Pi_M \epsilon_{(M)}),
\end{equation}
where in particular, this holds conditional on $\Pi_1, \dots, \Pi_M$ since these are just permutation matrices, and thus it holds unconditionally as well. 

\underline{Step 2}: The previous observation implies that
\begin{equation}\label{eq::substep2}
    \left[(P_1^{(\pi(r))} \epsilon_{(1)}, \dots, P_M^{(\pi(r))} \epsilon_{(M)})\right]_{r=0}^R
    \disteq 
    \left[(P_1^{(\pi(r))} \Pi_1 \epsilon_{(1)}, \dots, P_M^{(\pi(r))} \Pi_M \epsilon_{(M)})\right]_{r=0}^R
\end{equation}
where the above equation holds conditional on all of the random permutation matrices $\{P_m^{(r)}\}_{m \in [M], r \in [R]}$; in particular, it follows by applying identical permutation matrices to both sides of Eq. \ref{eq::substep1}. 

\underline{Step 3}: Third, we observe that
\begin{equation}\label{eq::substep3}
    \left[(P_1^{(r)}, \dots, P_M^{(r)})\right]_{r=0}^R
    \disteq 
    \left[(P_1^{(\pi(r))} \Pi_1, \dots, P_M^{(\pi(r))} \Pi_M)\right]_{r=0}^R
\end{equation}
To see this, it suffices to show $[P_m^{(r)}]_{r=0}^R \disteq [P_m^{(\pi(r))} \Pi_m]_{r=0}^R$ holds for a single fixed $m$, since the randomness for each $m \in [M]$ is completely independent. However, this is exactly the content of Eq. \ref{eq::permutation_fact}.

Combining steps two and three, we conclude:
\begin{align*}
        \left[(P_1^{(\pi(r))} \epsilon_{(1)}, \dots, P_M^{(\pi(r))} \epsilon_{(M)})\right]_{r=0}^R
    &\disteq 
        \left[(P_1^{(\pi(r))} \Pi_1 \epsilon_{(1)}, \dots, P_M^{(\pi(r))} \Pi_M \epsilon_{(M)})\right]_{r=0}^R \\
    &\disteq 
        \left[(P_1^{(r)} \epsilon_{(1)}, \dots, P_M^{(r)}\epsilon_{(M)})\right]_{r=0}^R
\end{align*}
where in particular, the first line is a restatement of Step 2, and Step 3 proves that the second equality holds conditional on $\epsilon$. This completes the proof. 

\end{proof}
\end{lemma}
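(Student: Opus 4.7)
The plan is to reduce exchangeability to a single distributional identity: for every fixed permutation $\pi : \{0,\dots,R\} \to \{0,\dots,R\}$, it suffices to show
\begin{equation*}
  \bigl[(P_1^{(r)} \epsilon_{(1)}, \dots, P_M^{(r)} \epsilon_{(M)})\bigr]_{r=0}^R
  \;\disteq\;
  \bigl[(P_1^{(\pi(r))} \epsilon_{(1)}, \dots, P_M^{(\pi(r))} \epsilon_{(M)})\bigr]_{r=0}^R.
\end{equation*}
The main difficulty is that $P_m^{(0)}$ is the deterministic identity while $P_m^{(1)},\dots,P_m^{(R)}$ are uniformly random; so a reshuffling via $\pi$ does not obviously preserve the joint law of the $P$'s. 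My plan is to ``absorb'' this asymmetry into the randomness of $\epsilon$ using Equation~\ref{eq::mainstepone} (which was established in Step~1 of the proof of Theorem~\ref{thm::main} and relies on the null plus local exchangeability).

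Concretely, for each $m \in [M]$, I will set $\Pi_m \defeq (P_m^{(\pi(0))})^{-1}$, which is a random permutation matrix independent of $\epsilon$. Equation~\ref{eq::mainstepone} applied conditionally on $\Pi_1,\dots,\Pi_M$ (a valid conditioning since $\epsilon$ is independent of all the $P_m^{(r)}$'s) yields
\begin{equation*}
  (\epsilon_{(1)}, \dots, \epsilon_{(M)}) \;\disteq\; (\Pi_1 \epsilon_{(1)}, \dots, \Pi_M \epsilon_{(M)}).
\end{equation*}
Left-multiplying coordinatewise by the random matrices $\{P_m^{(\pi(r))}\}_{m,r}$, which I may condition on because they are independent of $\epsilon$, gives
\begin{equation*}
  \bigl[(P_m^{(\pi(r))} \epsilon_{(m)})_{m=1}^M\bigr]_{r=0}^R
  \;\disteq\;
  \bigl[(P_m^{(\pi(r))} \Pi_m \epsilon_{(m)})_{m=1}^M\bigr]_{r=0}^R.
\end{equation*}

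Finally, I will show the bookkeeping identity
\begin{equation*}
  \bigl[(P_m^{(\pi(r))} \Pi_m)_{m=1}^M\bigr]_{r=0}^R
  \;\disteq\;
  \bigl[(P_m^{(r)})_{m=1}^M\bigr]_{r=0}^R,
\end{equation*}
which holds because: (i) by construction $P_m^{(\pi(0))} \Pi_m = I = P_m^{(0)}$, matching the deterministic entry at $r=0$; (ii) for each fixed $m$, the remaining entries $\{P_m^{(\pi(r))} \Pi_m\}_{r=1}^R$ are, conditional on $\Pi_m$, uniformly random permutations (by the left-invariance of the uniform distribution on the symmetric group), matching the marginal law of $\{P_m^{(r)}\}_{r=1}^R$; and (iii) across different $m$, the randomness is independent, so the joint law is preserved as well. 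This is essentially the same left-multiplication trick already used inside Lemma~\ref{lem::simple_exch_result}, just applied separately in each coordinate $m$. Combining the two displays above then yields the desired distributional identity, completing the proof. The main conceptual obstacle is the asymmetric role of $r=0$ (deterministic) versus $r \geq 1$ (uniform), and the absorption via $\Pi_m$ is the cleanest way I see to dissolve this asymmetry without making ad hoc case distinctions.
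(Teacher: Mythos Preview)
Your proposal is correct and mirrors the paper's proof almost exactly: you define $\Pi_m = (P_m^{(\pi(0))})^{-1}$, use Equation~\ref{eq::mainstepone} to absorb $\Pi_m$ into $\epsilon$, and then invoke the same group-invariance fact recorded in the paper as Eq.~\ref{eq::permutation_fact} (from Lemma~\ref{lem::simple_exch_result}). One small quibble on your justification (ii): you need right-invariance (since $\Pi_m$ multiplies on the right), and when $\pi(0)\neq 0$ the entry at the index $r_0$ with $\pi(r_0)=0$ equals $\Pi_m$ and is therefore deterministic \emph{conditional} on $\Pi_m$---but the unconditional joint law is still i.i.d.\ uniform, so the conclusion stands.
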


\section{Additional details for the empirical analysis}\label{appendix::realdata}

\subsection{Methodological details for the empirical application}\label{subsec::addnmethoddetails}

In this section, we note a few additional details of our empirical analysis.

\underline{Duplicate assets}: Our analyses include all relevant assets from the BFRE model, with two exceptions:
\begin{itemize}[noitemsep, topsep=0pt, leftmargin=*]
    \item We remove exact and near-exact duplicates. I.e., in the analysis of the tech sector, we exclude ``Alphabet class C" from the analysis since ``Alphabet class A" also appears in the dataset.
    \item We exclude partner and subsidiary corporations, e.g., Valero Midstream Partners and Phillips 66 Partners. The parent corporations (e.g., Valero Energy and Phillips 66) are included in our analysis. Note that this preprocessing step only affects the financial and midstream Energy sectors (labeled FIN and EGYOGINT in the BFRE model).\footnote{Although corporations whose names contain the word ``partner" appear in other sectors---for example, ``Surgery Partners" appears in the healthcare sector---we do not exclude such assets since they are not true subsidiary/partner corporations.} 
\end{itemize}

\underline{Nonexistent assets}: There is almost no missing data in our dataset, with one exception: some assets did not exist for parts of our analysis. For example, ``Oak Street Health" was not publicly traded until mid-2020. Our analysis accounts for this by (1) using all available assets to compute maximally precise mosaic/OLS residuals but (2) only computing test statistics based on assets that existed for the entire analysis:

\begin{itemize}[itemsep=0.5pt, topsep=0pt, leftmargin=*]
    \item \textit{Computing residuals}: At time $t$, we include asset $j$ in the cross-sectional regression(s) used to compute mosaic or OLS residuals if and only if asset $j$'s returns are available for the entire week containing time $t$. This guarantees that under Assumption \ref{assump::localexch}, the residuals within each week are locally exchangeable, because they are computed using the same cross-sectional regressions for the entire week. Equivalently, it ensures that no tile contains any missing data. 
    \item \textit{Computing test statistics}: We compute test statistics based only on assets that are available for the entire duration of our analysis. We make this choice since (1) it avoids imputing the values of any missing residuals and (2) it prevents the test statistic from becoming too noisy if an asset has only been available for a small period. On average, at any time $t$, this means that our test statistic is based only on $\approx 58\%$ of the available assets. Thus, our results should only be interpreted as results about this subset of assets. The good news is that the largest and most economically significant assets (e.g. Google, Exxon Mobil, etc) have no missing data and are always included in the test statistic.
    
    The exception is that Figures \ref{fig::r2_plot} and \ref{fig::r2_othersectors} (i) span a slightly longer time frame (since they use five years of training data) and (ii) use test statistics that adaptively search over sparse subsets of assets. Thus, for these plots only, we include all assets in the analysis (imputing any missing residuals as zero) since in principle, the test statistic can simply ``learn" to ignore assets with many missing values if that leads to high power. However, as discussed below, we emphasize that this approach yields valid p-values even if the zero imputation is arbitrarily inaccurate. 
    
\end{itemize}  
Corollary \ref{cor::missingdata} shows that as long as no tile contains any missing data, any choice of test statistic yields a valid p-value under the assumption that each asset's residuals are locally exchangeable (Assumption \ref{assump::localexch}) conditional on the missing data pattern. Intuitively, this is because Theorem \ref{thm::main} follows from the following result (see Step 3 of its proof):
\begin{equation}
    \left\{\left(\tilde\epsilon_{(1)}^{(r)}, \dots, \tilde\epsilon_{(M)}^{(r)}\right)\right\}_{r=0}^R \text{ are exchangeable,}
\end{equation}
where $\tilde\epsilon_{(m)}^{(r)}$ denotes the $m$th tile of the $r$th permuted residual matrix, and by convention $\tilde\epsilon^{(0)} = \hat \epsilon$ denotes the original (unpermuted) mosaic residual matrix. Even when some data are missing, the method above guarantees that no tile contains any missing data; thus, Theorem \ref{thm::main} applies. Please see the proof of Corollary \ref{cor::missingdata} for more details.

\subsection{Sensitivity to the window size}\label{appendix::windowsize}

Throughout the main text, we computed test statistics over time using a sliding window of size $350$ observations. Although this is a somewhat arbitrary choice, varying the window size does not substantially change the results. In particular, Figure \ref{fig::window_sensitivity} repeats the analysis from Figure \ref{fig::motivation_soln} and shows that the shapes of the curves plotting the mosaic test statistics and the mosaic permutation quantiles do not significantly change across different window sizes.

\begin{figure}
    \includegraphics[width=\linewidth]{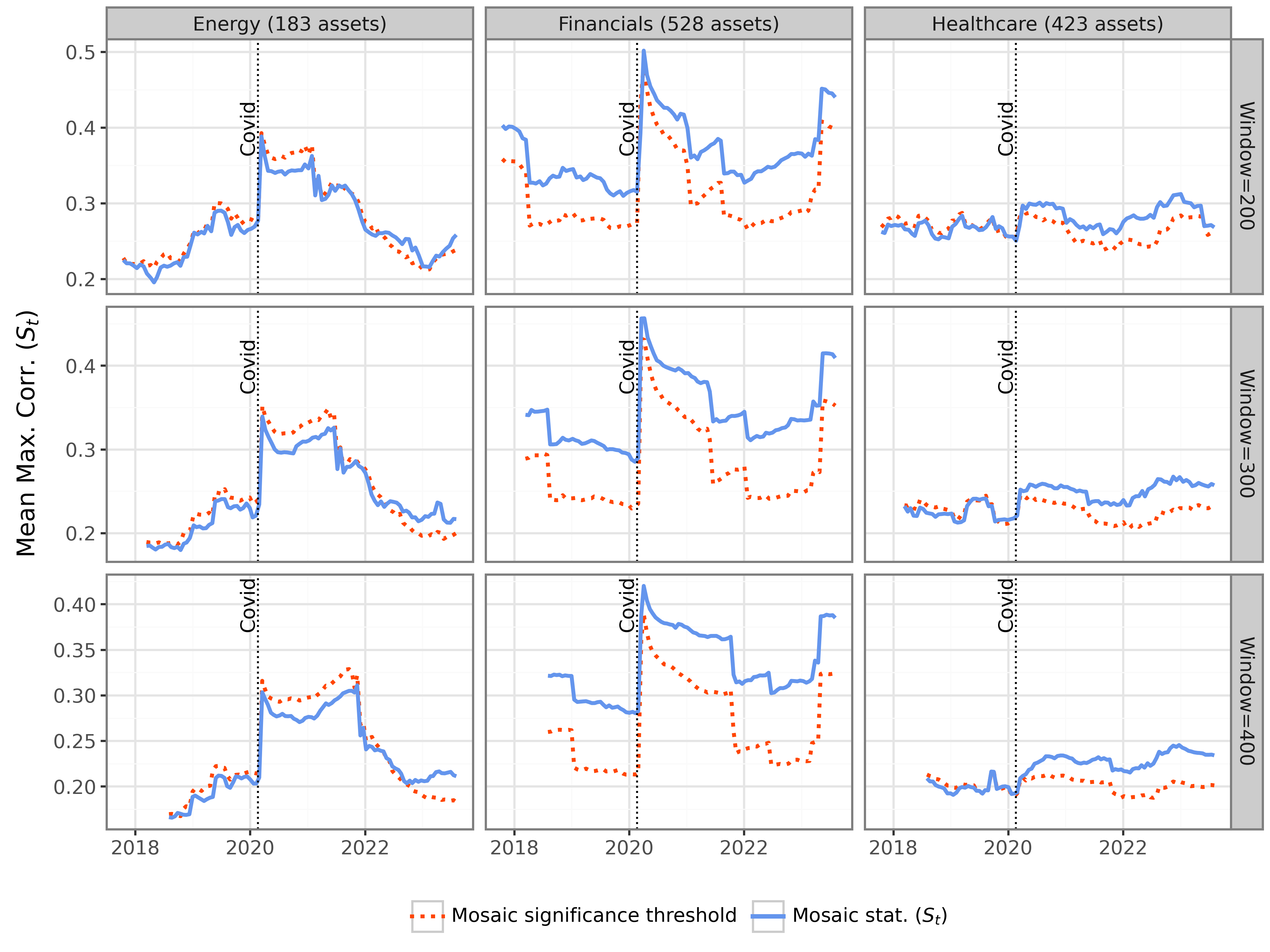}
    \caption{This figure exactly replicates the analysis from Figure \ref{fig::motivation_soln}, except that Figure \ref{fig::motivation_soln} uses a window size of $350$ and this plot varies the window size. In particular, for three industries, it shows the mosaic MMC statistic (plotted every two weeks) and its permutation quantile computed using different sliding window sizes. Note that the shapes of the curves and the relative values of the statistics and their permutation quantiles do not change substantially with different window sizes.}\label{fig::window_sensitivity}
\end{figure}

\subsection{Results for additional sectors}\label{appendix::othersectors}

We now show results for six additional sectors beyond the three from the main text: Consumer Discretionary, Consumer Staples, Industrials, Materials, Tech, and Utilities. In short, we find similar results to Section \ref{sec::realdata}: either (i) we do not consistently reject the null or (ii) we can detect violations of the null $\mcH_0$, but the effect size is too small for us to improve the model consistently. The exception is the consumer discretionary sector, where we can persistently improve the model (albeit by a small degree).

First, Figure \ref{fig::mmc_othersectors} replicates the analysis from Figure \ref{fig::motivation_soln} but for these additional sectors---in particular, it shows the mosaic MMC statistic computed in a sliding window of $350$ observations for each sector as well as the $95\%$ quantile of the permutation distribution. We do not consistently reject the null in the utilities, materials, and consumer staples sectors, although generally we reject the null in the materials sector before COVID and after $2022$. Of course, these sectors do not have many assets, so it is possible this is due to a lack of power. In the consumer discretionary, industrial, and tech sectors, we consistently reject the null. 

Figure \ref{fig::mmc_othersectors} also shows results for two subsectors---real estate (a subsector of finance) and software (a subsector of tech)---of comparable size to the energy sector. We find uniformly significant evidence against the null in real estate, and we find significant evidence against the null in software post-COVID. This analysis demonstrates that our results in the energy sector (in Section \ref{sec::realdata}) are not explained by the relatively small size of the energy sector. 
 
Next, Figure \ref{fig::r2_othersectors} replicates the analysis from Figure \ref{fig::r2_plot} for the six additional industries. I.e., it shows the value of the maximum bi-cross validation $R^2$ test statistic as described in Section \ref{subsec::improvement} and the significance of this test statistic (computed using the mosaic permutation test). As in Section \ref{subsec::improvement}, in all sectors except one, we find that the test statistic is not consistently positive even when it is statistically significant. In other words, we may have power to detect violations of $\mcH_0$, but the effect size is sufficiently small that we cannot consistently improve the model (see Section \ref{subsec::improvement} for more discussion of this phenomenon, which is predicted by several high-dimensional asymptotic theories). The exception is the consumer discretionary sector, where the maximum bi-cross $R^2$ is nearly always positive. In contrast, if we perform this analysis after removing the style factors (as shown by Figure \ref{fig::r2_othersectors}), the p-values become uniformly highly significant, and the maximum bi-cross validation test statistics become larger. Overall, this supports the main finding from the main text, which is that the BFRE model does not fit perfectly but nonetheless explains the most significant correlations among asset returns.

\begin{figure}
    \includegraphics[width=\linewidth]{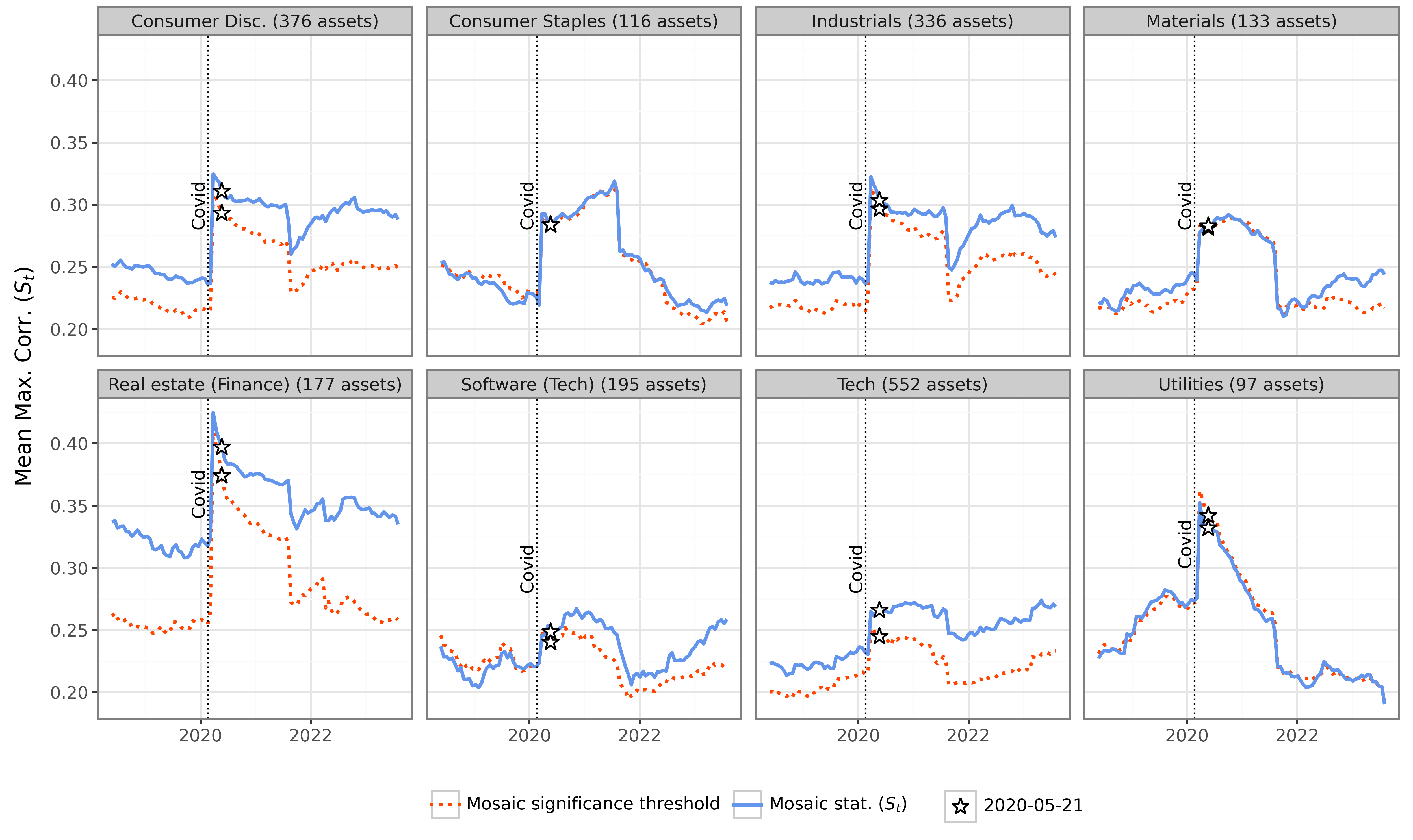}
    \caption{This figure replicates the analysis from Figure \ref{fig::motivation_soln} but for six additional industries, plus two subindustries (real estate and software) of comparable size to the energy sector. For each industry, it shows the mosaic MMC statistic computed in a sliding window of $350$ observations as well as the significance threshold at $\alpha=0.05$.}\label{fig::mmc_othersectors}
\end{figure}

\begin{figure}
    \includegraphics[width=\linewidth]{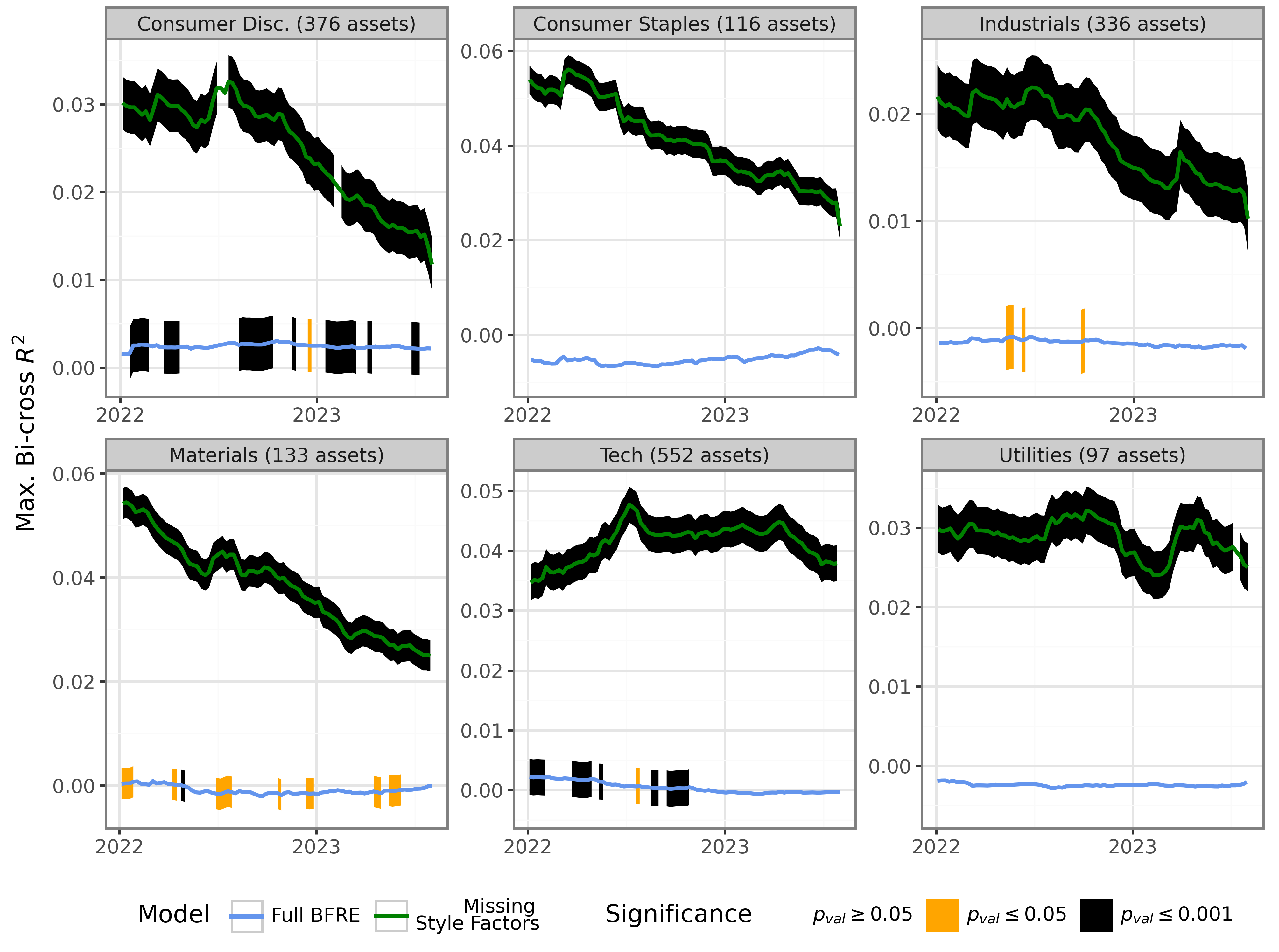}
    \caption{This figure replicates the analysis from Figure \ref{fig::r2_plot} for six additional industries, both for the full BFRE model and for an ablation study that removes the style factors from the BFRE model before performing the analysis. The shadings around each curve denote statistical significance.}\label{fig::r2_othersectors}
\end{figure}

\newpage

\section{Additional methodological details}

\subsection{A greedy algorithm to adaptively choose good tiles}\label{appendix::greedy_adaptive_tiles}

In Section \ref{subsec::adpativetiling}, we introduced a general framework that allows the analyst to adaptively choose the tiling. One method we suggest involves approximately solving the following optimization problem:
\begin{equation}\label{eq::estg_appendix}
    G_{1}, \dots, G_{D} = \argmax_{G_{1}, \dots, G_{D} \text{ partition } [p]} \sum_{j,j' \in [p]} |\hat \Sigma_{j,j'}| \cdot \I(j, j' \text{ are in different groups}).
\end{equation}
Although it is unclear how to exactly solve this optimization problem, Algorithm \ref{alg::greedy_tiles} details an approximate greedy algorithm to solve Eq. \ref{eq::estg_appendix}. In a nutshell, Algorithm \ref{alg::greedy_tiles} randomly initializes $G_1, \dots, G_D$ to each contain one unique asset, and then iteratively adds each unassigned asset to the group which is minimally correlated with that asset. This is essentially a hierarchical ``anti-clustering," since the goal is to \textit{separate} highly correlated assets. Note that although this algorithm may not exactly maximize the objective in Eq. \ref{eq::estg_appendix}, the p-value from Eq. \ref{eq::epspvalv2} will still be valid.

\begin{algorithm}[!h]
\caption{Greedy algorithm to approximately solve Eq. \ref{eq::estg_appendix}}\label{alg::greedy_tiles}
\textbf{Inputs}: Estimated covariance matrix $\hat \Sigma \in \R^{p \times p}$, partition size $D \le p$.
\begin{steps}[topsep=0pt, itemsep=0.5pt, leftmargin=*]
    \item Initialize $G_1 = \{j_1\}, G_2 = \{j_2\}, \dots, G_D = \{j_D\}$ for unique randomly chosen assets $j_1, \dots, j_D \in [p]$.
    \item While $G_1 \cup \dots \cup G_D \ne [p]$ do the following:
    \begin{itemize}[leftmargin=*]
        \item Randomly sample an element $j\opt \in  [p] \setminus (G_1 \cup \dots \cup G_D)$.
        \item Let $d\opt = \argmin_{d \in [D]} \max_{j \in G_d} |\hat\Sigma_{j\opt,j}|$ denote the index of the group $G_d$ which minimizes the maximum absolute estimated correlation between asset $j\opt$ and any asset $j \in G_d$.
        \item Reset $G_{d\opt} = G_{d\opt} \cup \{j\opt\}.$
    \end{itemize}
\end{steps}
\textbf{Return}: Partition $G_1, \dots, G_D$.
\end{algorithm}

\subsection{Details for the sparse PCA algorithm}\label{appendix::greedy_sparse_pca}

We now detail the greedy algorithm we used in Section \ref{subsec::improvement} to approximately solve the sparse PCA problem: 
\begin{equation}\label{eq::sparse_pca_appendix}
    \hat v \approx \max_{\|v\|_2 = 1} v^T \hat C v \text{ s.t. } \|v\|_0 \le \ell.
\end{equation}
In particular, let $\hmc_j \defeq \max_{j' \ne j} |\hat C_{j,j'}|$ be the maximum absolute estimated correlation between asset $j$ and asset $j'$. Let $S \subset [p]$ be the subset of indices of $[p]$ corresponding to the assets with the $\ell$ largest values of $\hmc_j$, so $|S| = \ell$. Finally, let $\hat \nu$ denote the top eigenvector of $\hat C_{S,S}$. We then define $\hat v$ as follows:
\begin{equation*}
    \hat v_j = 
    \begin{cases}
        \hat \nu_j & j \in S \\
        0 & \text{ else.}
    \end{cases}
\end{equation*}
In other words, $S$ is the support of $\hat v$, and on $S$, $\hat v$ equals the maximum eigenvalue of $\hat C_{S,S}$. We picked this algorithm because it is conceptually simple and computationally cheap, but we have not explored other algorithms. Of course, the mosaic permutation test could be used in combination with a maximum bi-cross validation $R^2$ statistic based on any sparse PCA algorithm.

\end{document}